\numberwithin{equation}{section}
\newtheorem{thm}{Theorem}[section]
\newtheorem{prop}[thm]{Proposition}
\newtheorem{lem}[thm]{Lemma}
\newtheorem{cor}[thm]{Corollary}
\newtheorem{defn}[thm]{Definition}
\newtheorem{conj}[thm]{Conjecture}
\newcommand{\be}{\begin{equation}}
\newcommand{\ee}{\end{equation}}
\newcommand{\bib}{\bibitem}
\newcommand{\nc}{\newcommand}
\nc{\Cbb}{\mathbb{C}}
\nc{\Fbb}{\mathbb{F}}
\nc{\Nbb}{\mathbb{N}}
\nc{\Rbb}{\mathbb{R}}
\nc{\Zbb}{\mathbb{Z}}
\nc{\Ac}{\mathcal{A}}
\nc{\Bc}{\mathcal{B}}
\nc{\Cc}{\mathcal{C}}
\nc{\Dc}{\mathcal{D}}
\nc{\Fc}{\mathcal{F}}
\nc{\Hc}{\mathcal{H}}
\nc{\Jc}{\mathcal{J}}
\nc{\Lc}{\mathcal{L}}
\nc{\Mc}{\mathcal{M}}
\nc{\Oc}{\mathcal{O}}
\nc{\Rc}{\mathcal{R}}
\nc{\Sc}{\mathcal{S}}
\nc{\Vc}{\mathcal{V}}
\nc{\Zc}{\mathcal{Z}}
\nc{\Pb}{\mathsf{P}}
\nc{\tr}{\mathrm{tr}}
\nc{\gh}{\hat{g}}
\nc{\id}{\mathrm{id}}
\nc{\idn}{\mathds{1}_n}
\nc{\wj}{\mathsf{w}}
\nc{\Fo}{\overline{F}}
\nc{\ko}{\overline{k}}
\nc{\Ko}{\overline{K}}
\nc{\wo}{\overline{w}}
\nc{\Wo}{\overline{W}}
\nc{\xo}{\overline{x}}
\nc{\Xo}{\overline{X}}
\nc{\yo}{\overline{y}}
\nc{\Yo}{\overline{Y}}
\nc{\at}{\tilde{a}}
\nc{\Tt}{\tilde{T}}
\nc{\io}{\overline{i}}
\nc{\ioa}{\overline{1}}
\nc{\iob}{\overline{2}}
\nc{\ioc}{\overline{3}}
\nc{\iod}{\overline{4}}
\nc{\ab}{\mathbf{a}}
\nc{\mc}{\check{m}}
\nc{\cf}{\mathsf{c}}
\nc{\pf}{\mathsf{p}}
\newcommand{\TLmodule}{V}
\newcommand{\TLgram}{G}
\newcommand{\TLposet}{D}
\newcommand{\TLset}{L}
\definecolor{dblue}{rgb}{.61,.61,1}
\definecolor{lightblue}{rgb}{.61,.61,1}
\definecolor{altblue}{rgb}{.61,.61,1}
\begin{document}

\thispagestyle{empty}

\begin{center}

\textbf{\huge Integrability of planar-algebraic models}
\\[0.6cm]
{\LARGE Xavier Poncini and J{\o}rgen Rasmussen}
\\[0.3cm]
{\it School of Mathematics and Physics, University of Queensland\\ St Lucia, Brisbane, Queensland 4072, Australia}
\\[0.3cm] 
{\sf x.poncini\!\;@\!\;uq.edu.au\quad\ j.rasmussen\!\;@\!\;uq.edu.au}

\vspace{1.4cm}

{\large\textbf{Abstract}}\end{center}
The Quantum Inverse Scattering Method is a scheme for 
solving integrable models in $1+1$ dimensions, building on an $R$-matrix that satisfies the Yang--Baxter equation
and in terms of which one constructs a commuting family of transfer matrices.
In the standard formulation, this $R$-matrix acts on a tensor product of vector spaces.
Here, we relax this tensorial property and develop a framework for describing and analysing 
integrable models based on planar algebras, allowing non-separable \textit{$R$-operators} 
satisfying \textit{generalised} Yang--Baxter equations.
We also re-evaluate the notion of integrals of motion and characterise when an (algebraic) \textit{transfer operator}
is polynomial in a single integral of motion. We refer to such models as {\em polynomially integrable}.
In an eight-vertex model, we demonstrate that the corresponding transfer operator is polynomial
in the natural hamiltonian.
In the Temperley--Lieb loop model with loop fugacity $\beta\in\Cbb$, we likewise find
that, for all but finitely many $\beta$-values, the transfer operator is polynomial in the usual hamiltonian element 
of the Temperley--Lieb algebra $\mathrm{TL}_n(\beta)$, at least for $n\leq17$.
Moreover, we find that this model admits a second canonical hamiltonian, and that this hamiltonian also
acts as a polynomial integrability generator for small $n$ and all but finitely many $\beta$-values.

\newpage

\tableofcontents

\newpage

\section{Introduction}
\label{Sec:Intro}

Based on Bethe's pioneering work \cite{Bethe31}, the Quantum Inverse Scattering Method 
\cite{FST79,FT79,Sklyanin82,KBI93} is a collection of techniques for solving integrable models in $1+1$ dimensions.
It is built on an $R$-matrix that acts on a tensor product of vector spaces,
allowing the construction of a commuting family of transfer matrices whose spectral properties can 
be used to gain insight into the physical properties of the model.
Although hugely successful, this approach does have its limitations.
Using ideas of Jones, this was illustrated in the work \cite{PRZ06} on a Temperley--Lieb loop model, 
where the basic building block replacing the $R$-matrix is a parameterised element of a 
planar algebra \cite{Jones99} whose representations do not naturally separate tensorially.
This avenue was explored further in a series of papers by Pearce et al.,
including \cite{PR07,PRV10,PRT14,MDPR14,PRT14b}, where commutativity of the corresponding transfer 
operators is established in the algebraic setting and is therefore valid in any given representation. 
In the present work, we generalise and formalise this approach by developing a framework for describing and analysing 
two-dimensional integrable models whose algebraic structure is afforded by a planar algebra.
In another recent study of integrable models, the underlying mathematical structure is a \textit{fusion category} \cite{AFM20,Fendley21}.

There is no consensus as to what constitutes quantum integrability, see \cite{CM11} for a 
review. Here, we approach the problem from a statistical mechanical set-up based on a two-dimensional model
described by an algebra $\Ac$. 
We will view such a model as defined by a one-parameter family of transfer operators, $T(u)\in\Ac$, and say that
the model is {\em integrable} if $[T(u), T(v)] = 0$ for all $u$ and $v$ in some suitable domain. 
From $T(u)$, one may extract a $u$-independent element of $\Ac$ and interpret it as
the \textit{quantum hamiltonian} of a one-dimensional system.
To contrast, another common, and in some sense complementary, approach is to start with a hamiltonian of a 
one-dimensional quantum system, such as a quantum spin chain, 
and then construct a transfer operator as a generating function for the \textit{integrals of motion}.
For precision in our setting, we re-evaluate the notion of integrals of motion algebraically and distinguish between 
those arising from the model-defining transfer operator, which we generally refer to as the hamiltonians of the 
model, and those in the centraliser of the hamiltonians within the algebra $\Ac$.

In a \textit{Yang--Baxter integrable} model, the commutativity of the transfer matrices is a consequence 
of a set of local relations satisfied by the $R$-matrices, including the celebrated Yang--Baxter equations 
(YBEs) \cite{Mcguire64,Yang67,Baxter71,Baxter07}.
Depending on the model, these may be supplemented by Boundary YBEs \cite{Sklyanin88} 
involving $K$-matrices encoding boundary conditions of the model.
This extends to similar relations between 
the $R$- and $K$-{\em operators} in the Temperley--Lieb algebraic approach in \cite{PRZ06}.
Within the planar-algebraic framework developed here, we work with \textit{generalised} 
YBEs which allow the auxiliary operators (the `middle' operators) to be parameterised differently from
the $R$-operators.

A key aim of this work is to examine the possibility of the transfer operator being polynomial in 
a spectral-parameter independent integral of motion,
and we say that a model with that property is {\em polynomially integrable}.
From this view, the ensuing integrability (in the sense of commuting transfer operators) is a consequence
of a {\em global} property, in contrast to Yang--Baxter integrability which is {\em local} in origin. 
Despite the differing perspectives, we stress that they are often applicable simultaneously, 
as a model can be {\em both} Yang--Baxter {\em and} polynomially integrable. 
In fact, we find (under mild conditions on the underlying algebra $\mathcal{A}$) 
that polynomial integrability follows from the diagonalisability of an integrable transfer operator
-- a ubiquitous property in a large class of (possibly Yang--Baxter) integrable models. 
The indicated shift in perspective will be explored in two concrete examples: 
a Temperley--Lieb loop model (Section \ref{Sec:TL}) and an eight-vertex model (Appendix \ref{Sec:Tensor}).

The Temperley--Lieb loop model is formulated in the 
Temperley--Lieb planar algebra \cite{Jones99} incorporating the usual Temperley--Lieb algebra 
$\mathrm{TL}_n(\beta)$ \cite{TL71,Jones83} and underlying the body of work initiated in \cite{PRZ06}.
We show that the model is 
{\em freely Baxterisable} in the sense that it is integrable for any choice of $R$-operator parameterisation.
We then show that the model is also polynomially integrable for all $\beta>2\cos\frac{\pi}{n}$ and $u\in\Rbb$.
Using a new decomposition of the transfer operator $T_n(u,\beta)$, we classify the identity points (see below)
for $T_n(u,\beta)$ and find that there are
{\em two} canonical hamiltonians, one of which is the familiar one, $h=-(e_1+\cdots+e_{n-1})$, 
while the other one does not appear to have been discussed in the literature.
Using the cellularity \cite{GL06} of $\mathrm{TL}_n(\beta)$, we present a spectral analysis of the hamiltonians,
with emphasis on their minimal polynomials. Armed with this, we show that 
$T_n(u,\beta)$ is polynomial in $h$ for all but finitely many $\beta\in\Cbb$, at least for $n\leq17$.
We also examine the finitely many $\beta$-values for which $T_n(u,\beta)$ fails to be polynomial in $h$ and find 
that $T_n(u,\beta)$ is polynomial in the alternative hamiltonian for those values, at least for small $n$.

The paper is organised as follows.
In Section \ref{Sec:Planar}, we recall basic properties of planar algebras
(with some details deferred to Appendix \ref{Sec:PAs})
and develop our planar-algebraic 
description of Yang--Baxter integrability. We thus introduce suitable $R$- and $K$-operators and
use these to construct the transfer operators. A finite set of local relations, including generalised YBEs, 
provide sufficient conditions for integrability (Proposition \ref{Prop:YBE}). 
An {\em identity point} of a transfer operator is introduced as a value for the spectral parameter at which the 
transfer operator evaluates to a scalar multiple of the identity operator, and we use power-series expansions about 
these points to determine the corresponding {\em principal hamiltonians}.
An algebraic characterisation of integrals of motion and the introduction
of {\em polynomial integrability} conclude the section.

In Section \ref{Sec:Linear}, we consider polynomial integrability in a linear-algebraic setting and
present necessary and sufficient conditions for parameter-dependent matrices to be expressible 
as polynomials in a parameter-{\em independent} matrix (Proposition \ref{prop:BxMn}). We then extend this result to elements of semisimple 
algebras (Proposition \ref{prop:Asemi}). We also prepare for our study of the Temperley--Lieb loop model and the aforementioned eight-vertex 
model, by introducing notions relevant to the employed spectral analysis and by reviewing basic properties of 
cellular algebras.

In Section \ref{Sec:TL}, we turn to the Temperley--Lieb algebra where we find that the associated loop model is 
both freely Baxterisable (Proposition \ref{Prop:TT0}) and, for $\beta$ generic, polynomially integrable (Proposition \ref{prop:TLIntSuff}). 
The identity points of the model are then classified (Proposition \ref{prop:TLIdPts}), 
and two distinct hamiltonians are identified as candidates for an integral of motion in terms of which the 
transfer operator can be expressed as a polynomial. Spectral analysis confirms that both
objects do the job for all but finitely many $\beta\in\Cbb$ and all $n\le 17$. 
We then determine explicit polynomial expressions of the transfer operator in terms of these hamiltonians
and find that, for all $\beta\in\Cbb$ and $n$ small, at least one of the polynomials is well-defined.
Some technical details are deferred to Appendix \ref{Sec:App}.

Section \ref{Sec:Discussion} contains concluding remarks,
while in Appendix \ref{Sec:Tensor}, we recover the familiar quantum inverse scattering framework by specialising 
to {\em tensor planar algebras}. Our primary example of this type is an eight-vertex model, 
which we use to illustrate 
how the planar-algebraic framework simplifies in case the $R$-operators are tensorially separable.
We show that the transfer matrix is diagonablisable and present an exact solution in the form of explicit 
expressions for all eigenvalues and corresponding eigenvectors of the transfer matrix.
We then exploit similarities in the spectral properties of the transfer matrix 
and the canonical hamiltonian to establish that the transfer matrix is polynomial in the hamiltonian.
Moreover, we decompose the transfer matrix into a linear combination
of a complete set of orthogonal idempotents expressed in terms of the minimal polynomial of the hamiltonian.

Throughout, we let $\Nbb$ denote the set of positive integers, $\Nbb_0$ the set of nonnegative integers, 
$\Fbb$ an algebraically closed field of characteristic zero, and $\Rc$ a commutative ring with identity.

\section{Planar algebras}
\label{Sec:Planar}

Here, we develop a general planar-algebraic description of Yang--Baxter integrability.
Much of the material would be known to experts, but the details and generality of the framework does
not seem to have been outlined before in the literature. 
This includes the formulation and treatment of the transfer operators in Section \ref{Sec:Transfer}, 
the {\em generalised} Yang--Baxter equations in Section \ref{Sec:Baxter},
the definition of hamiltonians and integrals of motion in Sections \ref{Sec:HamiltonianSub} and \ref{Sec:Higher},
and the introduction of {\em polynomial integrability} in Section \ref{Sec:Pol}.
To set the stage and fix our notation, we first recall some basic properties of
planar algebras \cite{Jones99,Jones00,JonesNotes}.
In the following, focus will be on so-called {\em unshaded} planar algebras, 
while some technical details are deferred to Appendix \ref{Sec:PAs}.

\subsection{Algebraic structure}
\label{Sec:PlanarBasics}

Informally, a \textit{planar algebra} \cite{Jones99,Jones00,JonesNotes} is a collection of vector spaces 
$(P_n)_{n\in\Nbb_0}$ where vectors can be `multiplied planarly' to form vectors.
A basis for $P_n$ consists of disks with $n$ \textit{nodes} (connection points) on their boundary, 
whereby a boundary is composed of nodes and boundary intervals. 
The specification of the internal structure of the basis disks is a key part of the definition of any given planar 
algebra. When disks are combined (`multiplied'), 
every node is connected to a single other node (possibly on the same disk)
via non-intersecting strings, and \textit{planar tangles} are the diagrammatic objects, defined up to ambient isotopy, 
that facilitate such combinations. 

Planar tangles have the following general features: 
(i) an exterior disk, called the \textit{output} disk, 
(ii) a finite set of interior disks, called \textit{input} disks, 
(iii) a finite number of non-intersecting strings connecting nodes of disks pair-wise or forming loops not touching any of the disks, and 
(iv) a marked boundary interval on each disk. An example of a planar tangle is given by
\begin{align}\label{equ:TtangleExample}
    \raisebox{-1.4275cm}{\PlanarTangleExample}
\end{align}
We denote the output disk of the planar tangle $T$ by $D_0^T$ and the set of input disks by $\Dc_T$. 
The number of nodes on the (exterior) boundary of a disk $D$ is denoted by $\eta(D)$. 
The boundary marks on each disk disambiguate the alignment of the input and output disks and are
indicated by red rectangles, see (\ref{equ:TtangleExample}). 

Planar `multiplication' is induced by the action of the planar tangles as multilinear maps.
For a planar tangle $T$, this is denoted by
\be
 \Pb_T\colon\bigtimes_{D\in\Dc_T}P_{\eta(D)}\to P_{\eta(D_0^T)}.
\label{PT}
\ee
Pictorially, $\Pb_T$ acts by filling in each of the interior disks $D\in\Dc_T$ with an element of the corresponding 
vector space $P_{\eta(D)}$, in such a way that the nodes match up and the marked intervals are aligned. 
The details of the map $\Pb_T$ specify how one should remove the internal disks in the picture and identify the 
image as a vector in $P_{\eta(D_0^T)}$. If $\Dc_T=\emptyset$, we simply write $\Pb_T()$ for the image under 
$\Pb_T$, and stress that $\Pb_T()$ is distinct from $T$.
Taking $T$ as in \eqref{equ:TtangleExample}, we present the example
\begin{align}
    T= \raisebox{-1.4275cm}{\RotPlanarNKTangleExampleAltCNew}, \qquad\quad 
    \Pb_T(v_1,v_2,v_3)
    =  \raisebox{-1.4275cm}{\RotPlanarNKTangleExampleAltCGenNew} \in P_8,
    \label{equ:GenPlanarActionExample}
\end{align}
where $v_1\in P_2$, $v_2\in P_4$ and $v_3\in P_6$. 
Note that we have not specified any details about $(P_n)_{n\in\Nbb_0}$ or about the action of the planar tangles. 
\medskip

\noindent
\textbf{Remark.}
Unlike in the picture of $T$ in (\ref{equ:GenPlanarActionExample}), disks in $\Dc_T$ are not labelled; however,
to apply the ordered-list notation for the vectors in $\Pb_T(v_1,v_2,v_3)$, it is convenient to label the disks
accordingly. Once drawn as in the second picture in (\ref{equ:GenPlanarActionExample}), no labelling is needed.
\medskip

Planar tangles can be `composed', and consistency between this composition and the associated 
multilinear maps is often referred to as {\em naturality}, see Appendix \ref{Sec:Naturality} for details. 
By specifying the vector spaces $(P_n)_{n\in\Nbb_0}$ and the action of planar tangles as 
multilinear maps \eqref{PT} such that naturality is satisfied, one arrives at a \textit{planar algebra}. 
As an example, for each $n\in\Nbb_0$, let $P_{2n}$ denote the vector space spanned by all planar tangles $T$ 
with no input disks and $\eta(D_0^T)=2n$, while $P_{2n+1}=\{0\}$, and let the multilinear maps act by the 
composition of tangles. By construction, this data satisfies naturality.
Revisiting \eqref{equ:GenPlanarActionExample} for this particular planar algebra, we have
\begin{align}\label{equ:PlanarActionExample}
    \Pb_T\big(\!\!\raisebox{-0.2cm}{\RotPlanarExampleNodeIIIAltCNew}\!,\!
    \raisebox{-0.2cm}{\RotPlanarExampleNodeIAltC}\!,\!
    \raisebox{-0.2cm}{\RotPlanarExampleNodeIIAltC}\!\!\big) 
    = \raisebox{-1.4275cm}{\RotPlanarNKOperandExampleAltCNew} 
    =  \raisebox{-1.4275cm}{\RotPlanarNKOperandExampleAltCnewNew} 
    = \raisebox{-0.2cm}{\RotPlanarExampleNodeIVAltC}\!.
\end{align}
In this example, each vector space $P_{2n}$ is infinite-dimensional, as a set of disks 
with identical internal connections but different numbers of loops is considered linearly independent. 

A natural condition on a planar algebra is $\dim P_0=1$.
In this case, the empty disk and that with any number of loops are linearly dependent.
Accordingly, there exists a linear map $\mathrm{e}\colon P_0\to\Fbb$, here referred to as the 
{\em evaluation map}, that maps the `empty tangle' to $1$, and we may identify $P_0$ with $\Fbb$. 
Naturality implies that each loop formed in the image under $\mathsf{P}_T$ can be removed and replaced by a 
common scalar weight. Another natural condition on a planar algebra is to forbid \textit{null vectors}.
Here, a nonzero $v\in P_n$ is called a \textit{null vector} if $\mathsf{P}_T(v)=0$ for every planar tangle $T$
for which $\Dc_T=\{D\}$ with $\eta(D) = n$.

In the remainder of this work, all planar algebras will have $\dim P_0=1$ and $P_{2n+1}=\{0\}$,
and $P_{2n}$ will have no nonzero null vectors for all $n\in\Nbb_0$. To distinguish such a planar algebra from the 
general discussion above, we will use the notation $(A_n)_{n\in\Nbb_0}$, where $A_n\equiv P_{2n}$ is a vector 
space spanned by disks with $2n$ marked boundary points.

A planar algebra $(A_n)_{n\in\Nbb_0}$ is not an algebra in the usual sense; however, it contains a countable
number of standard algebras. Indeed, for each $n\in\Nbb_0$, the planar tangle
\begin{align}\label{equ:MultPlanar}
    M_n: = \raisebox{-1.4275cm}{\RotPlanarIIKMultAltCNew}, \qquad\quad 
    \Pb_{M_n}\colon A_n\times A_n\to A_n,
\end{align}
induces a \textit{multiplication} on $A_n$, and we write $vw=\Pb_{M_n}(v,w)\in A_{n}$,  
where $v$, respectively $w$, is replacing the lower, respectively upper, disk in $M_n$. 
Naturality implies that the ensuing standard algebra $A_n$ is associative and unital, where
\be
    \idn:= \Pb_{\mathrm{Id}_n}\!(), \qquad\quad 
    \mathrm{Id}_n := \raisebox{-0.675cm}{\IdZTangleAltIINew}.
\label{Aunit}
\ee
Details are deferred to Appendix \ref{Sec:Unitality}.

\subsection{Transfer operators}
\label{Sec:Transfer}

Let $(A_n)_{n\in\Nbb_0}$ be a planar algebra, and, for each $n$, let $B_n$ denote a basis for $A_n$. 
Without loss of generality, we may assume that $\idn\in B_n$, and it is then convenient to introduce
\be
 B_n':=B_n\setminus\{\idn\}.
\label{Bnprime}
\ee
We also introduce the parameterised algebra elements
\be
 K(u): = \sum_{a\in B_1} k_a(u)\,a,\qquad
 R(u): = \sum_{a\in B_2} r_a(u)\,a,\qquad
 \Ko(u): = \sum_{a\in B_1} \ko_a(u)\,a,
\label{RKK}
\ee
where $k_a, r_a,\ko_a:\Omega\to\Fbb$, with $\Omega\subseteq\Fbb$ a suitable domain,
and the {\em transfer tangle}
\be
 T_n := \raisebox{-1.975cm}{\PlanarTemplateTransferAlt}.
\label{Tn}
\ee
We are now in a position to define the corresponding {\em transfer operator} as
\be
 T_n(u): =\Pb_{T_n}\big(K(u),R(u),\ldots,R(u),\Ko(u)\big),
\label{Tnu}
\ee
where $K(u)$ and $\Ko(u)$, respectively, are inserted into the left- and rightmost disk spaces in (\ref{Tn}).
It follows that $T_n(u)$ is an element of $A_n$.
Using the diagrammatic representations
\begin{align}
 K(u)=\raisebox{-0.425cm}{\RotKopBlueOp}\; ,\qquad
 R(u)=\raisebox{-0.325cm}{\RotRopGreenOpBBTLST}\; ,\qquad
 \Ko(u)=\raisebox{-0.425cm}{\RotKopYellowOp}\; ,
\label{RKK2}
\end{align}
we have
\begin{align}
 T_{n}(u)\,= \raisebox{-0.925cm}{\RotMarkRedTransferFilteredCrossIsoSingleR}
   = \raisebox{-2.55cm}{\RotMarkRedTransferFilteredCrossSingleR}\ ,
\label{TnuJaws}
\end{align}
and we refer to $R(u)$ as the corresponding \textit{$R$-operator}, and to $K(u)$ and $\Ko(u)$ as the corresponding \textit{$K$-operators}.

\medskip

\noindent
\textbf{Remark.} 
Different internal colours in (\ref{RKK2}) are used to indicate that the parameterisations are 
independent. In fact, it is for later convenience that we use the {\em same} $A_2$-element $R(u)$ in every 
available position in (\ref{Tnu}) and hence in (\ref{TnuJaws}),
and that we let the coefficients in (\ref{RKK}) be functions of a {\em single} variable only.
Generalisations are possible and in some cases straightforward but will not be considered here.
\medskip

Within this planar-algebraic framework, the transfer operator can be recast as a partial-trace expression
reminiscent of more traditional descriptions. 
To this end, we introduce the \textit{trace} and \textit{partial trace tangles}
\begin{align}\label{equ:TracePlanar}
        \tr_n: =\raisebox{-1.2cm}{\RotPlanarITraceAltC}, \qquad\quad 
        \tau_n: =\raisebox{-1.2cm}{\RotPlanarIPTraceAltC},
\end{align}
and the {\em embedding tangles}
\be
 E_{n,j}^{(1)}: =\raisebox{-1.95cm}{\PlanarTemplateKMatrix}\ ,\qquad\quad
 E_{n,i}^{(2)}: =\raisebox{-1.925cm}{\PlanarTemplateRMatrix}\ ,
\ee
where $j=1,\ldots,n$ and $i=1,\ldots,n-1$, respectively. 
Similar traces and partial traces can be defined by `closing' on the left; however, for our purposes, 
we only need the right-closing (partial) traces.
Using the last diagram in (\ref{TnuJaws}), we now obtain
\begin{align}\label{equ:CAtransferAlg}
        T_{n}(u)=\Pb_{\tau_{n+1}}\big(R_{n}(u)
        \cdots R_{1}(u)K_1(u)R_{1}(u)\cdots R_{n}(u)\Ko_{n+1}(u)\big), 
\end{align}
where 
\be
 K_1(u): =\Pb_{E_{n+1,1}^{(1)}}(K(u)), \qquad
 R_{i}(u): =\Pb_{E_{n+1,i}^{(2)}}(R(u)), \qquad
 \Ko_{n+1}(u): =\Pb_{E_{n+1,n+1}^{(1)}}(\Ko(u)).
\ee

\subsection{Crossing symmetry}
\label{Sec:Cross}

For each $n\in\Nbb$, we introduce the {\em rotation tangles}
\begin{align}\label{equ:RotTan}
    r_{n,1}:= \raisebox{-1.2cm}{\RotationPlanarAntiClock},\qquad 
    r_{n,0}:= \raisebox{-1.2cm}{\RotationPlanarZero}\ ,\qquad
    r_{n,-1}:= \raisebox{-1.2cm}{\RotationPlanarClock},
\end{align}
where the numbers of spokes are such that $\eta(D^{r_{n,\pm 1}}_0) = \eta(D^{r_{n,0}}_0) = n $, 
and let $r_{n,\pm k}$, $k\in\Nbb$, denote the composition of $r_{n,\pm 1}$ with itself $k$ times. Accordingly,
\be
 r_{n,k}\circ r_{n,l}=r_{n,k+l},\qquad\quad \forall\,k,l\in\Zbb,
\ee
and we note that $r_{n,l} = r_{n,l\:\mathrm{mod}\:n}$ for all $l\in\Zbb$.
Naturality ensures that $\mathsf{P}_{r_{2n,k}}\colon A_n\to A_n$ is invertible for all $n,k$, 
see Appendix \ref{Sec:Unitality} for details.

We say that the $K$- and $R$-operators are {\em crossing symmetric} if
\begin{align}
    \mathsf{P}_{r_{2,1}}(K(u)) = \tilde{c}_K(u) K(c_K(u)),\quad 
    \mathsf{P}_{r_{4,1}}(R(u)) = \tilde{c}_R(u) R(c_R(u)),\quad
    \mathsf{P}_{r_{2,1}}(\Ko(u)) = \tilde{c}_{\Ko}(u) \Ko(c_{\Ko}(u)),
\end{align}
for some scalar functions $\tilde{c}_K,c_K,\tilde{c}_R,c_R,\tilde{c}_{\Ko},c_{\Ko}:\Omega\to\Fbb$ such that 
$\mathsf{P}_{r_{2,2}}(K(u)) = K(u)$, $ \mathsf{P}_{r_{4,4}}(R(u)) = R(u)$ and 
$\mathsf{P}_{r_{2,2}}(\Ko(u)) = \Ko(u)$. If the relation for $R$ only holds with $r_{4,1}$ replaced by $r_{4,2}$, 
the $R$-operator is considered {\em partially} crossing symmetric.
We say $u_{iso}\in\Fbb$ is an {\em isotropic point} if
\begin{align}
    \mathsf{P}_{r_{2,1}}(K(u_{iso})) = K(u_{iso}),\qquad 
    \mathsf{P}_{r_{4,1}}(R(u_{iso})) = R(u_{iso}),\qquad
    \mathsf{P}_{r_{2,1}}(\Ko(u_{iso})) =\Ko(u_{iso}).
\label{iso}
\end{align}

\noindent
\textbf{Remark.}
{\em Rigid planar algebras} \cite{Burns11} use {\em rigid planar isotopy} 
in place of the (full) planar isotopy inherent in the planar algebras in Section \ref{Sec:PlanarBasics},
essentially discarding the rotation tangles $r_{n,\pm1}$ in (\ref{equ:RotTan}).
Rigid planar algebras thus provide a natural algebraic framework for describing models without rotation symmetry, 
such as the $A_2^{(1)}$ models discussed in \cite{MDPR19}.

\subsection{Baxterisation and integrability}
\label{Sec:Baxter}

A model described by the transfer operator $T_n(u)$ is {\em integrable} on $\Omega$ if
\be
 [T_n(u),T_n(v)]=0,\qquad \forall\, u,v\in\Omega,
\label{TnTn}
\ee
with $\Omega\subseteq\Fbb$ a suitable domain, and we say that the model is {\em Yang--Baxter integrable} 
if the $R$- and $K$-operators satisfy a set of (local) relations that imply (\ref{TnTn}).
Following \cite{BaxterizationJones1990}, the associated parameterisations (\ref{RKK}) are then 
called a {\em Baxterisation}. As given, however, the functions in (\ref{RKK}) are unspecified, so a natural goal
is to constrain them to ensure Yang--Baxter integrability.
As we allow for {\em different} parameterisations of the auxiliary operators in each of the Yang--Baxter equations below,
our considerations are more general than what is usually done in the literature.

The following proposition is formulated in the diagrammatic representation used in (\ref{RKK2})
and (\ref{TnuJaws}) but is readily reformulated in planar-algebraic terms. Importantly, each of the relations in 
(\ref{Invs})--(\ref{BYBEs}) is {\em local} in the sense that there exists an `ambient planar tangle' with a suitable 
marking, relative to which it holds. 
In fact, the invertibility of the linear maps $\mathsf{P}_{r_{n,\pm k}}$ associated with the rotation tangles in Section \ref{Sec:Cross}, 
implies that the specific marking of the ambient planar tangle is immaterial. To illustrate, we have
\be
    \raisebox{-1.2cm}{\RotationPlanarZeroVecU} = \raisebox{-1.2cm}{\RotationPlanarZeroVecV} 
    \qquad \Longleftrightarrow \qquad 
    \raisebox{-1.2cm}{\RotationPlanarZeroVecURot} = \raisebox{-1.2cm}{\RotationPlanarZeroVecVRot}\,,
\ee
where $a,b\in A_n$, with the equalities statements in $A_n$.
\begin{prop}
\label{Prop:YBE}
Let the parameterisations in (\ref{RKK}) be given, and suppose there exist
\begin{align}
 \raisebox{-0.325cm}{\RotRopDarkRedOpBBi}\; :=\sum_{a\in B_2} \yo^{(i)}_a(u,v)\,a,\qquad\quad
 &\raisebox{-0.325cm}{\RotRopRedOpBBi}\; :=\sum_{a\in B_2} y^{(i)}_a(u,v)\,a,
\label{XY}
\end{align}
where $\yo^{(i)}_a$ \!and\, $y^{(i)}_a$, $i=1,2,3$, are scalar functions $\Omega\times\Omega\to\Fbb$, 
such that the following three sets of relations are satisfied:
\begin{enumerate}
\item[$\bullet$] Inversion identities (Inv1 - Inv3)
\begin{align}
  \raisebox{-0.45cm}{\RotMarkRedInvrFilteredi}\ =\; \raisebox{-0.45cm}{\MarkRedIdFilteredInv}
  \qquad\quad (i=1,2,3)
\label{Invs}
\end{align}
\item[$\bullet$] Yang--Baxter equations (YBE1 - YBE3) 
\be
 \raisebox{-0.93cm}{\RotMarkRedYBEgoLeftSingleRia}\ =\raisebox{-0.93cm}{\RotMarkRedYBEgoRightSingleRia}
\qquad\quad
 \raisebox{-0.93cm}{\RotMarkRedYBEooRightSingleRib}\ =\raisebox{-0.93cm}{\RotMarkRedYBEooLeftSingleRib} 
\qquad\quad
 \raisebox{-0.93cm}{\RotMarkRedYBEggLeftSingleRic}\ = \raisebox{-0.93cm}{\RotMarkRedYBEggRightSingleRic}
\label{YBE}
\ee
\item[$\bullet$] Boundary Yang--Baxter equations (BYBEs)
\begin{align}
 \raisebox{-1.43cm}{\RotMarkRedBYBEbFilteredLeft}\ \;=\; \raisebox{-1.43cm}{\RotMarkRedBYBEbFilteredRight} 
\qquad\qquad\qquad
 \raisebox{-1.43cm}{\RotMarkRedBYBEyFilteredLeft}\ \;=\; \raisebox{-1.43cm}{\RotMarkRedBYBEyFilteredRight}
\label{BYBEs}
\end{align}
where
\begin{align}
 \raisebox{-0.325cm}{\RotRopDarkRedOpBBid}\; =\sum_{a\in B_2} \yo^{(1)}_a(v,u)\,a,\qquad\quad
 &\raisebox{-0.325cm}{\RotRopRedOpBBid}\; =\sum_{a\in B_2} y^{(1)}_a(v,u)\,a.
\label{YBE4}
\end{align}
\end{enumerate}

Then, $[T_n(u),T_n(v)]=0$ for all $u,v\in\Omega$.
\end{prop}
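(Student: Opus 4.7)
The plan is to adapt the classical "train argument" (or "zipper argument") for the commutativity of transfer matrices to the planar-algebraic setting of (\ref{equ:CAtransferAlg}). The entire computation will take place inside the partial-trace picture, so I will work with diagrams of the form (\ref{TnuJaws}) and manipulate them using the local relations (\ref{Invs}), (\ref{YBE}) and (\ref{BYBEs}); the invertibility of the rotation tangles noted in Section \ref{Sec:Cross} ensures that the local moves can be applied inside any surrounding planar context.

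First, I would write the product $T_n(u)T_n(v)$ as a single planar-algebraic expression by stacking two copies of (\ref{TnuJaws}), the lower one carrying spectral parameter $u$ and the upper one $v$. Using naturality, the composition of the two outer partial traces can be drawn as one double row of auxiliary strings, with two $K$-operators on the left, two $\Ko$-operators on the right, and two horizontal rows of $R$-operators meeting the $n$ vertical strands that make up the element of $A_n$. Next, on the left-hand side, I would insert $\mathrm{id} = \bar{Y}^{(1)}(v,u)\,Y^{(1)}(v,u)$ (the $i=1$ inversion identity applied to the pair $(\bar{Y},Y)$ defined by (\ref{YBE4})) between the two $K$-operators and the bulk. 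The purpose of this is to produce a middle operator sitting on the two auxiliary lines; one of the two copies will now be pushed all the way through the double row to the right boundary.

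The propagation is done in three stages. Using BYBE-left in (\ref{BYBEs}), I first move $Y^{(1)}(v,u)$ past the two $K$-operators, converting it into a middle operator of type $Y^{(1)}(u,v)$ sitting between the $K$'s and the first column of $R$'s; this is where the asymmetric pair $(Y^{(1)}(u,v), Y^{(1)}(v,u))$ in (\ref{YBE4}) is essential. Next, I apply YBE3 of (\ref{YBE}) column by column, each application sliding the middle operator across one pair of $R$-operators while possibly switching to the auxiliary type labelled $y^{(2)}$ and $\yo^{(2)}$, until it emerges on the right of all the bulk $R$-operators. Finally, BYBE-right in (\ref{BYBEs}) is used to push the resulting middle operator past the two $\Ko$-operators to the right boundary. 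At that point, what was the left-most inserted $Y$-operator has travelled all the way across and, by a second inversion identity, cancels against a compatible $\bar{Y}$-operator that the sequence of moves has produced; the two auxiliary rows that previously carried $u$ below and $v$ above are now interchanged, so the diagram is precisely $T_n(v)T_n(u)$.

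The main obstacle I anticipate is bookkeeping rather than conceptual: one must verify that the three distinct middle operators $(Y^{(i)},\bar{Y}^{(i)})$ appearing in (\ref{XY}) and (\ref{YBE4}) are exactly the ones needed at each stage of the train move, and that the YBEs in (\ref{YBE}) apply with the correct pairings of $R$- and middle-operator parameterisations, including the way YBE1 (mixed $R$-$K$) and YBE2 (mixed $R$-$\Ko$) interact with the boundary equations (\ref{BYBEs}). The inversion identities (\ref{Invs}) are used twice -- once to introduce the auxiliary middle operator on one boundary, once to remove it on the other -- so a careful check is needed that the same index $i$ is available at both insertion and removal. With these compatibilities confirmed, the sequence of local moves inside the double-row picture takes $T_n(u)T_n(v)$ to $T_n(v)T_n(u)$, proving (\ref{TnTn}) on all of $\Omega\times\Omega$.
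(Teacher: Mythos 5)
Your proposal is the same train argument the paper itself uses (the ``familiar manipulations'' of \cite{BPOB96}): insert inverse pairs of middle operators via the inversion identities, propagate them with the YBEs, reflect them with the BYBEs, and cancel, ending with the two auxiliary loops interchanged so that the diagram reads $T_n(v)T_n(u)$. The strategy and the endpoint are both correct.

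However, the sequencing you describe would not go through as written. In the stacked double-row diagram the two $K$-operators are not adjacent: $K(u)$ and $K(v)$ are each buried inside their own rows of $R$-operators, so there is no single location ``between the two $K$-operators and the bulk'' where you can insert one pair and immediately apply the left reflection equation. This is exactly why the proposition hypothesises \emph{three} distinct families of middle operators, each with its own inversion identity: the paper's proof runs Inv1 then YBE1 to carry a type-1 middle operator through the first batch of $R$'s, then Inv2 and YBE2 for a type-2 operator through the next batch, and only then are the two BYBEs applicable; YBE3 together with the $u\leftrightarrow v$-swapped versions of YBE1 and Inv1 (the ``YBE4/Inv4'' of (\ref{YBE4})) complete the propagation and the cancellation at the right boundary. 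Consequently the inversion identities are used four times, not twice as you claim, and the three middle-operator types are introduced and eliminated in separate stages rather than one operator being carried across the whole diagram. A smaller point: YBE1 and YBE2 in (\ref{YBE}) are relations among $R$- and middle operators only --- the $K$- and $\Ko$-operators enter solely through the BYBEs --- so your reading of them as ``mixed $R$-$K$'' and ``mixed $R$-$\Ko$'' equations misstates the hypotheses. None of this changes the method, but the content of the proof is precisely the careful ordering of these local moves, and that ordering is what your sketch leaves unverified.
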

\begin{proof}
Using the following familiar manipulations \cite{BPOB96},
\begin{align}
    T_n(u)T_n(v) &= \raisebox{-1.95cm}{\RotMarkRedIntFilteredISingleR} 
    \stackrel{\mathclap{\scalebox{0.5}{\normalfont\mbox{(Inv1)}}}}{=} 
    \raisebox{-1.95cm}{\RotMarkRedIntFilteredIISingleR} 
    \stackrel{\mathclap{\scalebox{0.5}{\normalfont\mbox{(YBE1)}}}}{=} 
    \raisebox{-1.95cm}{\RotMarkRedIntFilteredIIISingleR}
\nonumber
\end{align}
    \vspace{-0.25cm}
\begin{align}
    &\stackrel{\mathclap{\scalebox{0.5}{\normalfont\mbox{(Inv2)}}}}{=} 
    \raisebox{-1.95cm}{\RotMarkRedIntFilteredIVSingleR} \;\;
    \stackrel{\mathclap{\scalebox{0.5}{\normalfont\mbox{(YBE2)}}}}{=}\;
    \raisebox{-1.95cm}{\RotMarkRedIntFilteredVSingleR}
\nonumber
\end{align}
    \vspace{-0.25cm}
\begin{align}
    &\stackrel{\mathclap{\scalebox{0.5}{\normalfont\mbox{(BYBEs)}}}}{=}\;  
    \raisebox{-1.95cm}{\RotMarkRedIntFilteredVISingleR} \;
    \stackrel{\mathclap{\scalebox{0.5}{\normalfont\mbox{(YBE3)}}}}{=}  
    \raisebox{-1.95cm}{\RotMarkRedIntFilteredVIISingleR}
\nonumber
\end{align}
    \vspace{-0.25cm}
\begin{align}
    &\stackrel{\mathclap{\scalebox{0.5}{\normalfont\mbox{(Inv3+YBE4)}}}}{=}\;\;\;  
    \raisebox{-1.95cm}{\RotMarkRedIntFilteredVIIISingleR} 
    \stackrel{\mathclap{\scalebox{0.5}{\normalfont\mbox{(Inv4)}}}}{=}  
    \raisebox{-1.95cm}{\RotMarkRedIntFilteredIXSingleR} 
    = T_n(v)T_n(u),
\label{4}
\end{align}
we arrive at the result. In (\ref{4}), YBE4 and Inv4 refer respectively to YBE1 and Inv1 with $u$ and $v$ 
interchanged and $1$ replaced by $4$, c.f. (\ref{YBE4}).
\end{proof}
\noindent
\textbf{Remark.} 
Proposition \ref{Prop:YBE} provides {\em sufficient} conditions for integrability; they are in general 
not necessary. The elements in (\ref{XY}) depend on $u$ and $v$, but their parameterisations may be different 
from the ones of $R$. We therefore say that the YBEs (\ref{YBE}) are {\em generalised}. 
While the work \cite{PRVO20} on dimers also relies on more than one YBE to establish commutativity of transfer 
operators, the auxiliary operators playing the role of (\ref{XY}) are all used as building blocks in the construction of 
the dimer transfer operator. We stress that the auxiliary operators in the generalised YBEs (\ref{YBE}) do not 
necessarily appear in the transfer operator; they are in many ways a means to an end in establishing 
commutativity.
\medskip

\noindent
\textbf{Remark.}
Relations like the BYBEs (\ref{BYBEs}) are often referred to as {\em reflection equations}.

\subsection{Sklyanin's formulation}
\label{Sec:Sklyanin}

The partial trace appearing in (\ref{equ:CAtransferAlg}) is diagrammatic in origin.
In fact, for a given planar algebra, there need not be a vector space over which the trace may be interpreted 
as being performed. However, if the algebra and the corresponding $R$-operator satisfy certain conditions, 
we {\em can} identify such an auxiliary vector space. 

For each $m,n\in\Nbb_0$, the quadratic tangle
\begin{align}\label{equ:CoMultPlanar}
    K_{m,n} := \raisebox{-1.4275cm}{\RotPlanarIIKCoMultAltC},\qquad\quad
    \Pb_{K_{m,n}}\colon  A_m\times A_n\to A_{m+n},
\end{align}
induces a \textit{tensor product} between $A_m$ and $A_n$ within $A_{m+n}$.
For ease of notation, for $a\in A_{m}$ and $b\in A_{n}$, 
we write $a\otimes b=\Pb_{K_{m,n}}(a,b)\in A_{m+n}$. 
We say that $R(u)\in A_2$ is \textit{separable} if it can be written as an element of $A_1\otimes A_1$.
Diagrammatically, this amounts to a decomposition of the form
\begin{align}\label{equ:SeparableRStraight}
    R(u) = \raisebox{-0.45cm}{\RotRopGreenOpBBV} 
    = \sum_{a_1,a_2\in B_1}\!\!\!\!\!r_{a_1,a_2}(u) \raisebox{-0.45cm}{\MarkRedRiFilteredCrossRevSepAlt}.
\end{align}
The auxiliary vector space of the corresponding transfer operator is then the rightmost {\em channel}, 
here coloured blue:
\begin{align}\label{equ:OrthTransBlue}
    T_n(u) = \raisebox{-0.95cm}{\RotMarkRedTransferFilteredCrossIsoSingleRALT}.
\end{align}
We also say that a planar algebra is \textit{braided}, respectively \textit{symmetric}, 
if each vector space $A_n$ admits a representation of the $n$-strand braid, respectively symmetric, 
group algebra. In the symmetric case, applying the permutation element of $A_2$ to the separable 
$R$-operator \eqref{equ:SeparableRStraight}, we obtain
\begin{align}\label{equ:SeparableRTwist}
    \check{R}(u)= \raisebox{-0.45cm}{\RotRopGreenOpBBVcheck}
    := \sum_{a_1,a_2\in B_1}\!\!\!\!\!r_{a_1,a_2}(u) \raisebox{-0.5cm}{\VecMatAlRopDecompStrandsTwistIGen}
\end{align}
and subsequently
\begin{align}
   \check{T}_n(u) = \raisebox{-0.95cm}{\RotTransferFilteredCrossIsoSingleRV},
\label{TS}
\end{align}
corresponding to the usual double-row transfer matrix in Sklyanin's formulation \cite{Sklyanin88}.
By combining the $R$-operators \eqref{equ:SeparableRStraight} and \eqref{equ:SeparableRTwist}, one can 
construct transfer operators, where the auxiliary space is `threaded' through any of the intermediate channels
\begin{align}\label{equ:SkyIntermedTransferOps}
    \raisebox{-0.95cm}{\RotTransferFilteredCrossIsoSingleRI}, \;\; \ldots, 
    \raisebox{-0.95cm}{\RotTransferFilteredCrossIsoSingleRII}.
\end{align}
Note that the commutativity of the operators in \eqref{equ:OrthTransBlue}, and likewise of the ones in \eqref{TS},
does not necessarily imply commutativity of any of the intermediate operators 
in \eqref{equ:SkyIntermedTransferOps}.

\subsection{Hamiltonian limits}
\label{Sec:HamiltonianSub}

We say that $u_*\in\Fbb$ is an {\em identity point} of $T_n(u)$ if $T_n(u_*)$ is proportional to $\idn$.
Around each identity point for which the proportionality constant is nonzero, we will use a power-series expansion 
of the transfer operator to define associated {\em hamiltonian operators}.

In preparation, for $\Omega\subseteq\Fbb$ a suitable domain, we define
\begin{align}
        g\colon \Omega\to A_0,\qquad u \mapsto \raisebox{-0.45cm}{\RotMarkRedTransferFilteredCrossIsoUV}\ ,
\end{align}
with the image written in our usual diagrammatic representation.
Note that
\be
 g(u)=\Pb_{\tr_1\circ M_1}\big(K(u),\Ko(u)\big)=\Pb_{\tr_1\circ M_1}\big(\Ko(u),K(u)\big).
\ee
Composing $g$ with the evaluation map, we obtain the scalar function
\be
 \gh: =\mathrm{e}\circ g\colon \Omega\to\Fbb.
\ee

The following proposition provides sufficient conditions for the existence of identity points.
\begin{prop}\label{prop:HamLimitPointsGen}
Let $u_*\!\in\Omega$ and suppose there exist $l_{u_*},r_{u_*}\!\in\Fbb$ such that
\begin{align}
\label{lr}
  \raisebox{-1cm}{\RotMarkRedIdPointFilteredBlueLeftSingleR} 
  = l_{u_*}\raisebox{-1cm}{\RotMarkRedIdPointFilteredBlueRight}
  \qquad\ \text{or}\qquad\
   \raisebox{-1cm}{\RotMarkRedIdPointFilteredYellowLeftSingleR} 
   = r_{u_*}\raisebox{-1cm}{\RotMarkRedIdPointFilteredYellowRight}\ .
\end{align}
Then, $u_*$ is an identity point, with 
$T_n(u_*)=l^n_{u_*}\gh(u_*)\idn$ or $T_n(u_*)=r^n_{u_*}\gh(u_*)\idn$, respectively. 
\end{prop}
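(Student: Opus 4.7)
The plan is to prove the proposition by induction on $n$, peeling off one column of $R(u_*)$-operators at a time from the standard diagrammatic form of the transfer operator displayed in (\ref{TnuJaws}), each time extracting a scalar factor via the hypothesized local identity (\ref{lr}). I treat the first case, in which the left-hand relation in (\ref{lr}) is assumed; the second case follows by a completely symmetric argument carried out at the right ($\Ko$) boundary, producing the factor $r_{u_*}$ in place of $l_{u_*}$.

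Reading off $T_n(u_*)$ from (\ref{TnuJaws}) at $u = u_*$, the leftmost column consists of $K(u_*)$ together with the two adjacent $R(u_*)$-operators joined to it above and below. This subconfiguration is, up to the surrounding ambient tangle, precisely the left-hand side of the first relation in (\ref{lr}) at $u = u_*$. Substituting the right-hand side eliminates that column of $R$-operators and extracts the scalar factor $l_{u_*}$. The resulting diagram is again of the same form as a transfer operator, but with one fewer column of $R(u_*)$-operators between $K(u_*)$ and $\Ko(u_*)$. Iterating this step $n$ times produces an overall scalar $l_{u_*}^n$ together with a residual diagram in which no $R$-operators survive.

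The residual diagram consists of $n$ through-strands on the outer boundary together with a single closure, on what was the auxiliary channel, containing only $K(u_*)$ and $\Ko(u_*)$. By the definition of $g$ recorded just before the proposition, that closure is exactly $g(u_*) \in A_0$. Since $\dim A_0 = 1$, the evaluation map $\mathrm{e}$ identifies $g(u_*)$ with the scalar $\gh(u_*) \in \Fbb$, while the $n$ remaining through-strands reassemble into $\idn \in A_n$ by the description of the unit in (\ref{Aunit}). Collecting the factors yields $T_n(u_*) = l_{u_*}^n\, \gh(u_*)\, \idn$, as claimed, and hence $u_*$ is an identity point.

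The main obstacle is a bookkeeping point: the local identity (\ref{lr}) is formulated in a small ambient tangle, and one must justify that it can be substituted verbatim into the larger transfer-operator tangle at each step of the iteration, and in particular after successive peelings when the new ``leftmost'' column has been rotated or reoriented relative to the original. This is exactly the content of the marking-invariance remark in Section \ref{Sec:Baxter} preceding Proposition \ref{Prop:YBE}, where the invertibility of the rotation-tangle maps $\mathsf{P}_{r_{n,\pm k}}$ guarantees that the specific marking of the ambient tangle is immaterial, making the inductive application of (\ref{lr}) inside $T_n$ fully legitimate.
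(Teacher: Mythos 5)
Your argument is correct and is essentially the paper's own proof: the paper likewise evaluates $T_n(u_*)$ by applying the local relation (\ref{lr}) repeatedly to strip away the $R(u_*)$-columns one at a time, extracting a factor of $l_{u_*}$ (or $r_{u_*}$) at each step until only the closure containing $K(u_*)$ and $\Ko(u_*)$ remains, which is identified with $\gh(u_*)\idn$. Your additional remark on marking-invariance of the ambient tangle is a legitimate piece of bookkeeping that the paper leaves implicit, but it does not change the substance of the argument.
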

\begin{proof}
If the left relation in (\ref{lr}) holds, then
\begin{align}
        T_n(u_*) = \raisebox{-0.925cm}{\RotMarkRedTransferFilteredCrossIsoUstarISingleR} 
        = l_{u_*}\raisebox{-0.925cm}{\RotMarkRedTransferFilteredCrossIsoUstarIISingleR}
        = l^n_{u_*}\raisebox{-0.925cm}{\RotMarkRedTransferFilteredCrossIsoUstarIII} 
        = l^n_{u_*}\gh(u_*)\idn.
\end{align}
Similar arguments apply if the right relation holds.
\end{proof}
\noindent
\textbf{Remark.} If both relations in (\ref{lr}) hold and $\gh(u_*)\neq0$, 
then $l_{u_*}^{n-k}r_{u_*}^k=l_{u_*}^{n-k'}r_{u_*}^{k'}$ for all $k,k'\in\{0,1,\ldots,n\}$, hence $l_{u_*}=r_{u_*}$.
\medskip

Now, let $u_*$ be an identity point, and suppose $\Omega$ contains an open subset of $\mathbb{F}$ containing $u_*$.
If $T_n(u)$ is not just proportional to $\idn$, 
then there exist unique $k\in\Nbb$ and unique $H_{n,u_*}\notin\Fbb\idn$ such that
\be
   T_n(u_*+\epsilon)=p_{k-1}(\epsilon)\idn + \epsilon^k H_{n,u_*} + \Oc(\epsilon^{k+1}),
\label{equ:FilteredHamFromTrans}
\ee
where $p_{k-1}$ is a polynomial of degree at most $k-1$. Since $T_n(u_*)\neq0$,
we have $p_{k-1}(0)\neq0$, and if $u_*$ is of the form in Proposition \ref{prop:HamLimitPointsGen}, 
then $p_{k-1}(0)=l^n_{u_*}\gh(u_*)$ or $p_{k-1}(0)=r^n_{u_*}\gh(u_*)$.
We refer to $H_{n,u_*}$ as the {\em hamiltonian} associated with the given identity point
(and given transfer operator), and note that
\be
  H_{n,u_*} =\frac{1}{k!} \frac{\partial^k}{\partial u^k}T_n(u)\big|_{u=u_*}.
\label{equ:HamDerivativeFormFiltered}
\ee
By construction, there exist $s_k\in\Fbb$ and nonzero $h_{n,u_*}\in\mathrm{span}_{\Fbb}(B'_n)$ such that
\be
 H_{n,u_*}=s_k\idn+h_{n,u_*}.
\ee
Setting
\be
 \tilde{p}_k(\epsilon):=p_{k-1}(\epsilon)+s_k\epsilon^k,
\ee
we obtain
\be
 T_n(u_*+\epsilon)=\tilde{p}_k(\epsilon)\idn+\epsilon^kh_{n,u_*}+ \Oc(\epsilon^{k+1}),
\label{hprincipal}
\ee
where $\tilde{p}_k$ is a polynomial of degree at most $k$. Up to a possible rescaling, we refer to
$h_{n,u_*}$ as the {\em principal hamiltonian} associated with the identity point $u_*$.
If confusion is unlikely to occur, we may write $h_{u_*}$ or $h_n$ instead of $h_{n,u_*}$.
Different identity points associated with $T_n(u)$ may give rise to different principal hamiltonians.
Hamiltonians arising from different Baxterisations are also likely to correspond to different elements of $A_n$.
\medskip

\noindent
\textbf{Remark.} 
If $u_*$ is such that (\ref{lr}) holds and $T_n(u_*)=0$, then one may renormalise the original transfer operator
such that the limit $u\to u_*$ yields a well-defined nonzero operator, see e.g.~\cite{PR07}.
This will be illustrated in Section \ref{Sec:h}.
\medskip

\noindent
If $T_n(u)$ describes an integrable model on $\Omega$, then we have the familiar relations
\be
 [h_{n,u_*},T_n(u)]= [H_{n,u_*},T_n(u)]=0,\qquad \forall\,u_*,u\in\Omega.
\ee

\subsection{Hamiltonians and integrals of motion}
\label{Sec:Higher}

Let $\Ac$ be an associative algebra with basis $\Bc$, 
and let $T(u)\in\Ac$ denote a transfer operator of some model. Then,
\be
 T(u)=\sum_{a\in\Bc}t_a(u)\,a,
\ee
where $t_a\colon\Omega\to\Fbb$ for each $a\in\Ac$, with $\Omega\subseteq\Fbb$ a suitable domain. 
As a space of scalar functions, we have
\be
 \Fc: =\mathrm{span}_\Fbb\,\{t_a\colon\Omega\to\Fbb\,|\,a\in\Bc\}
\ee
and let $\Bc_T$ denote a basis for $\Fc$. Relative to this, we can decompose the transfer operator as
\be
 T(u)=\sum_{f\in\Bc_T}f(u)\,a_f,
\ee
where $a_f\in\Ac$ for each $f\in\Bc_T$. We also introduce
\be
 \Hc_T: =\mathrm{span}_\Fbb\,\{a_f\,|\,f\in\Bc_T\}
\ee
and note that $\dim\Hc_T\leq\dim\Fc$.
Related to this, we have the $\Ac$-subalgebra
\be
 \Ac_T: =\langle\Hc_T\rangle_\Ac.
\ee 

Any nonzero element $h\in\Hc_T$ (such that $h\notin\Fbb\mathds{1}$ for $\Ac$ unital) 
could conceivably be regarded as the hamiltonian of the model, 
although the physics would likely guide the selection.
The hamiltonians $H_{n,u_*}$ in (\ref{equ:FilteredHamFromTrans}) are examples of such algebra elements.
Here, we refer to the elements of $\Hc_T$ generally as {\em hamiltonians}
and to certain preferred choices, $h\in\Hc_T$, as the {\em principal hamiltonians}.
In general distinct from this notion of hamiltonians, we view the centraliser of $h$ in $\Ac$, here denoted by 
$\Cc_\Ac(h)$, as the space (in fact, subalgebra) of $h$-\textit{conserved quantities}.

If the model is integrable in the sense that
\be
 [T(u),T(v)]=0,\qquad \forall\, u,v\in\Omega,
\ee
then $\Ac_T$ is commutative, and given a choice of $h$, every hamiltonian is an $h$-conserved quantity.
The converse need not be true; that is, an $h$-conserved quantity need not be a hamiltonian of the model, 
as we generally only have the inclusion $\Hc_T\subseteq\Cc_\Ac(h)$.
Also, noting that $\Cc_\Ac(\Hc_T)$ is a subalgebra of $\Ac$, we view its centre, $\Zc(\Cc_\Ac(\Hc_T))$, 
as the subalgebra of {\em integrals of motion of the model}. 
In summary, these various algebras of the integrable model are related as
\be
 \Ac_T\subseteq\Zc(\Cc_\Ac(\Hc_T))\subseteq\Cc_\Ac(\Hc_T)\subseteq\Cc_\Ac(h)\subseteq\Ac.
\ee

\subsection{Polynomial integrability}
\label{Sec:Pol}

Let $\Ac$ be an associative algebra and $\{T(u)\,|\,u\in\Omega\}\subseteq\Ac$ a one-parameter family of 
commuting elements, with $\Omega\subseteq\Fbb$. If there exists $b\in\Ac$ such that, for all $u\in\Omega$,
\be
 T(u)\in\Fbb[b],
\label{Tub}
\ee
then $[T(u),T(v)]=0$ is trivially satisfied. If the commuting family corresponds to the transfer operators of some 
model, e.g.~$\{T_n(u)\,|\,u\in\Omega\}\subseteq A_n$ in the planar-algebraic setting above, and 
satisfies (\ref{Tub}) for some $b_n\in A_n$, then we say that the model is {\em polynomially integrable}. 
A polynomially integrable model may thus be seen as {\em trivially integrable}.

It may seem surprising at first that a nontrivial physical model could be polynomially integrable,
but that is exactly what we find. In fact, a hamiltonian operator may play the role of $b$ in (\ref{Tub}),
in which case {\em the transfer operators are polynomial in the hamiltonian}. 
Examples of this will be provided in the following.

\section{Endomorphism algebras}
\label{Sec:Linear}

Having outlined our integrability framework for planar algebras in Section \ref{Sec:Planar}, we digress with 
(i) an examination of parameterised families of linear operators, introducing the notion of {\em spurious} 
spectral degeneracies, of which there can only be finitely many,
(ii) a discussion of parameter-dependent matrices (respectively algebra generators) expressible as polynomials 
in a parameter-independent matrix (respectively algebra generator),
and (iii) a brief review of some basic properties of cellular algebras.
A key result of this section is Proposition \ref{prop:Asemi} which gives necessary and sufficient conditions 
for a transfer operator to be polynomial in a single integral of motion, c.f.~\eqref{Tub}.
Other results include Proposition \ref{prop:SpDegenAx} and 
Corollary \ref{Cor:diagsemi}, which are used in Section \ref{Sec:PolTL} and Section \ref{Sec:TLB}, respectively.
In the following, we specialise $\Fbb$ to $\Cbb$.

\subsection{Spectral degeneracies}
\label{Sec:Spectral}

Let $A(x)$ be an $x$-dependent linear map $\Cbb^n\to\Cbb^n$ that admits a matrix representation
where every matrix element is polynomial in $x$, and denote the associated characteristic polynomial by
\be
 c(x,\lambda): =\det(\lambda\,\id-A(x)).
\ee
By construction, $c(x,\lambda)$ is polynomial in $x$ and $\lambda$, 
and the eigenvalues $\lambda$ will in general depend on $x$.
If the spectrum of $A(x)$ for $x$ an indeterminate possesses fewer than $n$ distinct eigenvalues, 
we say that the corresponding spectral degeneracies are \textit{permanent}. 
Let $l$ denote the number of distinct eigenvalues for $x$ an indeterminate.
If $A(x_0)$ has fewer than $l$ distinct eigenvalues for some value $x_0\in\Cbb$, 
we refer to the additional, non-permanent spectral degeneracies as \textit{spurious}.
As Proposition \ref{prop:SpDegenAx} below asserts, 
there are only finitely many values $x_0$ for which spurious degeneracies arise. 
Here and elsewhere, we allow ``finitely many" to mean zero.

First, recall that a polynomial $f(x,y)$ is said to be \textit{irreducible} if it cannot be expressed as the product of 
two non-constant polynomials.
For such a polynomial, the equation $f(x,y)=0$ determines an algebraic function $y(x)$, and we recall that
an algebraic function has a finite number of branches and at most algebraic singularities.

Also, two polynomials $f(x,y)$ and $g(x,y)$ are \textit{relatively prime} if they do not share a common 
non-constant polynomial factor. In this case, there are only finitely many values $x_0$ 
for which the $y$-polynomials $f(x_0,y)$ and $g(x_0,y)$ have a common root 
(see for example Theorem 3 on page 300 of \cite{Ahlfors79}). 
We can now establish the following result which is used in Section \ref{Sec:PolTL}.
\begin{prop}\label{prop:SpDegenAx}
The spectrum of $A(x)$ is non-spurious for all but finitely many $x\in\Cbb$.
\end{prop}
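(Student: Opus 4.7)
The plan is to work with the characteristic polynomial $c(x,\lambda)=\det(\lambda\,\id-A(x))$, which by hypothesis lies in $\Cbb[x,\lambda]$ and is monic of degree $n$ in $\lambda$. Since $\Cbb[x,\lambda]$ is a UFD, I factorise
\[
  c(x,\lambda)=\prod_{i=1}^{r}p_i(x,\lambda)^{m_i},
\]
where the $p_i$ are pairwise distinct irreducible polynomials. Monicity of $c$ in $\lambda$ forces each $p_i$ to have positive degree in $\lambda$ (otherwise $p_i$ would be a polynomial in $x$ alone, which would then have to divide the leading $\lambda$-coefficient of $c$, namely $1$), so after rescaling I may assume each $p_i$ is monic of some degree $d_i\geq 1$ in $\lambda$.

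Next, working over the algebraic closure of $\Cbb(x)$, I observe that each $p_i$ is separable in $\lambda$ (as $\Cbb(x)$ has characteristic zero), so has exactly $d_i$ distinct roots, and that distinct irreducible factors $p_i,p_j$ share no common root (otherwise they would share a non-trivial common factor in $\Cbb[x,\lambda]$, contradicting irreducibility). Consequently, the number of distinct eigenvalues of $A(x)$ for $x$ an indeterminate is $l=\sum_{i=1}^{r}d_i$.

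The crucial step is to argue that $A(x_0)$ has $l$ distinct eigenvalues for all but finitely many $x_0\in\Cbb$. To this end, I apply the relative-primality fact recalled immediately before the proposition to two families of pairs: first, each pair $(p_i,p_j)$ with $i\neq j$ is relatively prime as elements of $\Cbb[x,\lambda]$, so $p_i(x_0,\lambda)$ and $p_j(x_0,\lambda)$ share a root for only finitely many $x_0$; second, for each $i$, the polynomial $\partial_\lambda p_i$ has strictly smaller $\lambda$-degree than the irreducible $p_i$, so cannot be divisible by $p_i$, hence $p_i$ and $\partial_\lambda p_i$ are relatively prime and $p_i(x_0,\lambda)$ has a repeated root for only finitely many $x_0$. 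The finite union of all these bad values is itself finite; for any $x_0$ outside it, the polynomial $\prod_i p_i(x_0,\lambda)$ has $\sum_i d_i=l$ distinct roots, and the same must then be true for $c(x_0,\lambda)$.

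The main obstacle is conceptual rather than computational: recognising that non-spuriousness at $x_0$ is equivalent to the squarefree part $q(x,\lambda):=\prod_i p_i(x,\lambda)$ remaining squarefree upon specialisation $x\mapsto x_0$. Once framed in this way, the conclusion is immediate from the cited result on common roots of relatively prime polynomials, together with the (characteristic-zero) separability that guarantees distinct roots of the individual irreducible factors at the generic point.
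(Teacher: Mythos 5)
Your proof is correct and follows essentially the same route as the paper's: factor $c(x,\lambda)$ into irreducibles and invoke the finiteness of common roots of relatively prime polynomials in $\Cbb[x,\lambda]$. The only difference is that where the paper disposes of degeneracies \emph{within} a single irreducible factor by appealing to general properties of algebraic functions (finitely many branches, at most algebraic singularities), you make this step explicit by pairing each $p_i$ with $\partial_\lambda p_i$ -- a cleaner justification of the same point, helped by your normalisation of the factors to be monic in $\lambda$.
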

\begin{proof}
The characteristic polynomial of $A(x)$ admits a factorisation of the form
\be
 c(x,\lambda) = \prod_{i=1}^{t}c_i(x,\lambda),
\ee
where $c_i(x,\lambda)\in\Cbb[x,\lambda]$ is irreducible for every $i=1,\ldots,t$, and where $1\leq t\leq n$.
For $i\neq j$, $c_i(x,\lambda)$ and $c_j(x,\lambda)$ are relatively prime or scalar multiples of each other.
In the latter case, the pairing only contributes permanent degeneracies to the spectrum of $A(x)$.
Because each pair of relatively prime factors (of which there are finitely many) 
will contribute finitely many spurious degeneracies, and since each of the $t$ irreducible factors 
individually can contribute finitely many spurious degeneracies, the result follows.
\end{proof}

\subsection{Block Toeplitz}
\label{Sec:Toeplitz}

We let $\Mc_n(\Rc)$ denote the set of $n\times n$ matrices with elements from $\Rc$. 
For each $z\in\Mc_n(\Cbb)$, we denote the centraliser of $z$ in $\Mc_n(\Cbb)$ by $\Cc(z)$ and let $c_z$ 
and $m_z$ denote the characteristic and minimal polynomial, respectively. Recall that $z$ is {\em non-derogatory} 
if $c_z=m_z$. Equivalently, it is non-derogatory if and only if every eigenspace of $z$
is at most one-dimensional; and if and only if $z$ generates its own centraliser, that is, $\Cc(z)=\Cbb[z]$,
see e.g.~\cite{NP95}. Note that if $z\in\Mc_n(\Cbb)$ is similar to a non-derogatory matrix, then $z$ is 
non-derogatory. We say that a {\em linear operator} on $\Cbb^n$ is non-derogatory if there exists a basis relative 
to which its matrix representation is non-derogatory.

A matrix $J$ is in \textit{Jordan canonical form} (JCF) if it is block-diagonal with Jordan blocks on the diagonal, 
$J=\mathrm{diag}(J_{r_1}(\lambda_1),\ldots,J_{r_s}(\lambda_s))$,
where
\begin{align} 
    J_r(\lambda) : = 
    \left[
    \begin{array}{ccccc}
         \lambda & 1 & 0 & \ldots & 0\\
         0 & \lambda & 1  & \ddots &\vdots\\
         \vdots & \ddots & 
         \ddots  & \ddots & 0 \\ 
         0 & \ldots & 0 & \lambda & 1\\
         0 & \ldots & 0 & 0 & \lambda
    \end{array}
    \right]\in\Mc_r(\Rc).
\end{align}
For $f$ an analytic function, we then have 
$f(J) = \mathrm{diag}\left(f(J_{r_1}(\lambda_1)),\ldots,f(J_{r_s}(\lambda_s))\right)$, where
\begin{align}\label{equ:JBFunction}
    f\left(J_r(\lambda)\right) = 
    \left[
    \begin{array}{ccccc}
         f(\lambda) & f'(\lambda) & \frac{f''(\lambda)}{2} & \ldots & \frac{f^{(r-1)}(\lambda)}{(r-1)!}\\
         0 & f(\lambda) & f'(\lambda)  & \ddots
         &\vdots\\
         \vdots& \ddots
         & \ddots  & \ddots
         & \frac{f''(\lambda)}{2} \\[.1cm]
         0 & \ldots & 0 & f(\lambda) & f'(\lambda)\\[.1cm]
         0 & \ldots & 0 & 0 & f(\lambda)
    \end{array}
    \right].
\end{align}

An upper-triangular Toeplitz matrix in $\Mc_r(\Rc)$ is characterised by a tuple in 
$\Rc^{\times r}$, $\ab_r= (a_1, \ldots, a_r)$, as
    \begin{align}\label{equ:Tdef}
    T(\ab_r)= 
    \left[
    \begin{array}{cccc}
         a_1 & a_2 & \ldots & a_{r}\\
         0 & a_1 & \ddots  & \vdots\\[-.1cm]
         \vdots& \ddots
         & \ddots & a_2  \\ 
         0 & \ldots & 0 & a_1
    \end{array}
    \right].
    \end{align}
We refer to a matrix of the form
\be
 \mathrm{diag}\left(T(\mathbf{a}_{r_1}^{[1]}), \ldots,T(\mathbf{a}_{r_s}^{[s]}) \right)
\ee
as \textit{block-diagonal upper-triangular Toeplitz} (BT) and say it has {\em block-partitioning} $r_1,\ldots,r_s$.
Note that a block-partitioning need not be unique. Indeed, the $2\times2$ identity matrix admits
the block-partitionings $2$ and $1,1$.
\begin{lem}\label{lem:Bx}
Let $B(x)\in\Mc_n(\Cbb(x))$.
Then, there exists $b\in\Mc_n(\Cbb)$ in JCF such that $B(x)\in\Cbb(x)[b]$ if and only if $B(x)$ is BT.
\end{lem}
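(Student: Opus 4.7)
The plan is to prove both implications directly, using formula \eqref{equ:JBFunction} for polynomial functions of Jordan blocks in both directions, together with Hermite interpolation to construct the polynomial in the converse.

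For the forward direction ($\Rightarrow$), I would start by writing $b = \mathrm{diag}(J_{r_1}(\lambda_1), \ldots, J_{r_s}(\lambda_s))$ and assuming $B(x) = p(b)$ for some $p(y) \in \Cbb(x)[y]$. Since $b$ is block-diagonal, so is $p(b)$, with $i$-th diagonal block $p(J_{r_i}(\lambda_i))$. Applying \eqref{equ:JBFunction} with $f = p$ (a polynomial is analytic) shows that each such block equals the upper-triangular Toeplitz matrix
\begin{equation*}
T\bigl(\bigl(p(\lambda_i),\, p'(\lambda_i),\, \tfrac{p''(\lambda_i)}{2!},\, \ldots,\, \tfrac{p^{(r_i-1)}(\lambda_i)}{(r_i-1)!}\bigr)\bigr),
\end{equation*}
whose entries lie in $\Cbb(x)$ because $p^{(j)}(\lambda_i) \in \Cbb(x)$ for each $j$. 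Hence $B(x)$ is BT with block-partitioning $r_1, \ldots, r_s$.

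For the converse ($\Leftarrow$), I would write $B(x) = \mathrm{diag}(T(\mathbf{a}_{r_1}^{[1]}), \ldots, T(\mathbf{a}_{r_s}^{[s]}))$ with tuples $\mathbf{a}_{r_i}^{[i]} = (a_1^{[i]}(x), \ldots, a_{r_i}^{[i]}(x)) \in \Cbb(x)^{r_i}$, then pick any pairwise distinct $\lambda_1, \ldots, \lambda_s \in \Cbb$ and set $b := \mathrm{diag}(J_{r_1}(\lambda_1), \ldots, J_{r_s}(\lambda_s))$, a fixed matrix in JCF independent of $x$. Because the nodes $\lambda_i$ are distinct, Hermite interpolation supplies a unique polynomial $p(y)$ of degree less than $n = r_1 + \cdots + r_s$ satisfying
\begin{equation*}
\frac{p^{(k-1)}(\lambda_i)}{(k-1)!} = a_k^{[i]}(x), \qquad 1 \leq i \leq s,\; 1 \leq k \leq r_i.
\end{equation*}
Since the Hermite basis polynomials depend only on the constants $\lambda_i$ and their multiplicities, the coefficients of $p$ (in the monomial basis in $y$) are $\Cbb$-linear combinations of the prescribed values $a_k^{[i]}(x) \in \Cbb(x)$, and therefore $p \in \Cbb(x)[y]$. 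A second application of \eqref{equ:JBFunction} then yields $p(b) = B(x)$ block by block.

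The most delicate step is verifying that the Hermite interpolant has coefficients in $\Cbb(x)$ rather than only in some larger extension; this is ensured by choosing the interpolation nodes $\lambda_i$ as fixed elements of $\Cbb$, so that the explicit Hermite formulas enter the data $a_k^{[i]}(x)$ linearly with scalar coefficients. Beyond this observation, the argument reduces to two direct applications of \eqref{equ:JBFunction} to convert between the polynomial-in-$b$ description and the BT block structure.
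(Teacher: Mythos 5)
Your proof is correct. The forward direction is essentially identical to the paper's: write $b$ in JCF, apply \eqref{equ:JBFunction} blockwise, and observe that each block of $p(b)$ is upper-triangular Toeplitz. The converse, however, takes a genuinely different route. The paper also builds $b=\mathrm{diag}(J_{r_1}(\lambda_1),\ldots,J_{r_s}(\lambda_s))$ with distinct $\lambda_i$, but then argues abstractly: $b$ is non-derogatory, hence generates its own centraliser, and $[b,B(x)]=0$ because of the shared block-partitioning, so $B(x)\in\Cbb(x)[b]$. You instead construct the polynomial explicitly by Hermite interpolation, prescribing $p^{(k-1)}(\lambda_i)/(k-1)!=a_k^{[i]}(x)$ and checking via the Hermite basis that the interpolant has coefficients in $\Cbb(x)$; a second application of \eqref{equ:JBFunction} then gives $p(b)=B(x)$. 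Your version is more self-contained (it avoids the cited fact that non-derogatory matrices generate their centraliser, and sidesteps the small extension-of-scalars point that the centraliser statement must be read over $\Cbb(x)$ rather than $\Cbb$) and yields an explicit degree bound $\deg p<n$; the paper's version is shorter and reuses the non-derogatory machinery it needs elsewhere (e.g.\ in Proposition \ref{prop:Asemi}). Your attention to why the interpolant lands in $\Cbb(x)[y]$ — the nodes are fixed complex constants, so the data enter linearly with scalar coefficients — is exactly the right point to verify and is handled correctly.
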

\begin{proof}
As to ``$\Rightarrow$'', we have $B(x)=p_x(b)$, where $p_x$ is a polynomial with $x$-dependent coefficients, and
\begin{align}
        b = \mathrm{diag}\left(J_{r_1}(\lambda_1),\ldots,J_{r_s}(\lambda_s)\right)
\end{align}
for some $r_1,\ldots,r_s\in\Nbb$ and $\lambda_1,\ldots,\lambda_s\in\Cbb$.
By \eqref{equ:JBFunction},
\begin{align}
        B(x) = \mathrm{diag}\left(p_x(J_{r_1}(\lambda_1)),\ldots,p_x(J_{r_s}(\lambda_s))\right),
\end{align}
and since each $p_x(J_{r_i}(\lambda_i))$ is an upper-triangular Toeplitz matrix, it follows that $B(x)$ is BT. 
As to ``$\Leftarrow$'', let
\begin{align}\label{bdiag}
        b = \mathrm{diag}\left(J_{r_1}(\lambda_1),\ldots,J_{r_s}(\lambda_s)\right),
\end{align}
where $r_1,\ldots,r_s$ is a block-partitioning of $B(x)$ and where $\lambda_1,\ldots,\lambda_s\in\Cbb$ are all 
distinct. Since $b$ is non-derogatory, it generates its own centraliser. Because of the shared block-partitioning, 
we have $[b,B(x)]=0$. It follows that $B(x)\in\Cbb(x)[b]$.
\end{proof}
\noindent
\textbf{Remark.} The proof of Lemma \ref{lem:Bx} holds for {\em any} block-partitioning of $B(x)$. 
Indeed, the key properties of the matrix $b$ constructed in (\ref{bdiag}) are that it is non-derogatory and 
that $[b,B(x)]=0$.
\begin{prop}\label{prop:BxMn}
Let $B(x)\in\Mc_n(\Cbb(x))$. Then, there exists $b\in\Mc_n(\Cbb)$ such that $B(x)\in\Cbb(x)[b]$
if and only if there exists $S\in\Mc_n(\Cbb)$ such that $S^{-1}B(x)S$ is BT.
\end{prop}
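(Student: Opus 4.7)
The plan is to reduce both directions to Lemma \ref{lem:Bx} by conjugating with a change-of-basis matrix $S$. The content of Proposition \ref{prop:BxMn} over Lemma \ref{lem:Bx} is exactly the observation that the Jordan-form hypothesis on $b$ can be dropped at the cost of allowing $B(x)$ to be BT only up to similarity. Since polynomial expressions and centralisers are both preserved under simultaneous conjugation, the rest should be essentially formal.

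For the ``$\Rightarrow$'' direction, I would start with the assumption $B(x) \in \Cbb(x)[b]$, so $B(x) = p_x(b)$ for some polynomial $p_x \in \Cbb(x)[y]$. Pick $S \in \Mc_n(\Cbb)$ such that $J := S^{-1}bS$ is in JCF (which is always possible over $\Cbb$). Then
\be
 S^{-1}B(x)S \;=\; S^{-1}p_x(b)S \;=\; p_x(S^{-1}bS) \;=\; p_x(J) \;\in\; \Cbb(x)[J],
\ee
so $S^{-1}B(x)S$ is a polynomial in a JCF matrix. By the ``$\Rightarrow$'' direction of Lemma \ref{lem:Bx} applied to $S^{-1}B(x)S$, this matrix is BT, as required.

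For the ``$\Leftarrow$'' direction, suppose $S^{-1}B(x)S$ is BT. By the ``$\Leftarrow$'' direction of Lemma \ref{lem:Bx}, there exists some $b' \in \Mc_n(\Cbb)$ in JCF with $S^{-1}B(x)S \in \Cbb(x)[b']$; that is, $S^{-1}B(x)S = q_x(b')$ for some $q_x \in \Cbb(x)[y]$. Conjugating back,
\be
 B(x) \;=\; S q_x(b') S^{-1} \;=\; q_x(S b' S^{-1}),
\ee
so setting $b := S b' S^{-1} \in \Mc_n(\Cbb)$ yields $B(x) \in \Cbb(x)[b]$.

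There isn't really a ``hard part'' here — the proposition is essentially a corollary of Lemma \ref{lem:Bx} once one observes that the property ``$B(x)$ lies in the polynomial algebra generated by some matrix'' is conjugation-invariant, while ``being BT'' is not. The only subtlety worth flagging is that $b$ in the statement is allowed to be arbitrary (not in JCF), which is exactly what the conjugation $b = S b' S^{-1}$ exploits; one should therefore explicitly note that $p_x(S^{-1}bS) = S^{-1}p_x(b)S$ for any polynomial $p_x$ with coefficients in $\Cbb(x)$, which is the elementary fact driving both directions.
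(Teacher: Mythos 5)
Your proof is correct and follows essentially the same route as the paper: both directions reduce to Lemma \ref{lem:Bx} by conjugating to Jordan form and using that $S^{-1}p_x(b)S = p_x(S^{-1}bS)$. No gaps.
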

\begin{proof}
As to ``$\Rightarrow$'', we have $B(x)=p_x(b)$, where $p_x$ is a polynomial with $x$-dependent coefficients,
and let $S\in\Mc_n(\Cbb)$ such that $S^{-1}bS$ is in JCF. 
Then, $S^{-1}B(x)S=S^{-1}p_x(b)S=p_x(S^{-1}bS)$, so by Lemma \ref{lem:Bx}, $S^{-1}B(x)S$ is BT.
As to ``$\Leftarrow$'', let $S^{-1}B(x)S$ be BT. Then, by Lemma \ref{lem:Bx}, there exists $\bar{b}\in\Mc_n(\Cbb)$
such that $S^{-1}B(x)S\in\Cbb(x)[\bar{b}]$. It follows that $B(x)\in\Cbb(x)[b]$, where $b=S\bar{b}S^{-1}$.
\end{proof}
\begin{cor}\label{cor:Cxdiag}
Let $\{C(x)\in\Mc_n(\Cbb)\,|\,x\in\Omega\}$ be a one-parameter family of commuting 
and diagonalisable matrices, with $\Omega\subseteq\Cbb$ a suitable domain. 
Then, there exists $b\in\Mc_n(\Cbb)$ such that $C(x)\in\Cbb[b]$ for every $x\in\Omega$.
\end{cor}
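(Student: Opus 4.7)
The plan is to reduce the statement to Proposition~\ref{prop:BxMn} (or, equivalently, to the easy direction of Lemma~\ref{lem:Bx}) by exploiting the classical fact that a commuting family of diagonalisable matrices is simultaneously diagonalisable. First, I would invoke this standard result to produce a single invertible matrix $S\in\Mc_n(\Cbb)$, independent of $x$, such that $D(x):=S^{-1}C(x)S$ is diagonal for every $x\in\Omega$. In particular, each $D(x)$ is block-diagonal upper-triangular Toeplitz with the trivial block-partitioning $1,1,\ldots,1$.

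Next, I would choose $\bar b\in\Mc_n(\Cbb)$ to be a fixed diagonal matrix with $n$ pairwise distinct diagonal entries $\lambda_1,\ldots,\lambda_n$. Such a $\bar b$ is non-derogatory, so it generates its own centraliser; concretely, $\Cbb[\bar b]$ coincides with the algebra of all diagonal matrices in $\Mc_n(\Cbb)$. Hence, for each $x\in\Omega$, Lagrange interpolation at $\lambda_1,\ldots,\lambda_n$ yields a polynomial $p_x\in\Cbb[y]$ with $p_x(\lambda_i)=D(x)_{ii}$ for all $i$, and therefore $D(x)=p_x(\bar b)$. Setting $b:=S\bar bS^{-1}$, conjugation gives $C(x)=S D(x)S^{-1}=p_x(b)$, so $C(x)\in\Cbb[b]$ for every $x\in\Omega$, as required.

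The argument is essentially an $x$-uniform application of the ``$\Leftarrow$'' direction of Lemma~\ref{lem:Bx}, together with simultaneous diagonalisability; the only thing to be careful about is that $S$, and hence $b$, is chosen \emph{once and for all}, while the polynomial $p_x$ expressing $C(x)$ in terms of $b$ is allowed to depend on $x$. This is automatic once simultaneous diagonalisability is invoked, since the common eigenbasis $S$ does not depend on $x$. The only genuine ingredient beyond the already-established Lemma~\ref{lem:Bx} is thus the simultaneous diagonalisability of the commuting family $\{C(x)\}_{x\in\Omega}$, which is the main (and very mild) step of the proof.
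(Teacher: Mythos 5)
Your proof is correct and follows essentially the same route as the paper's: both invoke simultaneous diagonalisability of the commuting diagonalisable family to obtain an $x$-independent change of basis $S$, after which the diagonal matrices $S^{-1}C(x)S$ are trivially BT and the conclusion follows from Proposition~\ref{prop:BxMn} (whose ``$\Leftarrow$'' direction you have simply unpacked explicitly via a non-derogatory diagonal $\bar b$ and Lagrange interpolation).
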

\begin{proof}
Because the matrices $C(x)$ are diagonalisable and in involution, there exists a basis for $\Cbb^n$ consisting of 
$x$-independent common eigenvectors, see e.g.~\cite{Gantmacher59}. Let $S\in\Mc_n(\Cbb)$ be constructed 
by concatenating the basis vectors. Then, for every $x\in\Omega$, $S^{-1}C(x)S$ is diagonal, hence BT, 
and the result follows from Proposition \ref{prop:BxMn}.
\end{proof}
\noindent
\textbf{Remark.} 
Corollary \ref{cor:Cxdiag} implies that {\em any} model described by a family of commuting and 
diagonalisable transfer matrices is {\em polynomially integrable}.

\medskip
As discussed in Section \ref{Sec:Ass} below, Proposition \ref{prop:BxMn} and Corollary \ref{cor:Cxdiag} elevate 
from matrices to elements of finite-dimensional unital associative semisimple algebras.

\subsection{Semisimple algebras}
\label{Sec:Ass}

Let $\Ac$ be a finite-dimensional unital associative algebra, and let $\mathrm{rad}(\Ac)$ denote its radical.
Motivated by the discussion of square matrices in Section \ref{Sec:Toeplitz},
we say that $b\in\Ac$ is {\em non-derogatory} if it generates its own centraliser, that is, if
\be
 \Cc_\Ac(b)=\langle b\rangle_\Ac.
\ee
Note that the inclusion $\Cc_\Ac(b)\supseteq\langle b\rangle_\Ac$ holds for all $b\in\Ac$.
If $b$ is non-derogatory, then any $c\in\Cc_\Ac(b)$ admits a unique expression of the form
\be
 c=\sum_{i=0}^{d-1}c_ib^i,\qquad d: =\dim\!\big(\langle b\rangle_\Ac\big),
\label{ccb}
\ee
where $c_0,\dots,c_{d-1}$ are scalars.

Recall that $\Ac$ has finitely many non-isomorphic irreducible modules, 
$L_1,\ldots,L_r$, each of which is finite-dimensional, and
\be
 \Ac/\mathrm{rad}(\Ac)\cong\bigoplus_{i=1}^r\mathrm{End}_{\Cbb}(L_i).
\ee
Letting $\rho_i$ denote the representation corresponding to $L_i$ for each $i=1,\ldots,r$,
the homomorphism
\be
 \rho: =\bigoplus_{i=1}^r\rho_i\colon \Ac\to\bigoplus_{i=1}^r\mathrm{End}_{\Cbb}(L_i)
\label{rho}
\ee
is indeed surjective with kernel given by $\mathrm{rad}(\Ac)$.

We also recall that $\Ac$ is {\em semisimple} if $\mathrm{rad}(\Ac)=\{0\}$.  Equivalently, $\Ac$ is semisimple if and 
only if
\be
 \Ac\cong\bigoplus_{i=1}^r\,(\dim L_i)\,L_i
\ee
as $\Ac$-modules; if and only if
\be
 \dim\Ac=\sum_{i=1}^r(\dim L_i)^2;
\ee
and if and only if any $\Ac$-representation, including the regular representation,
is completely reducible. 
In the following, we let
\be
 L: =\bigoplus_{i=1}^rL_i.
\ee
\begin{lem}\label{lem:nonder}
Let $\Ac$ be a finite-dimensional unital associative semisimple algebra, and let $b\in\Ac$. 
Then, $b$ is non-derogatory if and only if $\rho(b)$ is non-derogatory.
\end{lem}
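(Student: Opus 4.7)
The plan is to exploit the semisimplicity of $\Ac$ to turn $\rho$ into an algebra isomorphism, then reduce both notions of ``non-derogatory'' to one and the same spectral condition on the components $\rho_i(b)$.

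First, since $\mathrm{rad}(\Ac)=\{0\}$, the homomorphism $\rho$ of (\ref{rho}) is a unital algebra isomorphism $\Ac\xrightarrow{\sim}\bigoplus_{i=1}^r\mathrm{End}_\Cbb(L_i)$. It therefore carries $\Cc_\Ac(b)$ onto $\Cc_{\bigoplus_i\mathrm{End}(L_i)}(\rho(b))$ and $\langle b\rangle_\Ac$ onto $\Cbb[\rho(b)]$, so $b$ being non-derogatory in $\Ac$ is equivalent to $\rho(b)=(\rho_1(b),\ldots,\rho_r(b))$ being non-derogatory as an element of $\bigoplus_i\mathrm{End}(L_i)$. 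Because $\rho(b)$ is block-diagonal in this decomposition, its centraliser there is simply $\bigoplus_i\Cc_{\mathrm{End}(L_i)}(\rho_i(b))$, of dimension $\sum_i\dim\Cc_{\mathrm{End}(L_i)}(\rho_i(b))$, while $\dim\Cbb[\rho(b)]=\deg m_{\rho(b)}=\deg\mathrm{lcm}(m_{\rho_1(b)},\ldots,m_{\rho_r(b)})$.

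Next, I would use the elementary inequalities $\dim\Cc_{\mathrm{End}(L_i)}(\rho_i(b))\ge\deg m_{\rho_i(b)}$ (from $\Cbb[\rho_i(b)]\subseteq\Cc_{\mathrm{End}(L_i)}(\rho_i(b))$) and $\deg\mathrm{lcm}(m_{\rho_1(b)},\ldots,m_{\rho_r(b)})\le\sum_i\deg m_{\rho_i(b)}$ to conclude that $\rho(b)$ is non-derogatory in $\bigoplus_i\mathrm{End}(L_i)$ if and only if both inequalities saturate, which is equivalent to the two conditions (a) each $\rho_i(b)$ is non-derogatory as an element of $\mathrm{End}_\Cbb(L_i)$, i.e. $c_{\rho_i(b)}=m_{\rho_i(b)}$, and (b) the polynomials $m_{\rho_1(b)},\ldots,m_{\rho_r(b)}$ are pairwise coprime. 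On the other hand, $\rho(b)$ viewed as a linear operator on $L=\bigoplus_i L_i$ has minimal polynomial $\mathrm{lcm}(m_{\rho_1(b)},\ldots,m_{\rho_r(b)})$ and characteristic polynomial $\prod_i c_{\rho_i(b)}$, so it is non-derogatory iff these coincide, iff (a$'$) each $\rho_i(b)$ is non-derogatory on $L_i$ and (b$'$) the polynomials $c_{\rho_1(b)},\ldots,c_{\rho_r(b)}$ are pairwise coprime. Under (a)/(a$'$), coprimality of the minimal polynomials is equivalent to coprimality of the characteristic polynomials, so (a)+(b) is equivalent to (a$'$)+(b$'$), which closes the proof.

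The main subtlety I anticipate is bookkeeping the two different centralisers attached to $\rho(b)$: the ``small'' one inside $\bigoplus_i\mathrm{End}(L_i)$, which is block-diagonal and splits as $\bigoplus_i\Cc_{\mathrm{End}(L_i)}(\rho_i(b))$, and the ``large'' one inside $\mathrm{End}(L)$, which can contain extra off-diagonal intertwiners whenever distinct $\rho_i(b)$ share an eigenvalue. It is essential to compare the right centraliser against $\Cbb[\rho(b)]$ in each direction; once this is done cleanly, the remainder is a routine degree count.
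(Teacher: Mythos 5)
Your argument is correct, and it reaches the statement by a genuinely different route from the paper's. The paper's proof is a direct transport argument: given $\psi$ commuting with $\rho(b)$, it forms $\rho^{-1}(\psi)$, which then commutes with $b$, invokes non-derogatory-ness to write it as a polynomial in $b$, and pushes the polynomial identity back through $\rho$ (and symmetrically for the converse, using $c\mapsto\rho(c)$). That is shorter, but as written it passes silently over the very subtlety you isolate at the end: $\rho^{-1}(\psi)$ only makes sense when $\psi$ lies in the image $\bigoplus_{i}\mathrm{End}_{\Cbb}(L_i)$, whereas ``$\rho(b)$ non-derogatory'' in the matrix sense refers to the centraliser in the full $\mathrm{End}_{\Cbb}(L)$, which can contain off-diagonal intertwiners when distinct blocks share eigenvalues. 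Your degree count resolves this cleanly by showing that both readings collapse to the same condition -- each $\rho_i(b)$ non-derogatory on $L_i$ with pairwise coprime minimal (equivalently characteristic) polynomials -- so that in the non-derogatory case the full centraliser is automatically block-diagonal and the paper's transport step is legitimate. What the paper's approach buys is brevity and the general principle that an algebra isomorphism preserves centralisers and generated subalgebras; what yours buys is an airtight treatment of the two ambient algebras plus an explicit blockwise criterion, which is in effect the same condition (all Jordan blocks carrying distinct eigenvalues) used to build the element $\psi$ in the proof of Proposition \ref{prop:Asemi}.
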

\begin{proof}
As to ``$\Rightarrow$'', let $b$ be non-derogatory and $\psi\in\mathrm{End}_{\Cbb}(L)$
such that $\psi\rho(b)=\rho(b)\psi$. 
Since $\Ac$ is semisimple, $\rho$ is an isomorphism, so $\rho^{-1}(\psi)b=b\rho^{-1}(\psi)$. 
Since $b$ is non-derogatory, it follows that there exist scalars $c_i$ such that
\be
 \rho^{-1}(\psi)=\sum_ic_ib^i,
\ee
hence
\be
 \psi=\sum_ic_i\rho(b)^i,
\ee
so $\rho(b)$ is non-derogatory.
As to ``$\Leftarrow$'', let $\rho(b)$ be non-derogatory and $c\in\Cc_\Ac(b)$.
Since $bc=cb$ and $\rho$ is a homomorphism, we have $\rho(b)\rho(c)=\rho(c)\rho(b)$.
Since $\rho(b)$ is non-derogatory, it follows that there exist $\gamma_i$ such that
\be
 \rho(c)=\sum_i\gamma_i\rho(b)^i,
\ee
hence $c=\sum_i\gamma_ib^i\in\langle b\rangle_\Ac$, so $b$ is non-derogatory.
\end{proof}
We now have the following algebraic version of Proposition \ref{prop:BxMn}.
\begin{prop}\label{prop:Asemi}
Let $\Ac$ be a finite-dimensional unital associative semisimple algebra over $\Cbb$, and let $U(x)\in\Ac$.
Then, there exists $x$-independent $b\in\Ac$ such that $U(x)\in\Cbb(x)[b]$ if and only if there exists
an $x$-independent basis for $L$ such that the matrix representation of $\rho(U(x))$ is BT.
\end{prop}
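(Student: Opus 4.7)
The plan is to leverage the semisimplicity of $\Ac$ to reduce the statement to a block-wise application of Lemma \ref{lem:Bx} and Lemma \ref{lem:nonder}. Since $\Ac$ is semisimple, $\rho$ is an isomorphism onto $\bigoplus_{i=1}^{r} \mathrm{End}_{\Cbb}(L_i)$, so $U(x) \in \Cbb(x)[b]$ if and only if $\rho(U(x)) \in \Cbb(x)[\rho(b)]$. Writing $U_i(x) := \rho_i(U(x))$, the element $\rho(U(x)) = \bigoplus_i U_i(x)$ is block-diagonal in any basis of $L$ of the form $\bigsqcup_i \Bc_i$ with $\Bc_i$ a basis of $L_i$, and it is BT in such a basis if and only if each $U_i(x)$ is BT in $\Bc_i$. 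This reduces the claim to matching the BT structure block by block.

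For the ``$\Rightarrow$'' direction, given $U(x) = p_x(b)$, I choose each $\Bc_i$ so that $\rho_i(b)$ is in JCF. Then by the forward direction of Lemma \ref{lem:Bx} applied to each block, $U_i(x) = p_x(\rho_i(b))$ is BT for every $i$, and concatenating the bases produces an $x$-independent basis of $L$ in which $\rho(U(x))$ is BT.

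For the ``$\Leftarrow$'' direction, I assume $\rho(U(x))$ is BT in an $x$-independent basis $\bigsqcup_i \Bc_i$, so each $U_i(x)$ is BT in $\Bc_i$. By the backward direction of Lemma \ref{lem:Bx}, using the freedom noted in the remark following it, I pick $\bar{b}_i \in \mathrm{End}_{\Cbb}(L_i)$ in JCF with any prescribed distinct eigenvalues such that $U_i(x) \in \Cbb(x)[\bar{b}_i]$. I choose these eigenvalues to be pairwise disjoint across different $i$, which makes $\bar{b} := \bigoplus_i \bar{b}_i$ a non-derogatory element of $\mathrm{End}_\Cbb(L) \cong \Mc_n(\Cbb)$ with $n := \dim L$. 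Setting $b := \rho^{-1}(\bar{b})$, Lemma \ref{lem:nonder} confirms that $b$ is non-derogatory in $\Ac$.

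To conclude, the non-derogatory property of $\rho(b)$ over $\Cbb$ persists over $\Cbb(x)$, giving $\Cc_{\Mc_n(\Cbb(x))}(\rho(b)) = \Cbb(x)[\rho(b)]$. Since each $U_i(x)$ commutes with $\bar{b}_i$, $\rho(U(x))$ commutes with $\rho(b)$, so $\rho(U(x)) \in \Cbb(x)[\rho(b)]$; applying $\rho^{-1}$ yields $U(x) \in \Cbb(x)[b]$. The main subtlety will be the pairwise-disjoint-spectrum choice for the $\bar{b}_i$: it is this disjointness that ensures a single polynomial in $\rho(b)$ can simultaneously reproduce all $U_i(x)$, rather than needing a separate polynomial on each irreducible summand.
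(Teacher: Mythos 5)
Your argument is correct and follows essentially the same route as the paper's proof: the forward direction passes to a JCF basis for $\rho(b)$ and invokes Lemma \ref{lem:Bx}, and the converse builds a non-derogatory constant element commuting with $\rho(U(x))$ and pulls it back through $\rho$ using Lemma \ref{lem:nonder}. The only substantive difference is that you work block by block over the Wedderburn summands and glue the $\bar{b}_i$ after shifting their spectra to be pairwise disjoint, whereas the paper constructs a single $\psi\in\mathrm{End}_{\Cbb}(L)$ in JCF relative to the given BT basis. Your variant buys something real: $\bar{b}=\bigoplus_i\bar{b}_i$ visibly lies in $\mathrm{im}(\rho)=\bigoplus_i\mathrm{End}_{\Cbb}(L_i)$, so $\rho^{-1}(\bar{b})$ is well defined, a point the paper's proof leaves unaddressed (its $\psi$ is only a priori an element of $\mathrm{End}_{\Cbb}(L)$, not necessarily of the image of $\rho$). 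The corresponding cost is that your ``$\Leftarrow$'' direction tacitly assumes the BT basis respects the decomposition $L=\bigoplus_i L_i$, while the statement allows an arbitrary basis of $L$; this is harmless for the intended applications (e.g.\ Corollary \ref{Cor:diagsemi}, where an adapted diagonalising basis always exists because $\rho(C(x))$ preserves each summand), but to recover the literal statement you would still need to argue that a BT basis for the block-diagonal operator $\rho(U(x))$ can be traded for an adapted one.
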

\begin{proof}
As to ``$\Rightarrow$'', let $b\in\Ac$ be $x$-independent such that $U(x)\in\Cbb(x)[b]$.
Then, there exist scalars $c_i(x)$ such that
\be
 U(x)=\sum_ic_i(x)b^i.
\ee
Since $\rho$ is an homomorphism, we have
\be
 \rho(U(x))=\sum_ic_i(x)\rho(b)^i,
\ee
and since $\rho(b)$ is $x$-independent, there exists an $x$-independent $L$-basis relative to which the 
matrix representation of $\rho(b)$ is in JCF. By Lemma \ref{lem:Bx}, the corresponding matrix representation 
of $\rho(U(x))$ is BT.
As to ``$\Leftarrow$'', let $B$ be an $x$-independent $L$-basis relative to which the matrix representation, 
$M$, of $\rho(U(x))$ is BT. Construct $\psi\in\mathrm{End}_{\Cbb}(L)$ such that its matrix 
representation, $P$, relative to $B$ is in JCF with block-partitioning matching that of $M$, and with all the Jordan 
blocks having distinct eigenvalues. This matrix representation is thus non-derogatory, so by 
Lemma \ref{lem:nonder}, $b: =\rho^{-1}(\psi)$ is non-derogatory. By the construction of $\psi$, we have $PM=MP$, 
hence $\psi\rho(U(x))=\rho(U(x))\psi$, and since $\rho$ is an isomorphism, 
it follows that $b\,U(x)=U(x)b$, so $U(x)\in\Cbb(x)[b]$.
\end{proof}
\noindent
The following algebraic counterpart to Corollary \ref{cor:Cxdiag} is used in Section \ref{Sec:TLB}.
\begin{cor}\label{Cor:diagsemi}
Let  $\mathcal{A}$ be as in Proposition \ref{prop:Asemi}, $\Omega\subseteq\Cbb$ a suitable domain,
and $\{C(x)\in \mathcal{A}\,|\,x\in\Omega\}$ a commutative family of operators 
such that there exists a basis relative to which the matrix representation $\rho(C(x))$ is diagonal for all $x$. 
Then, there exists $b\in \mathcal{A}$ such that $C(x)\in\Cbb[b]$ for every $x\in\Omega$.
\end{cor}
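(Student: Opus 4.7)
The plan is to mimic the proof of Corollary~\ref{cor:Cxdiag} at the algebra level, reducing the statement to a direct application of Proposition~\ref{prop:Asemi}. The hypothesis already supplies the very ingredient that Proposition~\ref{prop:Asemi} demands on its ``$\Leftarrow$'' side, namely an $x$-independent basis of $L$ relative to which $\rho(C(x))$ has a special triangular form for every $x\in\Omega$.

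First I would observe that any diagonal matrix is BT in a trivial way: taking the block-partitioning in which every block has size $1$, each block is a $1\times 1$ upper-triangular Toeplitz matrix. Consequently, the basis $B$ of $L$ provided by the hypothesis is an $x$-independent basis relative to which the matrix representation of $\rho(C(x))$ is BT for every $x\in\Omega$. This places $C(x)$ squarely in the setting of Proposition~\ref{prop:Asemi}.

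Next I would invoke Proposition~\ref{prop:Asemi} to conclude that there exists $b\in\mathcal{A}$ (which is $x$-independent, constructed as in the proof of that proposition by pulling back through $\rho^{-1}$ a non-derogatory endomorphism of $L$ whose JCF shares the block-partitioning of $\rho(C(x))$ and has all distinct eigenvalues) such that $C(x)\in\Cbb(x)[b]$. Since the basis $B$ is independent of $x$ and the same $b$ works simultaneously for every $x\in\Omega$, for each fixed $x$ the coefficients of the resulting polynomial expression $C(x)=\sum_i c_i(x)\,b^i$ are scalars in $\Cbb$; hence $C(x)\in\Cbb[b]$ for every $x\in\Omega$, which is the desired conclusion.

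There is no real obstacle here beyond checking that diagonality is the degenerate case of the BT condition, so the argument is essentially a one-step reduction. I would not use the commutativity of the family $\{C(x)\}$ explicitly in the proof, since Proposition~\ref{prop:Asemi} does not require it; commutativity is automatic once each $C(x)$ lies in $\Cbb[b]$ for a common $b$, which is consistent with the hypothesis and serves as a sanity check rather than an input.
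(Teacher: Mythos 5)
Your argument is correct and matches the route the paper intends: the paper presents this corollary as the algebraic counterpart of Corollary~\ref{cor:Cxdiag}, i.e.\ one observes that a diagonal matrix is BT (with all blocks of size $1$) and then invokes the ``$\Leftarrow$'' direction of Proposition~\ref{prop:Asemi}, whose construction yields a single $x$-independent non-derogatory $b$ commuting with every $C(x)$, so that $C(x)\in\Cbb[b]$ for each fixed $x$. Your remark that the commutativity hypothesis is not needed as an explicit input (being subsumed by the simultaneous-diagonalisability assumption) is also accurate.
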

\noindent
\textbf{Remark.}
{\em Subfactor planar algebras} \cite{Jones99,JonesNotes} are a class of planar algebras that present themselves 
in the context of polynomial integrability. They are semisimple and possess a natural inner product, so establishing 
{\em diagonalisability} of the transfer operator in a given representation amounts to showing that the operator is 
self-adjoint with respect to the inner product. In fact, the {\em global} property of polynomial integrability is reduced 
to satisfying the {\em local} YBEs, BYBEs, inversion identities, and self-adjointness of the $R$- and $K$-operators.

\subsection{Cellular algebras}
\label{Sec:Cellular}

Here, we recall the definition and some basic properties of cellular algebras to be used in 
Section \ref{Sec:Cellularity} and Appendix \ref{app:TensorPADef}. 
Proofs will not be reproduced but may be found in \cite{GL06}.
The following definition is taken from \cite{GL06}.
\begin{defn}\label{def:CellAlg}
A cellular algebra over $\Rc$ is an associative unital algebra $\Ac$, together with cell datum $(\Lambda,M,C,*)$, 
where 
\begin{enumerate}
\item[$\bullet$]
$\Lambda$ is a partially ordered set and for each $\lambda\in\Lambda$, $M(\lambda)$ is a finite set such that %
\begin{align}
  C\colon \coprod_{\lambda\in\Lambda} M(\lambda)\times M(\lambda) \to\Ac
\end{align}
is an injective map with image an $\Rc$-basis for $\Ac$.
\item[$\bullet$]
If $\lambda\in\Lambda$ and $S,T\in M(\lambda)$, write $C(S,T) = C^{\lambda}_{S,T}$.
Then, $*$ is an $\Rc$-linear anti-involution of $\Ac$ such that $(C^{\lambda}_{S,T})^* = C^{\lambda}_{T,S}$.
\item[$\bullet$]
If $\lambda\in\Lambda$ and $S,T\in M(\lambda)$, then for any element $a\in\Ac$ we have 
\begin{align}\label{equ:aCLST}
  a\,C^{\lambda}_{S,T} \equiv \sum_{S'\in M(\lambda)}r_{a}(S',S)C^{\lambda}_{S',T}\quad 
   (\mathrm{mod}\;\Ac(<\lambda)),
\end{align}
where $r_a(S',S)\in\Rc$ is independent of $T$ and where $\Ac(<\lambda)$ is the $\Rc$-submodule of $\Ac$ 
generated by $\{C^{\mu}_{S'',T''}\,|\,\mu<\lambda;\,S'',T''\in M(\mu)\}$. 
\end{enumerate}
\end{defn}

\noindent
For each $\lambda\in\Lambda$, we define
\be
  \Ac(\{\lambda\}) : = \big\langle C_{S,T}^{\lambda}\,|\,S,T\in M(\lambda)\big\rangle_\Rc
\ee
and note that $\Ac\cong\bigoplus_{\lambda\in\Lambda}\Ac(\{\lambda\})$ as $\Rc$-modules.
For every $S_1,S_2,T_1,T_2\in M(\lambda)$, we also have
\be
 C^{\lambda}_{S_1,T_1}C^{\lambda}_{S_2,T_2} \equiv \psi(T_1,S_2)C^{\lambda}_{S_1,T_2} 
  \quad(\mathrm{mod}\;\Ac(<\lambda)),
\ee
where $\psi(T_1,S_2)\in\Rc$. This readily extends to a bilinear map 
$\psi\colon M(\lambda)\times M(\lambda)\to\Rc$.

Let the left, respectively right, $\Ac$-modules $W(\lambda)$ and $W(\lambda)^*$
be defined as the free $\Rc$-modules with basis $\{C_S\,|\,S\in M(\lambda)\}$ 
and $\Ac$-actions
\begin{align}\label{equ:aCsAction}
        aC_S: = \sum_{S'\in M(\lambda)}r_a(S',S)C_{S'},\qquad 
       C_Sa: = \sum_{S'\in M(\lambda)}r_{a^*}(S',S)C_{S'},\qquad 
         a\in\Ac.
\end{align}
We have the natural $\Rc$-module isomorphism
\begin{align}\label{equ:CLambdaWWIso}
  C^{\lambda}\colon  W(\lambda)\otimes_\Rc W(\lambda)^*\to\Ac(\{\lambda\}),\qquad 
  C_S\otimes C_T\mapsto C^{\lambda}_{S,T},
\end{align}
and the bilinear form
\be
  \phi_{\lambda}\colon W(\lambda)\times W(\lambda)\to\Rc,\qquad 
   (C_S,C_T)\mapsto \psi(S,T).
\ee
Letting $u,v,w\in W(\lambda)$ and $a\in\Ac(\{\lambda\})$, we note that the form is {\em symmetric}, 
$\phi_\lambda(u,v)=\phi_{\lambda}(v,u)$, {\em invariant} under the involution, 
$\phi_\lambda(a^*u,v) = \phi_{\lambda}(u,av)$, and satisfies $C^\lambda(u\otimes v)w=\phi_\lambda(v,w)u$.

The {\em radical} of $\phi_\lambda$ is defined as
\begin{align}
        \mathrm{rad}\,\phi_\lambda: = \{u\in W(\lambda)\,|\,\phi_\lambda(u,v)=0 \text{ for all }v\in W(\lambda)\}.
\end{align}
Let $\{C_{S_1},\ldots,C_{S_{d_{\lambda}}}\}$ be an
ordered basis for $W(\lambda)$, where $S_1,\ldots,S_{d_\lambda}\in M(\lambda)$.
Then, the symmetric matrix
\be
 G_\lambda: = 
 \begin{bmatrix}
 \phi_\lambda(C_{S_1},C_{S_1})&\phi_\lambda(C_{S_1},C_{S_2})&\ldots&
    \phi_\lambda(C_{S_1},C_{S_{d_\lambda}})\\
            \phi_{\lambda}(C_{S_2},C_{S_1}) & \phi_{\lambda}(C_{S_2},C_{S_2}) & \ldots & 
            \phi_\lambda(C_{S_2},C_{S_{d_{\lambda}}}) \vspace{0.1cm} \\
            \vdots  & \vdots  & \ddots & \vdots \\
            \phi_{\lambda}(C_{S_{d_{\lambda}}},C_{S_1}) & \phi_{\lambda}(C_{S_{d_{\lambda}}},C_{S_2}) & \ldots & 
            \phi_\lambda(C_{S_{d_{\lambda}}},C_{S_{d_{\lambda}}})\\
        \end{bmatrix}
\label{equ:CellGram}
\ee
is referred to as the corresponding {\em Gram matrix}. It holds that
\be
 \det G_\lambda\neq0
 \quad\Leftrightarrow\quad
 \mathrm{rad}\,\phi_\lambda=\{0\}
 \quad\Leftrightarrow\quad
 \{a\in\Ac(\{\lambda\})\,|\,av=0\text{ for all } v\in W(\lambda)\}=\{0\},
\ee
and if these relations are satisfied, the form $\phi_\lambda$ is said to be {\em non-degenerate}.
It follows that the representation
\be
 \rho: =\bigoplus_{\lambda\in\Lambda}\rho_\lambda\colon 
 \Ac\to\bigoplus_{\lambda\in\Lambda}\mathrm{End}_{\Rc}(W(\lambda))
\ee
is faithful if $\det G_{\lambda}\neq 0$ for all $\lambda\in\Lambda$.

\section{Temperley--Lieb algebras}
\label{Sec:TL}

Here, we work over $\Cbb$ and let the so-called loop fugacity, $\beta$, be an indeterminate unless otherwise 
stated. Most of the introductory material in Sections \ref{Sec:TLdef} and \ref{Sec:Cellularity} can be found in 
standard references on Temperley--Lieb algebras, 
such as \cite{Wenzl88,Martin91,GW93,Westbury95,Jones99,GL06,RSA14}.
New results include the decompositions in Proposition \ref{prop:spiders} and Corollary \ref{cor:TGamma} in 
Section \ref{Sec:TLT}, the free Baxterisation in Section \ref{Sec:TLYBI},
the classification of identity points and the determination of the corresponding principal 
hamiltonians in Section \ref{Sec:h}, the analysis of minimal hamiltonian polynomials in Section \ref{Sec:Min}, and 
the discussion of polynomial integrability in Section \ref{Sec:PolTL}.

\subsection{Definition}
\label{Sec:TLdef}

For each $n\in\Nbb_0$, let $B_n$ be the set of $2n$-tangles without input disks and loops. 
An element of $B_n$ is called a {\em connectivity diagram} and is thus a disk with $2n$ nodes connected 
by non-intersecting strings, defined up to regular isotopy. Examples are
\begin{align}
    \raisebox{-0.45cm}{\TLcircExI} \in B_2,\qquad
    \raisebox{-0.45cm}{\TLcircExII} \in B_3,\qquad
    \raisebox{-0.45cm}{\TLcircExIII} \in B_5.
\end{align}
Denote by $\mathrm{T}_{n}$ the complex vector space spanned by the elements of $B_n$. 
The dimension of $\mathrm{T}_{n}$ is given by a Catalan number,
\be
 \dim \mathrm{T}_n=\frac{1}{n+1}\binom{2n}{n}.
\ee

The \textit{Temperley--Lieb planar algebra} $\mathrm{TL}(\beta)$ is the graded vector space 
$(\mathrm{T}_{n})_{n\in\Nbb_0}$, together with the natural diagrammatic action of planar tangles, 
with each loop replaced by a factor of a parameter $\beta$.
For each $n\in\Nbb$, the \textit{Temperley--Lieb algebra} $\mathrm{TL}_{n}(\beta)$ is obtained by endowing 
the vector space $\mathrm{T}_{n}$ with the multiplication induced by the planar tangle $M_n$ 
in \eqref{equ:MultPlanar}. In accordance with (\ref{Aunit}), this is a unital algebra, 
and with reference to (\ref{Bnprime}), it holds that
\be
 a,b\in B_n'\quad\Longrightarrow\quad ab\in B_n'.
\label{abBn}
\ee

It is common to represent the connectivity diagrams in $B_n$ by rectangular diagrams, here drawn such that 
the marked boundary interval corresponds to the leftmost vertical side, as illustrated by
\be
 \raisebox{-0.45cm}{\CircToRecExC}\ \, \longleftrightarrow  \raisebox{-0.45cm}{\CircToRecExR}\ .
\ee
Accordingly, when describing connectivity diagrams in the disk picture, we refer to the boundary segment 
containing the first $n$ nodes, labelled clockwise (respectively counterclockwise) from the marked 
boundary component, as the {\em upper edge} (respectively the {\em lower edge}).

It is well known that the algebra $\mathrm{TL}_{n}(\beta)$ admits a presentation, 
$\langle e_1,\ldots,e_{n-1}\rangle$, where \cite{Kauffman87}
\be
    \idn\ \leftrightarrow \raisebox{-0.885cm}{\BMWRecIdOdd},\qquad 
    e_i\ \leftrightarrow\raisebox{-0.885cm}{\BMWRecEOdd}\qquad (i=1,\ldots,n-1),
\ee
subject to
\begin{align}
        e_i^2 &= \beta e_i,\\
        e_je_{i}e_j &= e_{j}, \quad\quad\quad |i-j|=1,\\
        e_ie_j &= e_je_i, \quad\quad\: |i-j|>1.
    \end{align}
\noindent
\textbf{Remark.} In physical applications, the parameter $\beta$ typically takes on a real (or complex) value 
and is often referred to as the {\em loop fugacity}.

\medskip

For each $n\in\Nbb$ and $\beta$ an indeterminate, 
there exists a unique $\wj_n\in\mathrm{TL}_{n}(\beta)$ such that
\be
 \wj_n^2=\wj_n,\qquad e_i\wj_n=\wj_ne_i=0,\qquad i=1,\ldots,n-1.
\label{wj}
\ee
We will have more to say about this {\em Jones--Wenzl idempotent} \cite{Wenzl87} in Section \ref{Sec:Min}.

\subsection{Standard modules and cellularity}
\label{Sec:Cellularity}

For each $n\in\Nbb$, let
\be
 \TLposet_n : = \{n-2k\,|\,k=0,\ldots,\lfloor\tfrac{n}{2}\rfloor\},
\ee
which is a naturally ordered set with $\min(D_n)=\frac{1}{2}(1-(-1)^n)$, and introduce the map
\begin{align}
        ^{\dagger}\colon B_{n} \to B_{n},\qquad
        x\mapsto x^\dagger,
\end{align}
where $x^\dagger$ is constructed by reflecting $x$ about the horizontal. 
Extended linearly to $\mathrm{T}_n$, this yields an anti-automorphism on $\mathrm{TL}_n(\beta)$.

For each $d\in \TLposet_{n}$, let $B_{n,d}\subseteq B_n$ denote the set of $n$-diagrams with exactly $d$
nodes on the lower edge connected to nodes on the upper edge. The $d$ strings connecting these 
$2d$ nodes are referred to as {\em through-lines}.
We now let $\TLset_{n,d}$ denote the set of all $n$-diagrams with $d$ through-lines, whose
upper edge has been discarded along with any string having both endpoints on the upper edge. 
The elements of $\TLset_{n,d}$ are referred to as {\em $(n,d)$-link states} and the elements of
\be
 L_n: =\bigcup_{d\in D_n}L_{n,d}
\label{LLnd}
\ee
as $n$-link states. To illustrate,
\be
 \mbox{}\qquad
 \raisebox{-0.45cm}{\TLLSExampleI}\ \mapsto\ \raisebox{-0.45cm}{\TLLSHalfExampleI},\qquad\quad
 \raisebox{-0.45cm}{\TLLSExampleII}\ \mapsto\ \raisebox{-0.45cm}{\TLLSHalfExampleI}
\label{62}
\ee
give rise to the same $(6,2)$-link state. In fact, the $(n,d)$-link states may be viewed as equivalence classes of 
$n$-diagrams. From that perspective, the two $6$-diagrams in (\ref{62}) are seen as representatives of the 
same $(6,2)$-link state.

The vector space
\be
 \TLmodule_{n,d}: =\mathrm{span}_{\Cbb}(\TLset_{n,d}),\qquad
 \dim\TLmodule_{n,d} = |\TLset_{n,d}|=\binom{n}{\frac{n-d}{2}} - \binom{n}{\frac{n-d}{2}-1},
\ee
becomes a $\mathrm{TL}_n(\beta)$-module by defining an action of the algebra generators on the link states
such that $a_2(a_1v)=(a_2a_1)v$ for all $a_1,a_2\in\mathrm{TL}_n(\beta)$ and $v\in\TLmodule_{n,d}$.
The action defining the familiar {\em standard module}, $\TLmodule_{n,d}$, is first given diagrammatically 
for $n$-diagrams acting on $(n,d)$-link states in the `natural way', see e.g.~\cite{RSA14},
and then extended linearly to all of $\mathrm{TL}_n(\beta)$ and all of $\TLmodule_{n,d}$.

For each pair $x,y\in L_{n,d}$, let $\langle x,y\rangle_{n,d}$ be constructed by reflecting the link state $x$ 
about the horizontal, placing it below the link state $y$, connecting the strands in the natural way, 
and replacing any loop by a factor of $\beta$, see e.g.~\cite{RSA14} for details. 
This extends to a bilinear map,
\be
        \langle\cdot,\cdot\rangle_{n,d}\colon \TLmodule_{n,d}\times\TLmodule_{n,d}\to\Cbb[\beta],\qquad
        (x,y)\mapsto \langle x,y\rangle_{n,d}.
\ee
Relative to the $(n,d)$-link state basis $L_{n,d}$, the nonzero elements of the corresponding 
Gram matrix $\TLgram_{n,d}$ are all monomials in $\beta$. The Gram determinant is thus polynomial in $\beta$, 
and following \cite{Westbury95}, it can be expressed as
\be
  \det\TLgram_{n,d} 
  =\prod_{j=1}^{\frac{n-d}{2}}
   \left(\frac{U_{d+j}(\frac{\beta}{2})}{U_{j-1}(\frac{\beta}{2})}\right)^{\!\dim\TLmodule_{n,d+2j}},
\label{equ:TLGramDet}
\ee
where $U_k(x)$ is the $k^{\mathrm{th}}$ Chebyshev polynomial of the section kind.

For each pair $x,y\in L_{n,d}$, let $|x\;y|_{n,d}$ be constructed by reflecting the link state $y$ about the horizontal, 
placing it above the link state $x$ and connecting the $d$ defects non-intersectingly. 
This extends to a bilinear map,
\be
  |\!\cdot\ \cdot|_{n,d}\colon \TLmodule_{n,d}\times\TLmodule_{n,d} \to \mathrm{TL}_{n,d}(\beta),
\label{equ:vsIso}
\ee
where $\mathrm{TL}_{n,d}(\beta)$ is the subset of $\mathrm{TL}_{n}(\beta)$ whose elements have exactly $d$ 
through-lines. It follows that
\begin{align}
    |x\;y|_{n,d}\,z= \langle y, z\rangle_{n,d}\,x,\qquad \forall\,x,y,z\in\TLmodule_{n,d}.
\end{align}
When clear, the subscripts of $\langle\cdot,\cdot\rangle_{n,d}$ and $|\!\cdot\ \cdot|_{n,d}$ will be suppressed, 
writing $\langle\cdot,\cdot\rangle$ and $|\!\cdot\ \cdot|$, respectively. 

It now follows \cite{GL06} that $\mathrm{TL}_n(\beta)$ is cellular with cell datum
$(\TLposet_n,L_n, |\!\cdot\ \cdot|, \dagger)$, where the dagger provides an adjoint operation 
relative to the bilinear form $\langle\cdot,\cdot\rangle$ on $\TLmodule_{n,d}$:
\begin{align}
        \langle x,ay\rangle = \langle a^{\dagger}x,y\rangle,\qquad a\in\mathrm{TL}_n(\beta),\ x,y\in\TLmodule_{n,d}.
\end{align}

In preparation for the discussion in Section \ref{Sec:Min}, let
\be
 V_n:=\mathrm{span}_{\Cbb}(L_n),
\ee
and note that, as vector spaces,
\be
 V_n=\bigoplus_{d\in D_n}V_{n,d},
\ee
hence
\be
 \dim V_n=\sum_{d\in D_n}\dim V_{n,d}=\binom{n}{\lfloor\frac{n}{2}\rfloor}.
\label{dimVn}
\ee
We also let $\rho_{n,d}$ 
denote the representation corresponding to the standard module $V_{n,d}$, and let
\be
 \rho_n:=\bigoplus_{d\in D_n}\rho_{n,d},
\ee
so relative to an ordered $V_n$-basis of the form
\be
 L_{n,s_n}\cup L_{n,s_n+2}\cup\cdots\cup L_{n,n},\qquad s_n:=\tfrac{1}{2}(1-(-1)^n),
\label{LLL}
\ee
the matrix representation of $\rho_n$ is block-diagonal.
Moreover, from Section \ref{Sec:Cellular}, we have that $\rho_n$ is faithful for all $\beta\in\Cbb$ for which
$\prod_{d\in D_n}\det G_{n,d}\neq0$. In particular, $\rho_n$ is faithful for $\beta$ an indeterminate
and fails to be faithful for at most finitely many values $\beta\in\Cbb$.

\subsection{Transfer operators}
\label{Sec:TLT}

The connectivity bases for $\mathrm{T}_2$ and $\mathrm{T}_1$ are given by $B_2=\{\mathds{1}_2, e\}$ and 
$B_1=\{\mathds{1}_1\}$, respectively. Imposing that the coefficients to the unit elements are nonzero in 
(\ref{RKK}), the $R$- and $K$-elements can be normalised such that
\be
 R(u)=\mathds{1}_2+ue,\qquad K(u)=\Ko(u)=\mathds{1}_1,
\label{RuKu}
\ee
where $u\in\Cbb$. Diagrammatically, this $R$-element is given by
\be
    R(u)=\raisebox{-0.325cm}{\RotRopGreenOpBBTLST}\, 
    =  \raisebox{-0.325cm}{\RotRopTLxIdST} 
    +  u\raisebox{-0.325cm}{\RotRopTLxEST}\ ,
\label{RTL}
\ee
and the transfer operator morphs into its familiar (see e.g.~\cite{PRZ06}) diagrammatic form,
\begin{align}\label{equ:TnBaxGen}
    T_n(u)= \raisebox{-0.925cm}{\DRTMnNode}\; .
\end{align}
To emphasise its dependence on $\beta$, we occasionally write $T_n(u,\beta)$ instead.
\medskip

\noindent
{\bf Remark.}
The parameterisation in (\ref{RuKu}) ensures that $R(u)\neq0$ for all $u$, and this would similarly have been
achieved had we chosen to work with $\hat{R}(\hat{u})=\hat{u}\,\mathds{1}_2+e$, where $\hat{R}(\hat{u})\neq0$ 
for all $\hat{u}\in\Cbb$. Viewing $u$ and $\hat{u}$ as coordinates on the Riemann sphere, we see that, for all 
$u\in\Cbb\cup\{\infty\}$, $R(u)\mapsto R(\hat{u})=\frac{1}{u}\hat{R}(u)$ as $u\mapsto\hat{u}\equiv\frac{1}{u}$.
\medskip

With (\ref{RTL}), we have the decompositions
\be
  \raisebox{-0.95cm}{\LeftCloseTLtransfer} 
  = \raisebox{-0.95cm}{\LeftCloseTLtransferId} 
  + (u+v+\beta uv)\raisebox{-0.95cm}{\LeftCloseTLtransferE}\ \ , \qquad\quad
  \raisebox{-0.95cm}{\RightCloseTLtransfer} 
  = uv\raisebox{-0.95cm}{\RightCloseTLtransferId} 
  + (u+v+\beta)\raisebox{-0.95cm}{\RightCloseTLtransferE}\ \ .
\label{uv}
\ee
They reduce to the so-called ``drop-down" relations or properties \cite{PRV10} for $u+v+\beta uv=0$ and 
$u+v+\beta=0$, respectively, see also Proposition \ref{prop:HamLimitPointsGen}.

\begin{prop}\label{prop:special}
We have
\be
 T_n(u)\big|_{u(2+\beta u)=0}=\beta\,\idn,\qquad
 T_n(u)\big|_{\beta+2u=0}=\beta u^{2n}\idn.
\ee
\end{prop}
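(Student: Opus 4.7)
The plan is to evaluate $T_n(u)$ at each critical parameter value via a column-by-column diagrammatic reduction of (\ref{equ:TnBaxGen}), using the algebraic identity $R(u)^2=\mathds{1}_2+u(2+\beta u)\,e$ (which follows from $e^2=\beta e$ in $\mathrm{TL}_2$) together with the drop-down identities (\ref{uv}) specialised to $v=u$.

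For the first assertion, note $u(2+\beta u)=0$ selects either $u=0$ or $u=-2/\beta$. At $u=0$ the reduction is immediate: $R(0)=\mathds{1}_2$ turns every $R$-operator in (\ref{equ:TnBaxGen}) into a pair of straight strands, leaving $n$ vertical physical strands together with a single auxiliary loop closed on the left by $K(0)=\mathds{1}_1$ and on the right by $\Ko(0)=\mathds{1}_1$; the loop contributes $\beta$, so $T_n(0)=\beta\idn$. At $u=-2/\beta$ the algebraic identity gives $R(u)^2=\mathds{1}_2$, so the top-bottom pair of $R$-operators at any fixed column collapses to the identity on the two strands through it. Iterating over all $n$ columns returns exactly the same bare configuration as at $u=0$, yielding $T_n(-2/\beta)=\beta\idn$ as well. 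Equivalently, the first relation in (\ref{uv}) has $e$-coefficient $u+v+\beta uv$ which at $v=u$ equals $u(2+\beta u)$, and this drives the same reduction boundary-first.

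For the second assertion I would use the right-close relation in (\ref{uv}) at $v=u$: its $e$-coefficient $u+v+\beta=2u+\beta$ vanishes on $\beta+2u=0$, while the identity coefficient $uv$ remains as $u^2$. Applied at the rightmost column of (\ref{equ:TnBaxGen}), the top-bottom pair of $R$'s together with the closing $\Ko$ simplify to $u^2$ times the same right-closure with those two $R$'s deleted. Because $K$, $\Ko$ and every $R$ carry the common parameter $u$, the resulting diagram is again a double-row transfer operator, now with one fewer column, so the same reduction applies again. Iterating $n$ times absorbs every $R$-operator, accumulates a total scalar factor $u^{2n}$, and leaves only $n$ straight physical strands together with a single auxiliary loop of weight $\beta$; hence $T_n(-\beta/2)=\beta u^{2n}\idn$.

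The only non-routine ingredient is the inductive step in the third case: one must check that after each application of the right-close drop-down the residual picture is again a genuine double-row transfer operator ready for the next application, rather than some diagram with a degenerate right boundary. Given the planar-algebraic set-up of Section \ref{Sec:Transfer} and the uniform parameter assignment across $K$, $\Ko$ and $R$, this is straightforward bookkeeping, and the proof reduces to the three local identities indicated above.
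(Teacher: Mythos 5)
Your proof is correct and takes essentially the same route as the paper, which simply applies the two drop-down relations (\ref{uv}) at $v=u$ repeatedly, peeling columns from the left (where the $e$-coefficient is $u(2+\beta u)$) for the first identity and from the right (where the $e$-coefficient is $\beta+2u$, leaving a factor $u^2$ per column) for the second. The one caveat is that the $R(u)^2=\mathds{1}_2$ collapse applies literally only at a boundary column, where the $K$- or $\Ko$-arc supplies the second connection making the two $R$-operators composable in $\mathrm{TL}_2$ --- at an interior column the auxiliary line between them detours through the rest of the diagram --- but your own boundary-first reformulation via (\ref{uv}) already covers this.
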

\begin{proof}
The results follow from repeated application of (\ref{uv}) with $v=u$, to (\ref{equ:TnBaxGen}).
\end{proof}
\begin{prop}\label{prop:Tnc}
We have
\be
 T_n(u)= \raisebox{-0.925cm}{\DRTMnNodec}\; .
\label{Tswapped}
\ee
\end{prop}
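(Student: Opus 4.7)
The plan is to exploit the exceptional symmetry of the Temperley--Lieb $R$-operator. Writing $R(u) = \mathds{1}_2 + u\,e$, one checks that each constituent is invariant under the $180$-degree rotation $\mathsf{P}_{r_{4,2}}$: the two-through-string diagram $\mathds{1}_2$ is manifestly preserved, while the cup-cap generator $e$ is sent to itself (as a cap-cup diagram) up to regular planar isotopy. By linearity, it follows that $\mathsf{P}_{r_{4,2}}(R(u)) = R(u)$, so $R(u)$ is partially crossing symmetric in the sense of Section \ref{Sec:Cross}. Since $K(u) = \Ko(u) = \mathds{1}_1$, the $K$-operators are trivially invariant under the relevant rotations as well.

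Next, I would carefully compare the planar tangle appearing in (\ref{equ:TnBaxGen}) with the one in (\ref{Tswapped}) and observe that they differ precisely by applying $\mathsf{P}_{r_{4,2}}$ to each $R$-insertion, with the outer $K$-closures left unchanged (since these are identity elements). Substituting the pointwise identity $R(u) = \mathsf{P}_{r_{4,2}}(R(u))$ into each such position then rewrites one planar tangle as the other without altering the value in $\mathrm{TL}_n(\beta)$. Reading off the resulting element gives the claimed equality.

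The main (and essentially only) obstacle is a small amount of diagrammatic bookkeeping to confirm that the two pictures really do differ only by these rotations of the individual $R$-operators, modulo the evident isotopies in the ambient tangle; the algebraic content reduces to the single symmetry property of $R(u)$ established above.
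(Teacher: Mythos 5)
There is a genuine gap, and it sits exactly in the ``diagrammatic bookkeeping'' you defer. The tangle on the right of (\ref{Tswapped}) is not obtained from (\ref{equ:TnBaxGen}) by giving each $R$-insertion a half-turn; the $R$-insertions are rotated by a single click, i.e.\ by $\mathsf{P}_{r_{4,1}}$ rather than $\mathsf{P}_{r_{4,2}}$ (equivalently, by any odd number of clicks -- your correct observation that $\mathsf{P}_{r_{4,2}}(R(u))=R(u)$ shows only the parity matters). Under the quarter-turn one has $\mathsf{P}_{r_{4,1}}(\mathds{1}_2)=e$ and $\mathsf{P}_{r_{4,1}}(e)=\mathds{1}_2$, so $\mathsf{P}_{r_{4,1}}(R(u))=u\,R(1/u)$ as recorded in (\ref{equ:RCrossingSym}); $R(u)$ is emphatically not invariant. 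Substituting this pointwise into the right-hand tangle turns it into $u^{2n}T_n(1/u)$, so Proposition \ref{prop:Tnc} is equivalent to the functional relation $T_n(u)=u^{2n}T_n(1/u)$ of (\ref{Tcross}) -- a non-trivial palindromic constraint $a_{2n-i}=a_i$ on the coefficients of $T_n(u)$ -- and cannot follow from a local identity satisfied by $R(u)$ alone. A quick consistency check: if the two tangles really differed by half-turns of the individual $R$'s, the proposition would be a tautology and could not be used, as the paper does immediately afterwards, to derive (\ref{Tcross}).

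The paper's proof reflects this global character. For $u(2+\beta u)(\beta+2u)\neq 0$ it inserts a mutually inverse pair of auxiliary crossed operators, propagates one of them across the double row using a Yang--Baxter-type relation, and absorbs the pair into the left and right closures via the two cap identities, whose compensating scalar factors $\tfrac{u(2+\beta u)}{\beta+2u}$ and its reciprocal cancel; the degenerate values with $u(2+\beta u)(\beta+2u)=0$ are handled separately via Proposition \ref{prop:special} and the decompositions (\ref{uv}). Any correct proof must contain an argument of this global type (or an equivalent induction on $n$); the pointwise substitution you propose does not.
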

\begin{proof}
Together with Proposition \ref{prop:special}, relations similar to the ones in (\ref{uv}) applied to the righthand 
side of (\ref{Tswapped}) establish the result for $u$ such that $u(2+\beta u)(\beta+2u)=0$. 
If $u(2+\beta u)(\beta+2u)\neq0$, then the operators
\be
  \raisebox{-0.325cm}{\RotRopRedOpBBiaTLcross}\;
 = \frac{u^2-1}{2u+\beta}\raisebox{-0.325cm}{\RotRopTLxIdST} +\raisebox{-0.325cm}{\RotRopTLxEST}\ ,\qquad
 \raisebox{-0.325cm}{\RotRopDarkRedOpBBiaTLcross}\;
 =\frac{1-u^2}{u(2+\beta u)}\raisebox{-0.325cm}{\RotRopTLxIdST} +\raisebox{-0.325cm}{\RotRopTLxEST}\;,
\ee
satisfy
\be
 \raisebox{-0.45cm}{\RotMarkRedInvrFilterediTLcross}\ 
 =\; \raisebox{-0.45cm}{\MarkRedIdFilteredInv}\ ,\qquad\quad 
 \raisebox{-0.9cm}{\RotMarkRedYBEgoRightSingleRiaTLcross}\ 
 = \raisebox{-0.9cm}{\RotMarkRedYBEgoLeftSingleRiaTLcross}
\ee
and
\be
 \raisebox{-0.45cm}{\LeftCapTLtransfer} 
 =\frac{u(2+\beta u)}{\beta+2u}\raisebox{-0.45cm}{\LeftCapTLtransferE}\ , \qquad\quad
 \raisebox{-0.45cm}{\RightCapTLtransfer}\,
 = \frac{\beta+2u}{u(2+\beta u)}\raisebox{-0.45cm}{\RightCapTLtransferE}\ .
\ee
It follows that
\begin{align}
 T_n(u)=\!\raisebox{-0.925cm}{\TLtransferCrossInv} 
 \,=\!\raisebox{-0.925cm}{\TLtransferCrossInvI}
 \,=\!\raisebox{-0.925cm}{\DRTMnNodec}.
\end{align}
\end{proof}
For $u\neq0$, the $R$-operator is crossing symmetric,
\be\label{equ:RCrossingSym}
 \raisebox{-0.325cm}{\RotRopGreenOpBBTLSTRotu}\;=\, u \raisebox{-0.325cm}{\RotRopGreenOpBBTLSTuinv}\ ,
\ee
with $u=1$ an isotropic point.
By Proposition \ref{prop:Tnc},
\be
 T_n(u) = u^{2n}T_n(1/u),
\label{Tcross}
\ee
so $u^{-n}T_n(u)$ is invariant under $u\mapsto1/u$. Since $T_n(u)$ is polynomial in $u$ of degree at most $2n$, 
it follows that there exists $\Tt_n(x)\in\mathrm{TL}_n(\beta)[x]$ such that
\be
 T_n(u)=u^n\Tt_n(u+\tfrac{1}{u}).
\label{Tt}
\ee
Moreover, there exist $a_0,\ldots,a_{2n}\in\mathrm{TL}_n(\beta)$ such that
\be
 T_n(u)=\sum_{i=0}^{2n}a_iu^i,
\ee
and using (\ref{Tcross}), it follows that $a_{2n-i}=a_i$, $i=0,\ldots,n-1$, so
\be
 \Tt_n(x)=a_n+2\sum_{i=1}^{n}a_{n-i}T_{i}^{(c)}\big(\tfrac{x}{2}\big),
\label{Tt2}
\ee
where $T_k^{(c)}$ is the $k^{\mathrm{th}}$ Chebyshev polynomial of the first kind.
In establishing (\ref{Tt2}), we have used the familiar relation
\be
 T_k^{(c)}(\cosh\theta)=\cosh(k\theta),\qquad \theta\in\Cbb.
\ee

To shine further light on the structure of $T_n(u,\beta)$, 
we introduce the following parameterised elements of $\mathrm{TL}_n(\beta)$: 
For $n\in\Nbb$ and each pair $j,k\in\Nbb_0$ such that $j+k\leq n-2$, let
\begin{align}
    S^{(n)}_{j,k}(u) := \raisebox{-1.85cm}{\TransferSpiderSmaller},
\label{spider}
\end{align}
which reduces to $S^{(n)}_{j,n-j-2}(u)=e_{j+1}$ for $k=n-j-2$. To emphasise its dependence on $\beta$, 
we occasionally write $S^{(n)}_{j,k}(u,\beta)$. Note that $S^{(n)}_{j,k}(u)\in \mathrm{span}_{\Cbb}(B'_n)$.
\begin{lem}\label{lem:squid}
For $n\in\Nbb$, we have
\be
 \raisebox{-1.82cm}{\TransferSquidNew}=u^{2n-2}\idn+(\beta+2u)\sum_{k=0}^{n-2}u^{2k}S^{(n)}_{0,k}(u)
\label{squid}
\ee
and
\be
 \raisebox{-1.82cm}{\leftSquidN}=\idn+u(2+\beta u)\sum_{j=0}^{n-2}S_{j,0}^{(n)}(u).
\label{squid2}
\ee
\end{lem}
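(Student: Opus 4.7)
The plan is to prove both identities by induction on $n$, with the key tool being the drop-down relations (\ref{uv}) specialised to $v = u$. Setting $v = u$ in (\ref{uv}) yields two useful decompositions: the first gives $\idn + u(2 + \beta u)\,e$ and the second gives $u^2\idn + (\beta + 2u)\,e$. These are precisely the coefficient patterns appearing in (\ref{squid}) and (\ref{squid2}), respectively, which is strong evidence that each identity arises from iterating a single drop-down relation across the squid diagram.

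For the base case $n = 1$, both statements reduce to tautologies: the sums on the right are empty, the prefactor $u^{2n-2}$ becomes $u^0 = 1$, and the squid diagrams collapse to $\idn$ (here $\mathds{1}_1$). For the inductive step in (\ref{squid}), I would apply the second drop-down relation of (\ref{uv}) to the outermost pair of stacked $R$-operators in the squid (say, those farthest from the closing cap). This splits the squid into (i) $u^2$ times a squid of one size smaller, planarly embedded in $A_n$, and (ii) $(\beta + 2u)$ times a spider with a single cap on the outermost side. Applying the inductive hypothesis to the smaller squid and re-indexing so that each spider $S^{(n-1)}_{0,k}(u)$ gets identified (via the planar embedding) with $S^{(n)}_{0,k+1}(u)$, the overall $u^2$ factor converts $u^{2k}$ into $u^{2(k+1)}$, which together with the terminating spider at $k = 0$ reassembles the full sum $\sum_{k=0}^{n-2} u^{2k} S^{(n)}_{0,k}(u)$. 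The identity-term coefficient inductively grows from $u^{2(n-1)-2}$ to $u^{2n-2}$.

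The second identity (\ref{squid2}) is proved analogously, applying the first drop-down relation of (\ref{uv}) to the outermost pair of stacked $R$-operators from the opposite end. Here each reduction step contributes $\idn + u(2 + \beta u)\cdot e$ without accumulating $u^2$ factors, so that after $n-1$ iterations the result is $\idn + u(2 + \beta u) \sum_{j=0}^{n-2} S^{(n)}_{j,0}(u)$, with the index $j$ labelling the reduction step at which the spider was created.

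The main obstacle will be the diagrammatic bookkeeping: verifying at each inductive step that the smaller squid produced corresponds (after planar embedding) to the $(n-1)$-squid occupying positions compatible with the spider index convention of $S^{(n)}_{j,k}$ as defined in (\ref{spider}), and that the $e$-terms produced at successive reduction steps match $S^{(n)}_{0,0}(u), S^{(n)}_{0,1}(u), \ldots, S^{(n)}_{0,n-2}(u)$ (respectively $S^{(n)}_{0,0}(u), S^{(n)}_{1,0}(u), \ldots, S^{(n)}_{n-2,0}(u)$) in the correct order. Drawing the first two reduction steps explicitly should suffice to confirm this; the rest of the induction is then purely mechanical.
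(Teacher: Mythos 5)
Your proposal is correct and follows essentially the same route as the paper: the paper's proof is exactly an induction on $n$ obtained by decomposing a pair of adjacent $R$-operators via the $v=u$ specialisation of (\ref{uv}), with both sides reducing to $\mathds{1}_1$ at $n=1$, and with (\ref{squid2}) handled analogously using the other closing relation. The only difference is that you spell out the re-indexing of the embedded spiders, which the paper leaves implicit.
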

\begin{proof}
The result (\ref{squid}) follows by induction on $n$, decomposing the two rightmost $R$-operators as in (\ref{uv})
and using that both sides of (\ref{squid}) reduce to $\mathds{1}_1$ for $n=1$.
The result (\ref{squid2}) follows similarly.
\end{proof}

\noindent
For $k\in\Nbb_0$ and $x\in\Cbb$, we let
\be
 [k]_x\!:=1+x+\cdots+x^{k-1}\quad (k>1),\qquad [1]_x\!:=1,\qquad[0]_x\!:=0.
\ee
\begin{prop}\label{prop:spiders}
The transfer operator decomposes as
\be
   T_n(u,\beta) = \big(\beta[n+1]_{u^2} + 2u[n]_{u^2}\big)\idn 
     + u(\beta + 2u)(2+\beta u)\sum_{j=0}^{n-2}\sum_{k=0}^{n-2-j} \!\!u^{2k}S^{(n)}_{j,k}(u).
\label{equ:DTMHpolyRefinedSC}
\ee
\end{prop}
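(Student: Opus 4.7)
The plan is to prove the decomposition by induction on $n$, using Lemma \ref{lem:squid} as the key tool at the inductive step. For the base case $n=1$, the double sum is empty, and using $R(u)=\mathds{1}_2+ue$ together with the relations $e^2=\beta e$, $\mathsf{P}_{\tau_2}(\mathds{1}_2)=\beta\mathds{1}_1$, and $\mathsf{P}_{\tau_2}(e)=\mathds{1}_1$, a direct computation yields $T_1(u)=(\beta+2u+\beta u^2)\mathds{1}_1$, which coincides with $(\beta[2]_{u^2}+2u[1]_{u^2})\mathds{1}_1$.

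For the inductive step, I would peel off the outermost column of $R$-operators from $T_n(u)$. Using Proposition \ref{prop:Tnc} to put the diagram in the alternative form, the rightmost pair of $R$-operators together with the right closure $\Ko(u)=\mathds{1}_1$ matches the right-hand structure of the second relation in (\ref{uv}) with $v=u$, which decomposes as $u^2\,\mathds{1}_2+(\beta+2u)\,e$. The $u^2\mathds{1}_2$ contribution produces, after removing the now-redundant outer column, a scalar multiple $u^{2}T_{n-1}(u)$ to which the inductive hypothesis applies. The $(\beta+2u)e$ contribution introduces a cap on the right, and the residual structure on the left of this cap is exactly a squid of the type treated in Lemma \ref{lem:squid}, so applying \eqref{squid} (and \eqref{squid2} for the top/bottom symmetric counterpart after using Proposition \ref{prop:Tnc}) produces the boundary spider terms $S^{(n)}_{j,k}(u)$ with the required $(\beta+2u)(2+\beta u)u^{2k+1}$ coefficient. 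The recursion $[n+1]_{u^2}=1+u^2[n]_{u^2}$ is precisely what is needed to combine the inductive contribution with the new boundary correction:
\begin{equation*}
u^{2}\bigl(\beta[n]_{u^2}+2u[n-1]_{u^2}\bigr)+(\beta+2u)=\beta[n+1]_{u^2}+2u[n]_{u^2}.
\end{equation*}
Consistency at the identity points $u=0$ and $u=-\beta/2$ with Proposition \ref{prop:special} (where the spider sum must vanish) provides useful checks throughout.

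The main obstacle is the combinatorial bookkeeping of the spider index set. Naively, two applications of the squid lemma---one on each side of the diagram---would produce a \emph{rectangular} $(j,k)$-decomposition, whereas the claimed formula has the \emph{triangular} constraint $j+k\le n-2$ with the complementary cases $j+k=n-2$ collapsing to $S^{(n)}_{j,n-j-2}(u)=e_{j+1}$. I expect that reconciling this requires careful use of the Temperley--Lieb relations (notably $e_i^2=\beta e_i$ and $e_ie_{i\pm1}e_i=e_i$) to absorb the would-be off-diagonal spider contributions back into the triangular family. Tracking the correct powers of $u$ simultaneously with these collapses is the delicate part; the identity-point consistency and the symmetry $T_n(u)=u^{2n}T_n(1/u)$ from (\ref{Tcross}) serve as strong constraints that pin down the coefficients uniquely.
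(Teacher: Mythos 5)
Your overall strategy---induction on $n$, peeling off one boundary closure of the double row via (\ref{uv}) at $v=u$, feeding the $e$-term to a squid lemma and the identity term to the induction hypothesis, and closing the scalar part with $u^{2}\big(\beta[n]_{u^2}+2u[n-1]_{u^2}\big)+(\beta+2u)=\beta[n+1]_{u^2}+2u[n]_{u^2}$---is essentially the paper's proof reflected left-to-right: the paper decomposes the \emph{left} closure, so the $e$-term carries the coefficient $u(2+\beta u)$, the residual squid is handled by (\ref{squid}) and produces the $j=0$ spiders, and the identity term is $\mathds{1}_1\otimes T_{n-1}(u)$. Your base case and your scalar recursion are both correct.

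The gap is in your final paragraph, which is precisely where the inductive step has to close. There is no rectangular $(j,k)$-set to collapse, and no Temperley--Lieb relations are needed: only \emph{one} squid lemma is applied per inductive step, and it produces a \emph{single-index} family (the $k=0$ spiders $S^{(n)}_{j,0}$, $j=0,\ldots,n-2$, from (\ref{squid2}) in your right-peeling orientation; the $j=0$ spiders from (\ref{squid}) in the paper's). All remaining spiders come from the induction hypothesis: adjoining the identity strand freed by the $u^2\,\mathds{1}_2$ term embeds $S^{(n-1)}_{j,k}$ (with $j+k\le n-3$) as $S^{(n)}_{j,k+1}$, and the overall factor $u^2$ converts $u^{2k}$ into $u^{2(k+1)}$, so the union of $\{k=0,\ 0\le j\le n-2\}$ with $\{k\ge 1,\ j+k\le n-2\}$ tiles the triangular index set exactly. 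Until you verify this embedding identity for the spiders, the inductive step is incomplete, and the route you propose instead (absorbing ``off-diagonal'' spiders via $e_i^2=\beta e_i$ and $e_ie_{i\pm1}e_i=e_i$) leads nowhere because those terms never arise. Two smaller points: after capping on the right, the residual diagram is the \emph{left}-closed squid of (\ref{squid2}) only---citing (\ref{squid}) and a ``top/bottom symmetric counterpart'' here conflates the two lemmas; and invoking Proposition \ref{prop:Tnc} first is unnecessary and counterproductive, since the relations (\ref{uv}) are stated for the $R$-operators in the orientation of (\ref{equ:TnBaxGen}), where the rightmost pair with its closure already matches the left-hand side of the second relation in (\ref{uv}), exactly as in the proof of Proposition \ref{prop:special}.
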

\begin{proof}
By (\ref{uv}) with $v=u$, we have
\be
 T_n(u)=u(2+\beta u)  \raisebox{-0.91cm}{\TransferSquidNewNoBrace}
 + \raisebox{-0.91cm}{\IdTnto}\ .
\label{Tdecomp}
\ee
The result now follows by induction on $n$, applying (\ref{squid}) to the first term on the right 
in (\ref{Tdecomp}) and the induction hypothesis to the second term.
\end{proof}
\begin{cor}\label{cor:spidercrossing}
For $u(\beta+2u)(2+\beta u)\neq0$, we have
\be
 \frac{1}{u^{n-2}}\sum_{j=0}^{n-2}\sum_{k=0}^{n-2-j} \!\!u^{2k}S^{(n)}_{j,k}(u)
 =u^{n-2}\sum_{j=0}^{n-2}\sum_{k=0}^{n-2-j} \!\!\tfrac{1}{u^{2k}}S^{(n)}_{j,k}\big(\tfrac{1}{u}\big).
\ee
\end{cor}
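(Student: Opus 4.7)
The plan is to combine the spider decomposition of Proposition \ref{prop:spiders} with the crossing-symmetry relation \eqref{Tcross}, namely $T_n(u,\beta)=u^{2n}T_n(1/u,\beta)$, and then clear common scalar factors. The hypothesis $u(\beta+2u)(2+\beta u)\neq 0$ is precisely what is needed to perform the division.

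Concretely, write
\[
 \Sigma(u):=\sum_{j=0}^{n-2}\sum_{k=0}^{n-2-j}u^{2k}S^{(n)}_{j,k}(u),\qquad
 c(u):=\beta[n+1]_{u^2}+2u[n]_{u^2},
\]
so that Proposition \ref{prop:spiders} becomes
\[
 T_n(u,\beta)=c(u)\idn+u(\beta+2u)(2+\beta u)\,\Sigma(u).
\]
Substitute $u\mapsto 1/u$ and multiply by $u^{2n}$. The scalar multiplying $\Sigma(1/u)$ becomes
\[
 u^{2n}\cdot\frac{1}{u}\Bigl(\beta+\frac{2}{u}\Bigr)\Bigl(2+\frac{\beta}{u}\Bigr)
 =u^{2n-3}(\beta+2u)(2+\beta u).
\]

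Next, verify that $u^{2n}c(1/u)=c(u)$. This is a routine re-indexing: using $[k]_{u^2}=\sum_{i=0}^{k-1}u^{2i}$,
\[
 u^{2n}[n+1]_{1/u^2}=\sum_{i=0}^{n}u^{2(n-i)}=[n+1]_{u^2},\qquad
 2u^{2n-1}[n]_{1/u^2}=2u\sum_{i=0}^{n-1}u^{2(n-1-i)}=2u[n]_{u^2},
\]
so $c$ is invariant under the crossing substitution. Therefore the identity contributions on the two sides of $T_n(u,\beta)=u^{2n}T_n(1/u,\beta)$ cancel automatically, and what remains is
\[
 u(\beta+2u)(2+\beta u)\,\Sigma(u)=u^{2n-3}(\beta+2u)(2+\beta u)\,\Sigma(1/u).
\]

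Finally, under the hypothesis $u(\beta+2u)(2+\beta u)\neq 0$, divide both sides by $u(\beta+2u)(2+\beta u)$ to obtain $\Sigma(u)=u^{2n-4}\Sigma(1/u)$, and multiply by $u^{-(n-2)}$ to recover the claimed identity
\[
 \frac{1}{u^{n-2}}\Sigma(u)=u^{n-2}\Sigma(1/u).
\]
No real obstacle is anticipated: the whole argument is a bookkeeping reduction of the crossing relation for $T_n$ down to its spider piece. The only point requiring attention is the self-duality $u^{2n}c(1/u)=c(u)$, which ensures the identity contributions drop out before dividing; this was verified above by a direct index shift.
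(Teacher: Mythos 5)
Your proof is correct and is exactly the paper's argument: the paper's one-line proof ("follows from (\ref{Tcross}) and Proposition \ref{prop:spiders}") is precisely the computation you carried out, namely substituting the spider decomposition into the crossing relation, checking that the identity coefficient $\beta[n+1]_{u^2}+2u[n]_{u^2}$ is self-dual under $u\mapsto 1/u$ (up to $u^{2n}$), and cancelling the factor $u(\beta+2u)(2+\beta u)$, which is legitimate under the stated hypothesis. Nothing is missing.
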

\begin{proof}
The result follows from (\ref{Tcross}) and Proposition \ref{prop:spiders}.
\end{proof}
\begin{cor}\label{cor:TGamma}
The transfer operator decomposes uniquely as
\be
  T_n(u,\beta) = \big(\beta[n+1]_{u^2} + 2u[n]_{u^2}\big)\idn 
   + u(\beta + 2u)(2+\beta u)\sum_{a\in B_n'}\Gamma^{(n)}_a(u,\beta) a,
\ee
where $\Gamma^{(n)}_a(u,\beta)$ is polynomial in $u,\beta$ for every $a\in B_n'$.
\end{cor}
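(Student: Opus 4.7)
The plan is to start from the spider decomposition of Proposition \ref{prop:spiders} and expand each spider element $S^{(n)}_{j,k}(u)$ into the connectivity basis $B_n'$ with coefficients polynomial in $u$ and $\beta$. Summing and collecting then yields the claimed form, with uniqueness inherited from $B_n=\{\idn\}\cup B_n'$ being a $\Cbb$-basis for $\mathrm{TL}_n(\beta)$.

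For the expansion, I would substitute $R(u)=\mathds{1}_2+ue$ at every $R$-insertion position in the tangle \eqref{spider} and multiply out. Each resulting term is a planar diagram whose overall scalar factor is a monomial of the form $u^p\beta^q$, where $p$ counts the number of $e$-choices made and $q$ the number of closed loops produced during loop-reduction. Each such diagram then equals an element of $B_n$, and since the tangle \eqref{spider} forces a non-identity connectivity at the top (as noted immediately after the definition of $S^{(n)}_{j,k}$), every term in fact lies in $\mathrm{span}_\Cbb(B_n')$. Thus $S^{(n)}_{j,k}(u)=\sum_{a\in B_n'}\sigma^{(n)}_{j,k,a}(u,\beta)\,a$ with each $\sigma^{(n)}_{j,k,a}\in\Cbb[u,\beta]$. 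Setting
\[
\Gamma^{(n)}_a(u,\beta):=\sum_{j=0}^{n-2}\sum_{k=0}^{n-2-j}u^{2k}\,\sigma^{(n)}_{j,k,a}(u,\beta)\in\Cbb[u,\beta]
\]
and combining with Proposition \ref{prop:spiders} realises the stated decomposition.

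For uniqueness, since $B_n$ is a $\Cbb$-basis for $\mathrm{TL}_n(\beta)$, equating coefficients of each $a\in B_n'$ determines the product $u(\beta+2u)(2+\beta u)\,\Gamma^{(n)}_a(u,\beta)$ as a unique polynomial in $\Cbb[u,\beta]$, and the quotient by the nonzero polynomial $u(\beta+2u)(2+\beta u)$ in the fraction field $\Cbb(u,\beta)$ is unique. The only real subtlety is that a priori this division could merely produce a rational function; the genuine polynomiality of $\Gamma^{(n)}_a$ is precisely what Proposition \ref{prop:spiders} delivers, by canonically pre-factoring $u(\beta+2u)(2+\beta u)$ out of the spider-sum. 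This is the step where the bookkeeping done in Lemma \ref{lem:squid} and Proposition \ref{prop:spiders} pays off; the remaining expansion of the spiders into $B_n'$ is routine.
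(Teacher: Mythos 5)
Your proposal is correct and follows essentially the same route as the paper: the corollary is obtained from Proposition \ref{prop:spiders} by expanding each $S^{(n)}_{j,k}(u)$ into the connectivity basis with coefficients polynomial in $u,\beta$, restricting to $B_n'$ because $S^{(n)}_{j,k}(u)\in\mathrm{span}_{\Cbb}(B_n')$, and deducing uniqueness from the linear independence of $B_n$. Your explicit remark that the prefactor $u(\beta+2u)(2+\beta u)$ must be extracted \emph{before} expanding (so that no division in $\Cbb(u,\beta)$ is ever needed) is exactly the role Proposition \ref{prop:spiders} plays in the paper's own, more terse, argument.
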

\begin{proof}
With the parameterisation (\ref{RuKu}), the decomposition of $T_n(u,\beta)$ into connectivity diagrams 
(elements of $B_n$) involves only coefficients that are polynomial in $u,\beta$, 
and since $B_n$ is a linearly independent set, the decomposition is unique.
The restriction to a summation over $B'_n$ is permitted (and required for uniqueness)
because $S^{(n)}_{j,k}(u)\in \mathrm{span}_{\Cbb}(B'_n)$.
\end{proof}

\noindent
\textbf{Remark.} The expression (\ref{equ:TnBaxGen}) for $T_n(u,\beta)$ may be formally extended to $n=0$, 
yielding $T_0(u,\beta)=\beta\mathds{1}_0$.
With $\mathds{1}_0\equiv1$, this becomes $T_0(u,\beta)=\beta$.
\begin{prop}\label{prop:TnTna}
We have
\be
  [T_n(u,\beta),T_n(v,\beta)] = \sum_{a\in B_n'}p_a(u,v,\beta) a,
\label{TnuTnv}
\ee
where $p_a(u,v,\beta)$ is polynomial in $u,v,\beta$ for every $a\in B_n'$.
\end{prop}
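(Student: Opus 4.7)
The plan is to leverage the explicit decomposition of $T_n(u,\beta)$ from Corollary \ref{cor:TGamma}. Writing
$$T_n(u,\beta)=f(u,\beta)\idn+X(u,\beta),\qquad X(u,\beta):=\sum_{a\in B_n'}g_a(u,\beta)\,a,$$
with $f(u,\beta):=\beta[n+1]_{u^2}+2u[n]_{u^2}$ and $g_a(u,\beta):=u(\beta+2u)(2+\beta u)\Gamma_a^{(n)}(u,\beta)$, Corollary \ref{cor:TGamma} guarantees that $f$ and each $g_a$ is a polynomial in $u$ and $\beta$. Since $\idn$ is central, the identity piece contributes nothing to the commutator, so
$$[T_n(u,\beta),T_n(v,\beta)]=[X(u,\beta),X(v,\beta)]=\sum_{a,b\in B_n'}\!\!\big(g_a(u,\beta)g_b(v,\beta)-g_a(v,\beta)g_b(u,\beta)\big)\,ab.$$

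Next I would invoke the standard multiplication rule in $\mathrm{TL}_n(\beta)$: for any $a,b\in B_n$, the product satisfies $ab=\beta^{l(a,b)}c(a,b)$, where $l(a,b)\in\Nbb_0$ is the number of loops formed upon concatenation and $c(a,b)\in B_n$ is the resulting connectivity diagram. The closure property \eqref{abBn}, $a,b\in B_n'\Rightarrow ab\in B_n'$, ensures that $c(a,b)\in B_n'$ whenever $a,b\in B_n'$, so no $\idn$-contribution can arise in the commutator. Regrouping by target diagram $c\in B_n'$ then yields
$$p_c(u,v,\beta)=\!\!\!\sum_{\substack{a,b\in B_n'\\c(a,b)=c}}\!\!\!\beta^{l(a,b)}\big(g_a(u,\beta)g_b(v,\beta)-g_a(v,\beta)g_b(u,\beta)\big),$$
which is manifestly a polynomial in $u$, $v$, and $\beta$, as required.

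The main (and essentially only) nontrivial input is the polynomiality of the coefficients $\Gamma_a^{(n)}(u,\beta)$, which has already been established in Corollary \ref{cor:TGamma} through the refined decomposition into spider elements in Proposition \ref{prop:spiders}. Once that is in place, the remaining work is a routine bookkeeping exercise: the cancellation of the $\idn$-term is immediate from centrality, the absence of an $\idn$-contribution from the bilinear cross-terms is guaranteed by \eqref{abBn}, and polynomiality of the resulting $p_c$ is preserved under the TL multiplication rule because only non-negative powers of $\beta$ are produced by loop removals.
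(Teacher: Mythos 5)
Your proof is correct and takes essentially the same route as the paper's: both arguments rest on the polynomiality of the coefficients supplied by Corollary \ref{cor:TGamma} and on the closure property \eqref{abBn} to rule out any $\idn$-contribution from the cross-terms. You merely spell out the bookkeeping (cancellation of the central piece, the $ab=\beta^{l(a,b)}c(a,b)$ multiplication rule, and the regrouping by target diagram) that the paper's one-line proof leaves implicit.
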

\begin{proof}
The result follows from Corollary \ref{cor:TGamma}, noting that the restriction to a summation over $B'_n$ 
is permitted because of (\ref{abBn}).
\end{proof}

Let (indeterminate or nonzero scalar) $q$ be such that
\be
 \beta=q+q^{-1},
\label{q}
\ee
and define
\be
 F_n:=q^{-n}T_n(-q),\qquad \Fo_n:=q^nT_n(-q^{-1}).
\label{F}
\ee
Although these {\em braid elements} are known, see e.g.~\cite{MDSA11}, to be central and satisfy $F_n=\Fo_n$, 
we now present a proof of this in our notation, for completeness.
\begin{prop}
For every $n\in\Nbb$, we have
\be
 (\mathrm{i})\ \ \Fo_n=F_n,\qquad (\mathrm{ii})\ \ F_n\in\Zc\big(\mathrm{TL}_n(\beta)\big).
\ee
\end{prop}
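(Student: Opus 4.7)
Both statements follow from properties already established in the excerpt. Part (i) is an immediate consequence of crossing symmetry, while for part (ii) the main input is that $R(-q)$ acquires braid-like properties at the distinguished value $u=-q$, enabling a diagrammatic sliding argument.

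\medskip

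\emph{Part (i)} follows directly from the crossing-symmetry identity (\ref{Tcross}), $T_n(u)=u^{2n}T_n(1/u)$. Evaluating at $u=-q^{-1}$ gives $T_n(-q^{-1})=(-q^{-1})^{2n}T_n(-q)=q^{-2n}T_n(-q)$, and multiplying through by $q^n$ yields $\Fo_n = q^nT_n(-q^{-1}) = q^{-n}T_n(-q) = F_n$.

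\medskip

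\emph{Part (ii).} Since $\mathrm{TL}_n(\beta)$ is generated by $e_1,\ldots,e_{n-1}$, it suffices to show $[F_n,e_i]=0$ for each $i$. The key observation is that, at $u=-q$, the $R$-operator $R(-q)=\mathds{1}_2-qe$ becomes braid-like. Using $e^2=\beta e$ with $\beta=q+q^{-1}$, one verifies the inversion relation $R(-q)R(-q^{-1}) = (\mathds{1}_2-qe)(\mathds{1}_2-q^{-1}e) = \mathds{1}_2 - \beta e + e^2 = \mathds{1}_2$, which is a Reidemeister-II move. Using $e_ie_{i+1}e_i=e_i$ (together with $\beta=q+q^{-1}$), a direct calculation yields the constant-spectral-parameter braid relation $R_i(-q)R_{i+1}(-q)R_i(-q) = R_{i+1}(-q)R_i(-q)R_{i+1}(-q)$, which is a Reidemeister-III move.

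\medskip

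With these local moves available, $F_n$ is naturally viewed as the partial trace over one auxiliary strand of a specific braid word acting on $n+1$ strands. To show $e_iF_n=F_ne_i$, the strategy is to attach $e_i$ on the left of the transfer tangle and transport it through the $R(-q)$-operators using the Reidemeister moves above, around the auxiliary strand via its closure, and out to the right, each step being justified by one of the local identities.

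\medskip

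\emph{Main obstacle.} The principal difficulty is to make the sliding argument rigorous: one must sequence the local moves so that each passage of $e_i$ through an $R(-q)$-operator follows only from the established relations, and verify that the auxiliary-strand closure does not obstruct the transport. A cleaner alternative is a Markov-trace argument, namely showing that the partial trace $\tau_{n+1}$ appearing in (\ref{equ:CAtransferAlg}) satisfies $\tau_{n+1}(e_ix) = \tau_{n+1}(xe_i)$ when $x$ is a braid word in the $R_j(-q)$'s, by invoking conjugation invariance of the Markov trace on the $n$-strand braid subgroup sitting inside the $(n+1)$-strand braid group. Either route reduces centrality of $F_n$ to classical braid-closure invariance, once the Reidemeister moves at $u=-q$ have been established.
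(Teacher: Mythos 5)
Part (i) is correct and essentially the paper's argument: evaluating the crossing relation $T_n(u)=u^{2n}T_n(1/u)$ at $u=-q^{-1}$ gives $\Fo_n=F_n$ (the paper cites Corollary \ref{cor:spidercrossing}, which encodes the same symmetry).

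Part (ii), however, has a genuine gap: you establish two correct local identities (the inversion $R(-q)R(-q^{-1})=\mathds{1}_2$ and the braid relation for $g_i=\mathds{1}-qe_i$), but the actual centrality computation is never carried out --- you yourself flag the ``sliding argument'' as the principal unresolved obstacle and then offer a second, equally unexecuted alternative. Neither route is complete as written. The sliding route still needs the specific absorption identities that let $e_i$ pass through the double-row configuration, and the Markov-trace route is misstated: $\tau_{n+1}$ is a \emph{partial} trace, not a trace, so it is not cyclically invariant; what one actually has is the conditional-expectation property $\tau_{n+1}(axb)=a\,\tau_{n+1}(x)\,b$ for $a,b\in\mathrm{TL}_n$, which reduces the claim to showing that the pre-trace braid word $\sigma_n\cdots\sigma_1\sigma_1\cdots\sigma_n$ (the element in which the auxiliary strand encircles strands $1,\ldots,n$, i.e.\ $\Delta_{n+1}^2\Delta_n^{-2}$) commutes with the image of $B_n$ --- a nontrivial input you neither state nor prove. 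The paper's proof is much more direct and avoids Reidemeister moves entirely: it verifies the local absorption identities whereby $e_i$ meeting the pair of $R(-q)$-operators directly above (respectively below) it in the double-row diagram collapses that pair to a factor of $q^2$ times identity strands; applying this once from below to $e_iF_n$ and once from above to $F_ne_i$ produces $q^2$ times the \emph{same} intermediate diagram, whence $e_iF_n=F_ne_i$ for all $i$ and $F_n\in\Zc(\mathrm{TL}_n(\beta))$. Your plan could likely be completed along the braid-group lines sketched above, but as submitted the proof of (ii) is a strategy, not a proof.
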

\begin{proof}
Property (i) follows from Corollary \ref{cor:spidercrossing}.
As to (ii), we have
\be
 \raisebox{-0.95cm}{\TLDDtop} = q^2\raisebox{-0.95cm}{\TLDDtopId}\;,\qquad\quad
  \raisebox{-1.45cm}{\TLDDbot} = q^2 \raisebox{1.1cm}{\scalebox{1}[-1]{\TLDDtopId}}\;.
\ee
It follows that
\be
 e_iF_n
 =q^2\raisebox{-1.3cm}{\BraidTLComExpr}
 =F_ne_i,\qquad i=1,\ldots,n-1,
\ee
hence $F_n\in\Zc\big(\mathrm{TL}_n(\beta)\big)$.
\end{proof}
\noindent
We note that
\be
 F_n=\big(q^{n+1}+q^{-n-1}\big)\idn
   -\frac{(q^2-1)^2}{q^n}\sum_{j=0}^{n-2}\sum_{k=0}^{n-2-j} \!\!q^{2k}S^{(n)}_{j,k}(-q,q+q^{-1}).
\ee

\subsection{Yang--Baxter integrability}
\label{Sec:TLYBI}

The parameterisations (\ref{RuKu}) provide a Baxterisation for all $\beta$, as
\be
 \raisebox{-0.325cm}{\RotRopDarkRedOpBBi}\;
 =\yo^{(i)}(u,v)\raisebox{-0.325cm}{\RotRopTLxIdST} +\raisebox{-0.325cm}{\RotRopTLxEST}\; ,\qquad
 \raisebox{-0.325cm}{\RotRopRedOpBBi}\;
 =y^{(i)}(u,v)\raisebox{-0.325cm}{\RotRopTLxIdST} +\raisebox{-0.325cm}{\RotRopTLxEST}\; ,
\label{tlaux}
\ee
where
\be
 \yo^{(1)}(u,v)=\frac{1-uv}{u+v+\beta uv},\qquad y^{(1)}(u,v)=\frac{uv-1}{u+v+\beta},
\label{y1}
\ee
and
\begin{align}
 \yo^{(2)}(u,v)=\yo^{(3)}(u,v)=\frac{u-v}{1+\beta v+uv},\qquad
 y^{(2)}(u,v)=y^{(3)}(u,v)=\frac{v-u}{1+\beta u+uv},
\label{y2y3}
\end{align}
solve the inversion relations, YBEs and BYBEs in Proposition \ref{Prop:YBE}. It follows 
that $T_n(v)$ commutes with $T_n(u)$ (that is, $T_n(v)\in\Cc_{\mathrm{TL}_n(\beta)}(T_n(u))$) for all $v$
for which the functions $\yo^{(i)}(u,v)$ and $y^{(i)}(u,v)$, $i=1,2,3$, are well-defined.
The next result extends this commutativity to all $u,v,\beta\in\mathbb{C}$.
\begin{prop}\label{Prop:TT0}
For every $\beta\in\Cbb$, we have
\be
 [T_n(u,\beta),T_n(v,\beta)]=0,\qquad\forall\,u,v\in\Cbb.
\label{TnTn0}
\ee
\end{prop}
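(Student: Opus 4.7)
The plan is to combine the Yang--Baxter integrability established by Proposition \ref{Prop:YBE} on a generic set of parameters with the polynomial structure of the commutator provided by Proposition \ref{prop:TnTna}, and then use a density argument to upgrade commutativity from a dense subset to all of $\Cbb^3$.

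First I would verify that the explicit auxiliary operators in \eqref{tlaux}--\eqref{y2y3} do satisfy the inversion identities, YBEs and BYBEs of Proposition \ref{Prop:YBE} for $R(u),K(u),\Ko(u)$ as in \eqref{RuKu}. Each relation reduces to a short planar-algebraic identity in $\mathrm{TL}_n(\beta)$, where both sides expand in the basis $\{\idn,e\}$ of $\mathrm{T}_2$ (or in a suitable two-element subset of $B_n$) and matching of coefficients yields a small system of rational equations in $u,v,\beta$ that $\yo^{(i)}$ and $y^{(i)}$ are designed to solve. Consequently, on the Zariski-open subset
\be
 \Omega := \{(u,v,\beta)\in\Cbb^3\,|\,(u+v+\beta uv)(u+v+\beta)(1+\beta v+uv)(1+\beta u+uv)\neq 0\},
\ee
all auxiliary operators are well-defined, and Proposition \ref{Prop:YBE} yields $[T_n(u,\beta),T_n(v,\beta)]=0$.

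Next I would invoke Proposition \ref{prop:TnTna}: the commutator admits a unique expansion
\be
 [T_n(u,\beta),T_n(v,\beta)] = \sum_{a\in B_n'}p_a(u,v,\beta)\,a,
\ee
with each $p_a\in\Cbb[u,v,\beta]$. Vanishing of the commutator on $\Omega$, combined with the linear independence of $B_n$ in $\mathrm{TL}_n(\beta)$, forces each polynomial $p_a$ to vanish on $\Omega$. Since $\Omega$ is Zariski-dense in $\Cbb^3$ (its complement is contained in a finite union of hypersurfaces), each $p_a$ vanishes identically on $\Cbb^3$. Hence $[T_n(u,\beta),T_n(v,\beta)]=0$ for all $u,v,\beta\in\Cbb$.

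The only mildly subtle point is bookkeeping: one must be sure that every coefficient function obstructing the YBE/BYBE argument is, in fact, a polynomial factor whose vanishing locus is a proper algebraic subset of $\Cbb^3$, so that $\Omega$ is genuinely Zariski-dense. This is immediate from the explicit rational expressions in \eqref{y1}--\eqref{y2y3}, so no real obstacle arises; the polynomial-and-density principle does the rest.
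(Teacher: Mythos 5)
Your proposal is correct and follows essentially the same route as the paper: apply Proposition \ref{Prop:YBE} with the auxiliary operators \eqref{tlaux}--\eqref{y2y3} away from the vanishing loci of the denominators, then use the polynomial expansion of the commutator from Proposition \ref{prop:TnTna} together with density to conclude that each $p_a$ vanishes identically. The only cosmetic difference is that you treat $(u,v,\beta)$ jointly in $\Cbb^3$, whereas the paper fixes $\beta$ and argues over $(u,v)\in\Cbb^2$; both are valid since the $p_a$ are polynomial in all variables.
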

\begin{proof}
For each $\beta$, Proposition \ref{Prop:YBE} implies that the commutator is zero for all $u,v\in\Cbb$ 
except possibly along the algebraic curves
\be
 u+v+\beta uv=0,\qquad u+v+\beta=0,\qquad 1+\beta v+uv=0,\qquad 1+\beta u+uv=0,
\ee
defined by setting the denominators in (\ref{y1}) and (\ref{y2y3}) to zero.
That is, for every $(u,v)\in\Cbb^2$ not on any of the curves, the polynomials in Proposition \ref{prop:TnTna}
satisfy $p_a(u,v,\beta)=0$. It follows that, for all $a\in B_n'$, $p_a(u,v,\beta)=0$ for all $(u,v)\in\Cbb^2$, 
hence $[T_n(u,\beta),T_n(v,\beta)]=0$ for all $u,v\in\Cbb$.
\end{proof}

We note that the auxiliary operators (\ref{tlaux}) can be expressed in terms of the $R$-operator as
\be
 \raisebox{-0.325cm}{\RotRopDarkRedOpBBi}\;=\yo^{(i)}(u,v)R\big(\tfrac{1}{\yo^{(i)}(u,v)}\big),\qquad
 \raisebox{-0.325cm}{\RotRopRedOpBBi}\;=y^{(i)}(u,v)R\big(\tfrac{1}{y^{(i)}(u,v)}\big).
\ee
Accordingly, for the Temperley--Lieb planar algebra, the full generality of the sufficient 
conditions \eqref{Invs}--\eqref{BYBEs} is not exploited. To contrast, we refer to \cite{PR23} for an example 
of an integrable planar-algebraic model relying nontrivially on the generality of the sufficient conditions.

We stress that the integrability (\ref{TnTn0}) is present {\em without} 
constraining the parameterisation of the $R$-operator in (\ref{RuKu}). 
We accordingly refer to the Baxterisation as {\em free} and the Temperley--Lieb loop model as 
{\em freely Baxterisable}.
The usual (here renormalised) trigonometric Baxterisation \cite{PRZ06},
\be
R(u)=\!\raisebox{-0.325cm}{\RotRopGreenOpBBTLST}\,
 =y(u)\raisebox{-0.325cm}{\RotRopTLxIdST} +\raisebox{-0.325cm}{\RotRopTLxEST}\; ,\qquad
 y(u)=\frac{\sin(\lambda-u)}{\sin u},\qquad
 \beta=2\cos\lambda,
\ee
is not free, as
\be
 y^2(u)+\beta y(u)+1=\frac{\sin^2\!\lambda}{\sin^2\!u}.
\ee
It nevertheless provides a solution to the relations in Proposition \ref{Prop:YBE},
as we can normalise the auxiliary operators such that
\begin{align}
    \raisebox{-0.325cm}{\RotRopDarkRedOpBBia}\;&=R(u+v), &
    \raisebox{-0.325cm}{\RotRopRedOpBBia}\;&=R(2\lambda-u-v),
 \\[.15cm]
    \raisebox{-0.325cm}{\RotRopDarkRedOpBBib}\;=\raisebox{-0.325cm}{\RotRopDarkRedOpBBic}\;
    &=R(\lambda-u+v), &
\hspace{-1.3cm}
    \raisebox{-0.325cm}{\RotRopRedOpBBib}\;=\raisebox{-0.325cm}{\RotRopRedOpBBic}\;&=R(\lambda+u-v).
\end{align}
In this case, the YBEs \eqref{YBE} assume the standard \textit{additive} form. 
Crossing symmetry of the $R$-operator and the (trigonometric) transfer operator amounts to
\be
     \raisebox{-0.325cm}{\RotRopGreenOpBBTLST} 
     \ = y(u)\raisebox{-0.325cm}{\RotRopGreenOpBBTLSTRotlau}\;,\qquad
     y^{-n}(u)T_n(u) =y^{-n}(\tilde{u})T_n(\tilde{u}),\qquad
  \tilde{u} \equiv\lambda - u.
\ee

\subsection{Polynomial integrability}
\label{Sec:TLB}

Having established the (Yang--Baxter) integrability of the Temperley--Lieb loop model in Proposition \ref{Prop:TT0}, 
we now turn to the integrals of motion of the model. 
It is known, see for example \cite{MDRRSA16}, that the bilinear form $\langle\cdot,\cdot\rangle_{n,d}$,
with the corresponding scalar field restricted to $\Rbb$,
is an inner product for large enough $\beta\in\Rbb$.
In fact, it follows from (\ref{equ:TLGramDet}) that, for $\beta>2\cos\frac{\pi}{n}$, 
the form is an inner product and $\mathrm{TL}_n(\beta)$ is semisimple.
Since the transfer operator is self-adjoint, $T_n^{\dagger}(u,\beta) = T_n(u, \beta)$, 
with respect to the form, it is thus diagonalisable for $\beta>2\cos\frac{\pi}{n}$ and $u\in\Rbb$. 
Using this, the following result follows from Corollary \ref{Cor:diagsemi}.
\begin{prop}\label{prop:TLIntSuff}
  For $\beta>2\cos\frac{\pi}{n}$, $n\in\Nbb$, and $u\in\Rbb$, 
  the Temperley--Lieb loop model described by $T_n(u, \beta)$ is polynomially integrable.
\end{prop}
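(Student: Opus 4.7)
The plan is to apply Corollary \ref{Cor:diagsemi} with $\mathcal{A}=\mathrm{TL}_n(\beta)$, $\Omega=\Rbb$, and $C(u)=T_n(u,\beta)$. To do this, I need to verify three ingredients: (i) $\mathrm{TL}_n(\beta)$ is a finite-dimensional unital associative semisimple algebra; (ii) the family $\{T_n(u,\beta)\}_{u\in\Rbb}$ is commutative; and (iii) there is a $u$-independent basis relative to which $\rho_n(T_n(u,\beta))$ is diagonal for every $u\in\Rbb$. Ingredient (ii) is immediate from Proposition \ref{Prop:TT0} (restricted to $u,v\in\Rbb$), and finite-dimensionality and unitality of $\mathrm{TL}_n(\beta)$ are structural.

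For the semisimplicity part of (i), I would use the Gram-determinant formula \eqref{equ:TLGramDet}: when $\beta>2\cos\frac{\pi}{n}$, every Chebyshev factor $U_k(\beta/2)$ appearing with $k\le n$ is strictly positive, so $\det\TLgram_{n,d}\neq 0$ for all $d\in\TLposet_n$. By the cellular-algebra machinery recalled in Section \ref{Sec:Cellular}, this forces each standard module $\TLmodule_{n,d}$ to be simple and yields semisimplicity of $\mathrm{TL}_n(\beta)$, with the representation $\rho_n$ of Section \ref{Sec:Cellularity} faithful. Moreover, the same positivity, combined with a continuity argument on the connected set $\beta>2\cos\frac{\pi}{n}$ (and the obvious positive-definiteness in the $\beta\to\infty$ limit), upgrades each $\langle\cdot,\cdot\rangle_{n,d}$ to a genuine positive-definite inner product on $V_{n,d}$.

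The substantive step is (iii). Because $\dagger$ is an adjoint for $\langle\cdot,\cdot\rangle_{n,d}$ and the planar diagram \eqref{equ:TnBaxGen} for $T_n(u,\beta)$ is invariant under horizontal reflection when the scalars in \eqref{RuKu} are real, we have $T_n(u,\beta)^\dagger=T_n(u,\beta)$ for $u\in\Rbb$. Consequently each $\rho_{n,d}(T_n(u,\beta))$ is a self-adjoint operator on the finite-dimensional inner-product space $V_{n,d}$, hence diagonalisable. Since this family is also commutative (as the image under a homomorphism of a commutative family), the standard linear-algebra fact that a commuting family of diagonalisable operators on a finite-dimensional space admits a single common eigenbasis gives a $u$-independent basis of $V_{n,d}$ diagonalising every $\rho_{n,d}(T_n(u,\beta))$. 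Concatenating such bases over $d\in\TLposet_n$ in the order \eqref{LLL} produces the $u$-independent basis of $V_n=\bigoplus_{d}V_{n,d}$ required by Corollary \ref{Cor:diagsemi}.

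With (i)--(iii) in hand, Corollary \ref{Cor:diagsemi} yields $b\in\mathrm{TL}_n(\beta)$ such that $T_n(u,\beta)\in\Cbb[b]$ for every $u\in\Rbb$, which is exactly polynomial integrability in the sense of Section \ref{Sec:Pol}. The only mildly subtle point is producing a \emph{single} eigenbasis for the entire uncountable family $\{T_n(u,\beta)\}_{u\in\Rbb}$ rather than a separate one for each $u$; this is handled on the finite-dimensional space $V_n$ by iterated refinement of joint eigenspaces, a process that terminates after finitely many steps simply because $\dim V_n<\infty$.
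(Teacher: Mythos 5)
Your proposal is correct and follows essentially the same route as the paper: the Gram-determinant formula \eqref{equ:TLGramDet} gives an inner product and semisimplicity for $\beta>2\cos\frac{\pi}{n}$, self-adjointness of $T_n(u,\beta)$ under $\dagger$ gives diagonalisability for $u\in\Rbb$, and Corollary \ref{Cor:diagsemi} then yields polynomial integrability. You merely fill in details the paper leaves implicit (positivity of the Chebyshev factors, the continuity argument for positive-definiteness, and the simultaneous diagonalisation of the commuting family), all of which are sound.
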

\noindent
\textbf{Remark.}
For $\beta>2$, the Temperley--Lieb planar algebra is, in fact, a subfactor planar algebra.
\medskip

\noindent
Proposition \ref{prop:TLIntSuff} implies that there exists a $u$-independent integral of motion 
$b_n\in\mathrm{TL}_n(\beta)$ such that $T_n(u, \beta)\in\Rbb[b_n]$, and following Section \ref{Sec:Ass}, 
$\rho_n(b_n)$ and $\rho_n(T_n(u, \beta))$ will have closely related Jordan decompositions.
We note that the usual hamiltonian (see (\ref{h0}) below) has this property when $T_n(u, \beta)$ is expanded to 
lowest nontrivial order in $u$. In the following, we derive the principal hamiltonians associated with 
$T_n(u, \beta)$, see (\ref{h0}) and (\ref{hustar}), and use spectral analysis to argue that both of these 
$\mathrm{TL}_n(\beta)$-elements can indeed play the role of $b_n$, at least for small $n$. We supplement this 
result by determining an explicit polynomial expression for $T_n(u, \beta)$ in terms of each of the hamiltonian 
elements, and find that they are well-defined for all $u\in\Cbb$ and all but finitely many $\beta$-values in $\Cbb$. 
The restrictions on $u$ and $\beta$ in Proposition \ref{prop:TLIntSuff} can thus be relaxed accordingly, at least for 
small $n$. Moreover, we find that, for small $n$, $T_n(u, \beta)$ is polynomial in at least one of the two principal 
hamiltonians for all $\beta,u\in\Cbb$, see the discussion following (\ref{EnEnempty}).
\medskip

\noindent
\textbf{Remark.}
In addition to {\em establishing} polynomial integrability, our focus in the following is on probing the naturally arising 
principal hamiltonians as candidates for the integral of motion $b_n$. In fact, one could also explore whether any 
given specialisation of $T_n(u)$, where $u$ is fixed to some value, could play the role of $b_n$. We have indeed 
examined a number of such candidates, including the
braid-limit element $F_n$ in (\ref{F}), but the standard hamiltonian (\ref{h0}) has so far had the fewest number 
of {\em exceptional points}, see the discussion in Section \ref{Sec:PolTL}.

\subsection{Hamiltonian limits}
\label{Sec:h}

Proposition \ref{prop:special} gives sufficient conditions for the determination of identity points.
We now classify the identity points for $T_n(u,\beta)$, $n\geq2$. 
\begin{prop}\label{prop:TLIdPts}
Let $n\in\Zbb_{\ge 2}$. For $\beta\notin\{-2,0,2\}$, the set of identity points for $T_n(u,\beta)$ is given by 
$\{0,-\frac{\beta}{2},-\frac{2}{\beta}\}$. For $T_n(u,\pm2)$, the set of identity points is given by $\{0,\mp1\}$.
For $T_n(u,0)$, the only identity point is $u_*=0$.
\end{prop}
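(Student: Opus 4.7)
First I would verify that each value in the proposed list is an identity point by direct application of Proposition \ref{prop:special}. The condition $u(2+\beta u) = 0$ yields $u = 0$ and, when $\beta \neq 0$, $u = -2/\beta$, at each of which $T_n(u, \beta) = \beta\,\idn$. The condition $\beta + 2u = 0$ yields $u = -\beta/2$, at which $T_n(u, \beta) = \beta u^{2n}\idn$. Collecting the distinct values gives the three cases of the proposition: $\{0, -\beta/2, -2/\beta\}$ for $\beta \notin \{-2, 0, 2\}$; $\{0, \mp 1\}$ for $\beta = \pm 2$, where $-\beta/2$ and $-2/\beta$ coincide at $\mp 1$; and $\{0\}$ for $\beta = 0$, where $-\beta/2$ collapses to $0$ and $-2/\beta$ is undefined.

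\textbf{Converse: setup.} For the converse, I would invoke the unique decomposition of Corollary \ref{cor:TGamma},
\[
T_n(u,\beta) = \big(\beta[n+1]_{u^2} + 2u[n]_{u^2}\big)\idn + u(\beta + 2u)(2+\beta u)\sum_{a\in B_n'}\Gamma^{(n)}_a(u,\beta)\,a.
\]
Since $B_n = \{\idn\}\cup B_n'$ is a basis of $\mathrm{TL}_n(\beta)$, the condition $T_n(u_*, \beta) \in \Cbb\,\idn$ is equivalent to $u_*(\beta+2u_*)(2+\beta u_*)\,\Gamma^{(n)}_a(u_*, \beta) = 0$ for every $a \in B_n'$. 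For $u_*$ outside $\{0, -\beta/2, -2/\beta\}$ the scalar prefactor is nonzero, and the problem reduces to showing that
\[
X(u_*, \beta) := \sum_{j=0}^{n-2}\sum_{k=0}^{n-2-j} u_*^{2k}\,S^{(n)}_{j,k}(u_*) \neq 0 \quad \text{in } \mathrm{TL}_n(\beta).
\]
Since $X(u,\beta)$ is a $\mathrm{TL}_n(\beta)$-valued polynomial in $u$, this bound already implies that the identity-point set is finite; the proposition asserts that it is in fact contained in $\{0, -\beta/2, -2/\beta\}$.

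\textbf{Main obstacle.} The remaining task, which I expect to be the crux, is to rule out any zero of $X(u,\beta)$ outside the distinguished set. I would pursue this by exhibiting a specific basis element $a \in B_n'$ whose coefficient polynomial $\Gamma^{(n)}_a(u, \beta)$ does not vanish at the proposed $u_*$. A natural candidate is $a = e_{j+1}$: the identification $S^{(n)}_{j, n-j-2}(u) = e_{j+1}$ already contributes a clean term of degree $2(n-j-2)$ in $u$ to the coefficient of $e_{j+1}$, while the remaining spider contributions can in principle be tracked by expanding each R-operator as $R(u) = \idn + u\,e$ and recording which configurations collapse to $e_{j+1}$. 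A useful sanity check is the value at the origin, $X(0, \beta) = e_1 + \cdots + e_{n-1} \neq 0$, which confirms that $u = 0$ is not itself a zero of $X$ and suggests the induction base for an inductive argument on $n$. Making this spider bookkeeping rigorous for all $n$ and all $\beta$ simultaneously is the main difficulty; an alternative, applicable for $\beta$ outside the finitely many values where $\mathrm{TL}_n(\beta)$ fails to be semisimple, is to use the faithful representation $\rho_n$ on $V_n = \bigoplus_{d\in D_n}V_{n,d}$ and analyse when $\rho_n(T_n(u_*, \beta))$ can be a scalar matrix, then handle the exceptional $\beta$-values (notably $\beta \in \{-2, 0, 2\}$) separately by direct calculation, perhaps using $T_n(u_*,\beta)$ as a polynomial in $u_*$ and checking its low-order coefficients.
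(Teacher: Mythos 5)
Your forward direction is fine and matches the paper's: the listed values are identity points by Proposition \ref{prop:special}, or equivalently because the prefactor $u(\beta+2u)(2+\beta u)$ in Proposition \ref{prop:spiders} vanishes there. Your reduction of the converse to the non-vanishing of $\sum_{j,k}u_*^{2k}S^{(n)}_{j,k}(u_*)$ for $u_*$ outside the distinguished set is also the right move. But the proposal stops exactly where the work is: you never exhibit a basis element of $B_n'$ whose coefficient is provably nonzero at such a $u_*$. Your candidate $a=e_{j+1}$ is a poor choice, since its coefficient receives contributions from many of the spiders $S^{(n)}_{j',k'}$, and you concede that the resulting bookkeeping is ``the main difficulty'' without resolving it; the representation-theoretic alternative is likewise only a restatement of the problem (``analyse when $\rho_n(T_n(u_*,\beta))$ can be a scalar matrix'') together with an unspecified treatment of the exceptional $\beta$-values. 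So the crux of the converse is missing.

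The paper closes this gap with a single observation that makes the bookkeeping trivial: the connectivity diagram corresponding to the word $e_1e_2\cdots e_{n-1}$ occurs in $S^{(n)}_{j,k}$ only for $j=k=0$, and there with coefficient $u^{n-2}$. By Proposition \ref{prop:spiders}, its total coefficient in $T_n(u,\beta)$ is therefore exactly $u^{n-1}(\beta+2u)(2+\beta u)$, with no other contributions to track, and this expression vanishes precisely at $u\in\{0,-\tfrac{\beta}{2},-\tfrac{2}{\beta}\}$ (with the appropriate degenerations for $\beta\in\{-2,0,2\}$). Replacing your $e_{j+1}$ by this ``maximal word'' is the missing idea that turns your setup into a complete proof.
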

\begin{proof}
It follows from (\ref{spider}) that the connectivity diagram corresponding to $e_1\cdots e_{n-1}$ only appears in 
$S_{j,k}^{(n)}$ for $j=k=0$, and with coefficient $u^{n-2}$. By Proposition \ref{prop:spiders}, the element thus 
appears in $T_n(u,\beta)$ with coefficient $u^{n-1}(\beta+2u)(2+\beta u)$.
This expression vanishes exactly for the indicated values for $u$.
\end{proof}

To determine the hamiltonian associated with the identity point $u_*=0$, 
we use Proposition \ref{prop:spiders} to compute
\begin{align}
 T_n(\epsilon,\beta)\big|_{\beta\neq0}
 &=(\beta+2\epsilon)\idn+2\epsilon\beta\sum_{j=1}^{n-1}e_j+\Oc(\epsilon^2),
\label{Tneps}
 \\
 \tfrac{1}{2\epsilon}T_n(\epsilon,0)&=\idn+2\epsilon\sum_{j=1}^{n-1}e_j+\Oc(\epsilon^2).
\end{align}
For $n\geq2$ and all $\beta$, we may thus choose the familiar (see e.g.~\cite{PS90,PRZ06,PR07})
\be
 h_0:=-\sum_{i=1}^{n-1}e_i
\label{h0}
\ee
as the principal hamiltonian associated with $u_*=0$.
\medskip

\noindent
{\bf Remark.}
There is also a `hidden' identity point at infinity, see the Remark, following (\ref{equ:TnBaxGen}), that addresses
the extension of the domain for $u$ from $\Cbb$ to the Riemann sphere. The corresponding principal hamiltonian
is proportional to $h_0$.
\medskip

Hamiltonians associated with the identity points $u_*=-\frac{\beta}{2}\neq0$ and $u_*=-\frac{2}{\beta}$ 
do not seem to have been discussed before in the literature.
To determine the corresponding principal hamiltonians, $h_{\text{-}\frac{\beta}{2}}$ and $h_{\text{-}\frac{2}{\beta}}$, 
we expand as
\begin{align}
 T_n(-\tfrac{\beta}{2}+\epsilon,\beta)\big|_{\beta\neq0,\pm2}
 &=\Big(\tfrac{\beta^{2n+1}}{4^n}+\epsilon\big(2[n]_{\frac{\beta^2}{4}}-\tfrac{n\beta^{2n}}{4^{n-1}}\big)\!\Big)\idn
  +\epsilon(\beta^2-4)\sum_{j=0}^{n-2}\sum_{k=0}^{n-2-j} \!\!\big(\tfrac{\beta}{2}\big)^{2k+1}
     S^{(n)}_{j,k}(-\tfrac{\beta}{2})+\Oc(\epsilon^2),
 \label{Tnbeta}
 \\
 T_n(-\tfrac{2}{\beta}+\epsilon,\beta)\big|_{\beta\neq0,\pm2}
 &=\big(\beta
  -2\epsilon[n]_{\frac{4}{\beta^2}}\big)\idn
  -\epsilon(\beta^2-4)\sum_{j=0}^{n-2}
    \sum_{k=0}^{n-2-j} \!\!\big(\tfrac{2}{\beta}\big)^{2k+1}S^{(n)}_{j,k}(-\tfrac{2}{\beta})+\Oc(\epsilon^2),
 \end{align}
and
\begin{align}
 T_n(-1+\epsilon,2)&=2(1-\epsilon n+\epsilon^2n^2)\idn
   -4\epsilon^2\sum_{j=0}^{n-2}\sum_{k=0}^{n-2-j} \!\!S^{(n)}_{j,k}(-1,2)+\Oc(\epsilon^3),
 \\[.1cm]
 T_n(1+\epsilon,-2)&=-2(1+\epsilon n+\epsilon^2n^2)\idn
   -4\epsilon^2\sum_{j=0}^{n-2}\sum_{k=0}^{n-2-j} \!\!S^{(n)}_{j,k}(1,-2)+\Oc(\epsilon^3).
\label{Tnm2}
\end{align}
\begin{prop}
For $n\geq2$, $\beta\neq0$, and up to rescaling, 
the principal hamiltonian for $u_*\in\{-\frac{\beta}{2},-\frac{2}{\beta}\}$ is given by
\be
 h_{u_*}=\frac{1}{u_*^{n-2}}\sum_{j=0}^{n-2}\sum_{k=0}^{n-2-j} \!u_*^{2k}S^{(n)}_{j,k}(u_*).
\label{hustar}
\ee
With the chosen normalisation, it holds that $h_{\text{-}\frac{\beta}{2}}=h_{\text{-}\frac{2}{\beta}}$.
\end{prop}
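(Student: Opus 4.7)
The plan is to read off the principal hamiltonian directly from the decomposition in Proposition \ref{prop:spiders}. Abbreviating
\be
S(u,\beta):=\beta[n+1]_{u^2}+2u[n]_{u^2},\qquad F(u,\beta):=u(\beta+2u)(2+\beta u),\qquad G(u,\beta):=\!\!\sum_{j=0}^{n-2}\sum_{k=0}^{n-2-j}\!\!u^{2k}S^{(n)}_{j,k}(u),
\ee
we have $T_n(u,\beta)=S(u,\beta)\,\idn+F(u,\beta)\,G(u,\beta)$, and for $\beta\neq0$ the factorisation $F(u,\beta)=2\beta u(u+\tfrac{\beta}{2})(u+\tfrac{2}{\beta})$ makes the zero structure transparent. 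At each identity point $u_*\in\{-\tfrac{\beta}{2},-\tfrac{2}{\beta}\}$ we have $F(u_*,\beta)=0$, so the product rule yields
\be
\tfrac{\partial T_n}{\partial u}\big|_{u_*}=\tfrac{\partial S}{\partial u}(u_*,\beta)\,\idn+\tfrac{\partial F}{\partial u}(u_*,\beta)\,G(u_*,\beta),
\ee
the $F(u_*,\beta)\,\partial_u G$ contribution dropping out.

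For $\beta\notin\{0,\pm2\}$, the two identity points are distinct simple zeroes of $F(\cdot,\beta)$, and $\tfrac{\partial F}{\partial u}(u_*,\beta)$ is a nonzero scalar (explicitly $\tfrac{\beta(\beta^2-4)}{2}$ at $u_*=-\tfrac{\beta}{2}$ and $\tfrac{2(4-\beta^2)}{\beta}$ at $u_*=-\tfrac{2}{\beta}$). By the combinatorial argument used in the proof of Proposition \ref{prop:TLIdPts}, the diagram $e_1\cdots e_{n-1}$ appears in $S^{(n)}_{j,k}(u)$ only for $(j,k)=(0,0)$ and there with coefficient $u^{n-2}$; thus $G(u_*,\beta)$ has nonzero coefficient $u_*^{n-2}$ on the basis element $e_1\cdots e_{n-1}\in B'_n$, so $G(u_*,\beta)\notin\Cbb\idn$. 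This identifies $k=1$ in (\ref{equ:FilteredHamFromTrans}), and up to rescaling the principal hamiltonian is $u_*^{-(n-2)}G(u_*,\beta)$, matching (\ref{hustar}). For the exceptional case $\beta=\pm2$, the two identity points collide at $u_*=\mp1$ where $F$ has a double zero; an analogous second-derivative calculation then gives the non-scalar part of $\tfrac{1}{2}\tfrac{\partial^2 T_n}{\partial u^2}|_{u_*}$ as $\tfrac{1}{2}\tfrac{\partial^2 F}{\partial u^2}(u_*,\beta)\,G(u_*,\beta)$, which is again a nonzero rescaling of $h_{u_*}$.

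The equality $h_{-\beta/2}=h_{-2/\beta}$ is automatic when $\beta=\pm2$. Otherwise, we invoke Corollary \ref{cor:spidercrossing} at $u=-\tfrac{\beta}{2}$ (so $1/u=-\tfrac{2}{\beta}$). Its stated hypothesis $u(\beta+2u)(2+\beta u)\neq0$ strictly excludes $u=-\tfrac{\beta}{2}$, but both sides of the identity are Laurent polynomials in $u$ with the only singularity at $u=0$, so the identity extends by rational continuation to every $u\neq0$; evaluating at $u=-\tfrac{\beta}{2}$ then gives exactly $h_{-\beta/2}=h_{-2/\beta}$ in the normalisation (\ref{hustar}). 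The only delicate point I anticipate is this extension of Corollary \ref{cor:spidercrossing} past its stated domain; the remainder reduces to bookkeeping of $\partial_u F$ and $\partial_u S$ together with the non-vanishing observation on $e_1\cdots e_{n-1}$ used in Proposition \ref{prop:TLIdPts}.
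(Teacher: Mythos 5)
Your proof is correct and follows essentially the same route as the paper's: the paper extracts the leading non-scalar term from the explicit expansions (\ref{Tnbeta})--(\ref{Tnm2}) (which are exactly your product-rule computation applied to Proposition \ref{prop:spiders}, including the second-order treatment at the double zero for $\beta=\pm2$) and then cites Corollary \ref{cor:spidercrossing} for $h_{\text{-}\frac{\beta}{2}}=h_{\text{-}\frac{2}{\beta}}$. Your two additional checks---that $G(u_*,\beta)$ is genuinely non-scalar, which pins down the order $k$ in (\ref{equ:FilteredHamFromTrans}), and that Corollary \ref{cor:spidercrossing} extends past its stated domain by Laurent-polynomial continuation---are points the paper leaves implicit, and both are handled correctly.
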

\begin{proof}
The expression (\ref{hustar}) follows from (\ref{hprincipal}) and the expansions (\ref{Tnbeta})--(\ref{Tnm2}).
The relation $h_{\text{-}\frac{\beta}{2}}=h_{\text{-}\frac{2}{\beta}}$ follows from Corollary \ref{cor:spidercrossing}.
\end{proof}
\noindent
Although $h_{n,\text{-}\frac{2}{\beta}}$ and $h_{n,\text{-}\frac{\beta}{2}}$ are linearly dependent,
$h_{n,\text{-}\frac{2}{\beta}}$ and $h_{n,0}$ are not.
We also note that
\be
 h_{u_*}\in\mathrm{span}_{\Cbb[\beta]}(B_n'),\qquad u_*\in\{0,-\tfrac{2}{\beta}\}.
\label{hB}
\ee
For $n=2,3,4,5$, the principal hamiltonian $h_{n,\text{-}\frac{2}{\beta}}$ is given in Appendix \ref{Sec:Apphbeta}.

At the isotropic point $u=1$, Proposition \ref{prop:spiders} implies that
\be
 T_n(1,\beta)=\big((n+1)\beta+ 2n\big)\idn
  +(\beta+2)^2\sum_{j=0}^{n-2}\sum_{k=0}^{n-2-j} \!\!S^{(n)}_{j,k}(1,\beta),
\ee
and we note that $T_n(1,-2)=-2\idn$, in accordance with (\ref{Tnm2}).

\subsection{Minimal hamiltonian polynomials}
\label{Sec:Min}

Since $\mathrm{TL}_{n}(\beta)$ is finite-dimensional, corresponding to each $a\in\mathrm{TL}_{n}(\beta)$,
there exists a unique monic polynomial, of least positive degree, that annihilates $a$ -- the so-called
{\em minimal polynomial} of $a$.
Let $m_{u_*}^{(n)}$ denote the minimal polynomial of $h_{n,u_*}$ for $\beta$ an indeterminate, 
and let $m_{u_*,\beta}^{(n)}$ denote the minimal polynomial of $h_{n,u_*}$ for $\beta\in\Cbb$. 
Examples are provided in Appendix \ref{Sec:Apph0} and Appendix \ref{Sec:Apphbeta}.
We denote the degrees of the {\em minimal hamiltonian polynomials},
$m_{u_*}^{(n)}$ and $m_{u_*,\beta}^{(n)}$, by
\be
 l_{u_*}^{(n)}:=\deg\!\big(m_{u_*}^{(n)}\big),\qquad
 l_{u_*,\beta}^{(n)}:=\deg\!\big(m_{u_*,\beta}^{(n)}\big).
\ee
For ease of presentation, we let
\be
 c_n:=\binom{n}{\lfloor\frac{n}{2}\rfloor}.
\ee
\begin{prop}\label{prop:llbin}
For each $n\in\Zbb_{\geq2}$ and $u_*\in\{0,-\frac{2}{\beta}\}$, we have
\be
 l_{u_*,\beta}^{(n)}\leq l_{u_*}^{(n)}\leq c_n.
\label{ineq}
\ee
\end{prop}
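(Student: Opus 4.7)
The plan is to establish the two inequalities separately: the upper bound $l_{u_*}^{(n)} \le c_n$ via the faithful matrix representation $\rho_n$ of $\mathrm{TL}_n(\beta)$ for $\beta$ an indeterminate, and the inequality $l_{u_*,\beta}^{(n)} \le l_{u_*}^{(n)}$ via a specialization argument from the minimal polynomial over $\Cbb(\beta)$ to one over $\Cbb$ at a specific value $\beta_0$.

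For the upper bound, I would regard $\mathrm{TL}_n(\beta)$ as a $\Cbb(\beta)$-algebra. By the cellularity discussion of Section \ref{Sec:Cellularity}, the Gram determinants $\det\TLgram_{n,d}$ are nonzero polynomials in $\beta$ (cf.\ \eqref{equ:TLGramDet}), so $\rho_n = \bigoplus_{d\in\TLposet_n}\rho_{n,d}$ acts faithfully on $\TLmodule_n = \bigoplus_{d\in\TLposet_n}\TLmodule_{n,d}$, which has dimension $c_n$ by \eqref{dimVn}. Faithfulness forces $m_{u_*}^{(n)}$ to coincide with the minimal polynomial of the matrix $\rho_n(h_{n,u_*})\in\Mc_{c_n}(\Cbb(\beta))$, and the latter divides the characteristic polynomial of this $c_n\times c_n$ matrix, giving $l_{u_*}^{(n)}\le c_n$.

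For the inequality $l_{u_*,\beta_0}^{(n)} \le l_{u_*}^{(n)}$ at a fixed $\beta_0\in\Cbb$, I would start from the monic polynomial $m_{u_*}^{(n)}(x)\in\Cbb(\beta)[x]$, clear denominators, and then divide by the greatest common divisor (in $\Cbb[\beta]$) of the resulting $x$-coefficients to obtain $\tilde{m}(x,\beta)\in\Cbb[\beta][x]$ of $x$-degree $l_{u_*}^{(n)}$ whose $\beta$-coefficients are globally coprime. Since $h_{n,u_*}\in\mathrm{span}_{\Cbb[\beta]}(B_n')$ by \eqref{hB} and the multiplication of connectivity diagrams in $\mathrm{TL}_n$ is polynomial in $\beta$, the annihilation $\tilde{m}(h_{n,u_*},\beta)=0$ is an identity among elements whose $B_n$-coefficients are polynomials in $\beta$. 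Specializing $\beta\mapsto\beta_0$ therefore yields $\tilde{m}(h_{n,u_*}(\beta_0),\beta_0)=0$ in $\mathrm{TL}_n(\beta_0)$, and coprimality of the $\beta$-coefficients guarantees that at least one survives the specialization, so $\tilde{m}(x,\beta_0)$ is a nonzero element of $\Cbb[x]$ of $x$-degree at most $l_{u_*}^{(n)}$. Consequently $m_{u_*,\beta_0}^{(n)}$ divides $\tilde{m}(x,\beta_0)$, giving $l_{u_*,\beta_0}^{(n)}\le l_{u_*}^{(n)}$.

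The main technical obstacle is the specialization step: one must pass from the monic minimal polynomial over $\Cbb(\beta)$ to a $\Cbb[\beta]$-integral representative without losing $x$-degree or allowing the $x$-polynomial to vanish identically upon substituting $\beta=\beta_0$. The greatest-common-divisor reduction on the $\beta$-coefficients is precisely what ensures a nontrivial annihilating polynomial survives at every $\beta_0\in\Cbb$; once this is secured, the remainder is straightforward linear-algebraic bookkeeping together with the faithfulness assertion from Section \ref{Sec:Cellularity}.
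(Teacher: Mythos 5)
Your proposal is correct, and its overall strategy coincides with the paper's: the bound $l_{u_*}^{(n)}\le c_n$ is obtained in both cases from the existence of the $c_n$-dimensional representation $\rho_n$ that is faithful for $\beta$ an indeterminate, so that the minimal polynomial of $h_{n,u_*}$ equals that of a $c_n\times c_n$ matrix and divides its characteristic polynomial. Where you differ is in the first inequality: the paper's proof proper argues somewhat informally that specialising $\beta$ ``may introduce spurious degeneracies'' which ``could reduce'' the degree of the minimal polynomial --- an explanation of why the inequality can be strict rather than a demonstration that the degree cannot increase --- and defers the actual mechanism to the Remark following \eqref{lbin}, where a rescaling of the minimal polynomial is invoked so that it remains nonzero and annihilating upon specialisation. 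Your content/gcd argument (clear denominators, divide out the greatest common divisor of the $\beta$-coefficients, then specialise) is precisely a rigorous implementation of that rescaling, together with the observation that any nonzero annihilating polynomial at $\beta_0$ is divisible by $m_{u_*,\beta_0}^{(n)}$. So your write-up is, if anything, tighter than the paper's own proof on this point. One small caveat worth flagging: for $u_*=-\frac{2}{\beta}$ the explicit expressions in Appendix B.2 show coefficients in $\Cbb[\beta,\beta^{-1}]$ rather than $\Cbb[\beta]$ (despite \eqref{hB}), so your clearing-of-denominators step should also absorb powers of $\beta$; this is harmless since that hamiltonian is only defined for $\beta\neq 0$ in the first place.
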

\begin{proof}
For $\rho_n$ faithful, the minimal polynomial of $\rho_n(h_{u_*})$ is the same as that of $h_{u_*}$,
irrespective of $\beta$ being an indeterminate or taking on a complex value.
Specialising $\beta$ to a complex value may introduce 
spurious (see Section \ref{Sec:Spectral} and the remark following (\ref{lbin})) 
degeneracies in the spectrum of $\rho_n(h_{u_*})$, 
and such degeneracies could reduce the degree of the minimal polynomial of $\rho_n(h_{u_*})$.
This explains the first inequality. The second inequality follows from the existence of
an $c_n$-dimensional representation, $\rho_n$, that is faithful for $\beta$ an indeterminate.
\end{proof}
\noindent
\textbf{Remark.}
To appreciate the inequality $l_{u_*,\beta}^{(n)}\leq c_n$ directly, 
note that $\rho_n$ is a $c_n$-dimensional representation
that is faithful for all but finitely many $\beta$-values. The degree of the minimal polynomial 
for $\beta$ complex and generic is thus bounded by $c_n$,
and, possibly rescaled to remain well-defined, the corresponding minimal polynomial will remain annihilating
when specialising $\beta$ to one of these values. Such a rescaling can be chosen such that the rescaled 
polynomial is nonzero
when specialising $\beta$, and the degree of this rescaled polynomial may decrease upon specialisation 
(this happens if and only if the rescaling multiplies the leading monomial by a factor that is zero when specialised)
but cannot increase.
\begin{prop}\label{prop:hnonder}
Let $n\in\Zbb_{\geq2}$, $u_*\in\{0,-\frac{2}{\beta}\}$, and $\beta$ an indeterminate. Then,
$h_{n,u_*}$ is non-derogatory if and only if $l_{u_*}^{(n)}=c_n$.
\end{prop}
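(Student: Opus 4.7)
My plan is to reduce the claim to the standard matrix-theoretic criterion for non-derogatory-ness and then invoke the algebraic counterpart provided by Lemma \ref{lem:nonder}. First, I would record two ingredients made available earlier in the paper: for $\beta$ an indeterminate, the Gram-determinant formula \eqref{equ:TLGramDet} is nonzero in $\Cbb[\beta]$, so $\mathrm{TL}_n(\beta)$ is semisimple (in the sense of Section \ref{Sec:Ass}) and the representation $\rho_n=\bigoplus_{d\in D_n}\rho_{n,d}$ is faithful, acting on $V_n=\bigoplus_{d\in D_n}V_{n,d}$ of dimension $c_n$ by \eqref{dimVn}.

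Next, I would appeal to Lemma \ref{lem:nonder} in this semisimple setting to obtain the equivalence: $h_{n,u_*}$ is non-derogatory in $\mathrm{TL}_n(\beta)$ if and only if $\rho_n(h_{n,u_*})$ is non-derogatory as an element of the direct-sum algebra $\bigoplus_{d\in D_n}\mathrm{End}_{\Cbb}(V_{n,d})$. Because $\rho_n$ is faithful, the minimal polynomial of the operator $\rho_n(h_{n,u_*})$ agrees with $m_{u_*}^{(n)}$, so the degree of its minimal polynomial is exactly $l_{u_*}^{(n)}$.

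Finally, I would translate non-derogatory-ness of $\rho_n(h_{n,u_*})$ into a statement about $l_{u_*}^{(n)}$. Viewing $\rho_n(h_{n,u_*})$ as a block-diagonal endomorphism of $V_n$ (with blocks $\rho_{n,d}(h_{n,u_*})$ of size $\dim V_{n,d}$), the classical criterion states that a matrix in $\mathrm{End}_{\Cbb}(V_n)$ is non-derogatory if and only if its minimal polynomial has degree $c_n=\dim V_n$. For a block-diagonal matrix this is in turn equivalent to each block being non-derogatory together with the pairwise coprimality of the blockwise minimal polynomials, which is precisely the condition that $\rho_n(h_{n,u_*})$ generates its own centraliser inside $\bigoplus_d\mathrm{End}_{\Cbb}(V_{n,d})$. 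Combining these equivalences gives $h_{n,u_*}$ non-derogatory $\Leftrightarrow l_{u_*}^{(n)}=c_n$.

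The only real subtlety, and hence the step I would spell out most carefully, is reconciling the two \emph{a priori} distinct notions of non-derogatory-ness appearing in the chain: the algebraic one (centraliser equals the polynomial subalgebra, used in Lemma \ref{lem:nonder}) and the matrix-theoretic one (minimal polynomial equals characteristic polynomial). For a block-diagonal operator on $V_n$ these coincide by the argument sketched above, and this is where the explicit identification of $\dim V_n$ with $c_n$ feeds into the equality between $\deg m_{u_*}^{(n)}$ and $c_n$.
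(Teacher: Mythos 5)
Your proposal is correct and follows essentially the same route as the paper's proof: faithfulness of $\rho_n$ for $\beta$ an indeterminate, the equivalence of Lemma \ref{lem:nonder} between non-derogatory-ness of $h_{n,u_*}$ and of $\rho_n(h_{n,u_*})$, and the standard matrix criterion that non-derogatory is equivalent to the minimal polynomial having degree $\dim V_n = c_n$. The only difference is that you spell out the reconciliation of the centraliser-in-$\bigoplus_d\mathrm{End}_{\Cbb}(V_{n,d})$ notion with the minimal-equals-characteristic-polynomial notion for the block-diagonal operator, a point the paper's proof passes over silently; your elaboration is correct and harmless.
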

\begin{proof}
For $\rho_n$ faithful, the minimal polynomial of $h_{u_*}$ is the same as that of $\rho_n(h_{u_*})$, 
and $h_{u_*}$ is non-derogatory if and only if $\rho_n(h_{u_*})$ is 
non-derogatory. The latter is also equivalent to $m_{\rho_n(h_{u_*})}=c_{\rho_n(h_{u_*})}$,
hence to $\deg(m_{\rho_n(h_{u_*})})=c_n$. 
Since $\rho_n$ is faithful for $\beta$ an indeterminate, the result follows.
\end{proof}

Through direct computation, we have found that the spectrum of $\rho_n(h_0)$ for $\beta=-2$ is non-degenerate
for $n=2,\ldots,17$. It follows that
\be
 l_{0,-2}^{(n)}=c_n,\qquad n=2,\ldots,17.
\label{l0}
\ee
We likewise find that the spectrum of $\rho_n(h_{n,\text{-}\frac{2}{\beta}})$ for $\beta=\pi+\pi^{-1}$ is 
non-degenerate for $n=2,\ldots,6$, hence
\be
 l_{\text{-}\frac{2}{\beta},\pi+\pi^{-1}}^{(n)}=c_n,\qquad n=2,\ldots,6.
\label{lb}
\ee
The specific $\beta$-values in these computations are immaterial, as long as they are `sufficiently generic'.
\begin{conj}\label{conj:DegenSpecHams}
For every $n\in\Zbb_{\geq2}$, each $u_*\in\{0,-\frac{2}{\beta}\}$, and $\beta$ an indeterminate, 
the spectrum of $\rho_n(h_{u_*})$ is non-degenerate.
\end{conj}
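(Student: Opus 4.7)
The plan is to prove the conjecture by viewing $\rho_n(h_{u_*})$ block-diagonally across the standard modules and combining integrability-based spectral analysis with a genericity argument. Working over the fraction field $\Cbb(\beta)$, the algebra $\mathrm{TL}_n(\beta)$ is semisimple and $\rho_n = \bigoplus_{d\in \TLposet_n}\rho_{n,d}$ is faithful (as noted after (\ref{dimVn})). The matrix representation of $\rho_n(h_{u_*})$ relative to the basis (\ref{LLL}) is block-diagonal with blocks $\rho_{n,d}(h_{u_*})$, so the characteristic polynomial factors in $\Cbb(\beta)[\lambda]$ as $\prod_{d}c_{\rho_{n,d}(h_{u_*})}(\lambda)$. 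Non-degeneracy of the total spectrum then decomposes into two sub-problems: first, that each block $\rho_{n,d}(h_{u_*})$ has a non-degenerate spectrum over the algebraic closure of $\Cbb(\beta)$; and second, that the spectra of $\rho_{n,d}(h_{u_*})$ and $\rho_{n,d'}(h_{u_*})$ are disjoint for $d\neq d'$.

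For the first sub-problem, I would exploit integrability. By (\ref{equ:HamDerivativeFormFiltered}) and Proposition \ref{Prop:TT0}, both $h_0$ and $h_{\text{-}\frac{2}{\beta}}$ are derivatives of $T_n(u,\beta)$ at identity points, so they are diagonalised by the eigenvectors of the transfer operator on each standard module. These eigenvectors are classically constructed via a coordinate Bethe ansatz, appealing to the open XXZ correspondence for $h_0$, with a parallel analysis needed for $h_{\text{-}\frac{2}{\beta}}$. For $\beta$ an indeterminate, the Bethe equations in the sector corresponding to $V_{n,d}$ should admit $\dim V_{n,d}$ pairwise distinct solutions, yielding distinct eigenvalues for $h_{u_*}$. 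Diagonalisability of $\rho_n(h_{u_*})$ for generic $\beta$ can be obtained from the $\dagger$-self-adjointness of $h_{u_*}$ together with the positivity of the bilinear form for $\beta > 2\cos\tfrac{\pi}{n}$ real (already used in the proof of Proposition \ref{prop:TLIntSuff}), extended to generic $\beta$ by analyticity, since diagonalisability fails only on a proper algebraic subvariety.

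For the second sub-problem, I would argue that equalities of eigenvalues across distinct sectors define proper algebraic subvarieties of the $\beta$-line. Invoking the logic of Proposition \ref{prop:SpDegenAx}, a single value of $\beta$ at which no such coincidence occurs suffices to force non-coincidence for $\beta$ indeterminate. The numerical verifications (\ref{l0}) at $\beta_0=-2$ and (\ref{lb}) at $\beta_0=\pi+\pi^{-1}$ already rule out such coincidences in the verified ranges of $n$; extending these assertions to all $n$ requires a structural argument, at which point the non-derogatory equality $l_{u_*}^{(n)}=c_n$ combined with diagonalisability and Proposition \ref{prop:hnonder} delivers the conjecture.

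The main obstacle is rigorously establishing Bethe ansatz completeness uniformly in $n$, especially for $h_{\text{-}\frac{2}{\beta}}$, whose realisation as a derivative of an integrable transfer operator at an unusual identity point appears not to have been studied. One avenue would be to adapt the algebraic Bethe ansatz framework of Sklyanin (Section \ref{Sec:Sklyanin}) to the Temperley--Lieb planar-algebraic setting, and another would be an induction on $n$ using branching rules for the cellular structure of $\mathrm{TL}_n(\beta)$ (Section \ref{Sec:Cellularity}), reducing the conjecture for $n$ to the case $n-1$ plus a spectral perturbation argument controlling the fresh eigenvalues introduced by the larger algebra. Either route demands genuinely new combinatorial or analytic input beyond the tools developed in the present paper, which is the reason the statement remains conjectural.
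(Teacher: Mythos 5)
The statement you are proving is presented in the paper as a \emph{conjecture}: the paper offers no proof, only computational evidence, namely that the spectrum of $\rho_n(h_0)$ at $\beta=-2$ is non-degenerate for $n=2,\ldots,17$ and that of $\rho_n(h_{\text{-}\frac{2}{\beta}})$ at $\beta=\pi+\pi^{-1}$ is non-degenerate for $n=2,\ldots,6$, together with the (correct) observation that a single ``sufficiently generic'' $\beta$-value with simple spectrum forces simplicity for $\beta$ an indeterminate. Your genericity reductions are sound and essentially reproduce that observation: degeneracy for indeterminate $\beta$ is detected by the vanishing of the discriminant of the characteristic polynomial, a polynomial condition in $\beta$, so one witness value per $n$ suffices. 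The block decomposition over standard modules and the splitting into within-sector and cross-sector coincidences is also a reasonable organisation, consistent with Proposition \ref{prop:SpDegenAx}.

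However, your proposal does not close the conjecture, and you acknowledge as much. Three concrete problems. First, the core step --- producing, for \emph{every} $n$, either a Bethe-ansatz-complete set of $\dim V_{n,d}$ pairwise distinct eigenvalues per sector or a single good $\beta_0$ --- is exactly the missing content; completeness and simplicity of the Bethe spectrum for the open XXZ correspondence is itself an open problem in the required generality, and no Bethe ansatz is known for $h_{\text{-}\frac{2}{\beta}}$. Second, the final step is circular: you invoke $l_{u_*}^{(n)}=c_n$ together with Proposition \ref{prop:hnonder} to ``deliver the conjecture,'' but (\ref{lbin}) is derived in the paper as a \emph{consequence} of this very conjecture; given diagonalisability, $l_{u_*}^{(n)}=c_n$ is equivalent to the claim you are trying to prove, not a tool for proving it. Third, the claim that diagonalisability established for real $\beta>2\cos\frac{\pi}{n}$ extends ``by analyticity'' to generic complex $\beta$ is not a valid argument as stated: diagonalisability of a matrix family is neither a Zariski-open nor a Zariski-closed condition, so it does not propagate from a real interval to generic complex parameters; what does propagate is simplicity of the spectrum (non-vanishing of the discriminant), but over the real interval the inner-product argument only gives diagonalisability, not simplicity. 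The honest conclusion is that your plan, like the paper, reduces the conjecture to exhibiting one generic specialisation per $n$ (or a uniform structural argument), and neither supplies it.
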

\noindent
This conjecture implies that, for every $n\in\Zbb_{\geq2}$ and $u_*\in\{0,-\frac{2}{\beta}\}$, we have
\be
 l_{u_*}^{(n)}=c_n.
\label{lbin}
\ee

\noindent
\textbf{Remark.}
Necessary conditions for {\em strict} inequalities in (\ref{ineq}) are the existence of spurious respectively 
permanent degeneracies in the spectrum of $\rho_n(h_{u_*})$. However, these are not {\em sufficient}
conditions as the Jordan-block structure may `prevent' a corresponding reduction in the degree of the minimal 
polynomials.
\begin{prop}\label{prop:SpurDegenFiniteh1}
Let $n\in\Zbb_{\geq2}$ and $u_*\in\{0,-\frac{2}{\beta}\}$. Then, there exist at most finitely many values 
$\beta\in\Cbb$ for which the spectrum of $\rho_n(h_{u_*})$ possesses spurious degeneracies. 
\end{prop}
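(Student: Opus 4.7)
The plan is to reduce the claim to Proposition \ref{prop:SpDegenAx} by exhibiting $\rho_n(h_{u_*})$, or a suitable rescaling thereof, as a matrix whose entries are polynomials in $\beta$, and then tracking how spurious degeneracies behave under the rescaling.

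First I would observe that, relative to the ordered link-state basis (\ref{LLL}), each generator $e_i$ acts on an $(n,d)$-link state by producing either $0$, another link state, or $\beta$ times a link state (precisely when a loop is formed). Consequently, $\rho_n(e_i) \in \Mc_{c_n}(\Cbb[\beta])$ for every $i$. For $u_* = 0$ this immediately gives $\rho_n(h_0) = -\sum_i \rho_n(e_i) \in \Mc_{c_n}(\Cbb[\beta])$, so Proposition \ref{prop:SpDegenAx}, applied with $x \mapsto \beta$ and $A(x) \mapsto \rho_n(h_0)$, delivers the result for this identity point.

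For $u_* = -\frac{2}{\beta}$, the expression (\ref{hustar}) is only rational in $\beta$. However, each spider $S^{(n)}_{j,k}(u)$ lies in $\mathrm{span}_{\Cbb[u, \beta]}(B_n)$, because expanding each local factor $R(u) = \mathds{1}_2 + u\,e$ in (\ref{spider}) produces coefficients that are polynomial in $u$, with any loop closure contributing a factor of $\beta$. Substituting $u = -\frac{2}{\beta}$ introduces at worst a bounded negative power of $\beta$, so I can pick $M \in \Nbb$ large enough that $\tilde{h}_n := \beta^M h_{\text{-}\frac{2}{\beta}}$ has coefficients in $\Cbb[\beta]$, hence $\rho_n(\tilde{h}_n) \in \Mc_{c_n}(\Cbb[\beta])$. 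Proposition \ref{prop:SpDegenAx} then yields a finite set $E \subseteq \Cbb$ of $\beta$-values for which $\rho_n(\tilde{h}_n)$ has spurious degeneracies. For every $\beta_0 \in \Cbb \setminus\{0\}$, the matrices $\rho_n(\tilde{h}_n)(\beta_0)$ and $\rho_n(h_{\text{-}\frac{2}{\beta}})(\beta_0)$ differ by the nonzero scalar $\beta_0^M$; the multiplicity pattern of their eigenvalues (and hence the number of distinct eigenvalues) is therefore identical, and the same is true for the corresponding generic spectra over $\Cbb(\beta)$. Consequently, the spurious degeneracies of $\rho_n(h_{\text{-}\frac{2}{\beta}})$ occur at most on $E \cup \{0\}$, which is finite.

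The main obstacle, such as it is, lies in the $u_* = -\frac{2}{\beta}$ case: one must verify that a uniform power $\beta^M$ clears all denominators (which follows from the polynomiality of the spiders noted above and the finiteness of the double sum in (\ref{hustar})), and one must argue that multiplying by the nonzero scalar $\beta^M$ preserves the notion of ``spurious'', so that the exceptional set for $\rho_n(h_{\text{-}\frac{2}{\beta}})$ is obtained from that of $\rho_n(\tilde{h}_n)$ by adjoining at most $\beta = 0$. With that bookkeeping in place, the statement is an immediate consequence of Proposition \ref{prop:SpDegenAx}.
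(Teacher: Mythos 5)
Your proposal is correct and takes essentially the same route as the paper, which simply observes that the matrix elements of $\rho_n(h_{u_*})$ are polynomial in $\beta$ and invokes Proposition \ref{prop:SpDegenAx}. Your additional care in the $u_*=-\frac{2}{\beta}$ case --- clearing the Laurent denominators by a power of $\beta$ and checking that this nonzero rescaling leaves the spurious/permanent distinction unchanged away from $\beta=0$ --- addresses a point the paper glosses over (compare the $\tfrac{1}{\beta}$ coefficients visible in Appendix \ref{Sec:Apphbeta}) and is a genuine, if small, improvement in rigour.
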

\begin{proof}
Since the matrix elements of $\rho_n(h_{u_*})$ are polynomial in $\beta$, the result follows from
Proposition \ref{prop:SpDegenAx}.
\end{proof}
\begin{cor}
Let $n\in\Zbb_{\geq2}$ and $u_*\in\{0,-\frac{2}{\beta}\}$.
Then, there exist at most finitely many $\beta$-values for which
\be
 l_{u_*,\beta}^{(n)}<l_{u_*}^{(n)}.
\ee
\end{cor}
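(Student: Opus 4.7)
The plan is to exploit the elementary fact that the rank of a matrix with polynomial entries is generically equal to its rank over the field of rational functions, dropping only on a proper algebraic subset of the parameter space, i.e., at finitely many values of $\beta$. I apply this to a matrix encoding the linear (in)dependence among powers of $h_{n,u_*}$.

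Setting $l:=l_{u_*}^{(n)}$, the elements $\idn, h_{n,u_*}, h_{n,u_*}^2, \ldots, h_{n,u_*}^{l-1}$ are linearly independent over $\Cbb(\beta)$ by definition of the minimal polynomial for $\beta$ indeterminate. By (\ref{hB}), $h_{n,u_*}$ has an expansion in the basis $B_n$ with coefficients in $\Cbb[\beta]$, and so does every power. Expanding these $l$ powers in $B_n$ produces a matrix $A(\beta)$ of size $|B_n|\times l$ with entries in $\Cbb[\beta]$ and generic rank $l$. Hence some $l\times l$ minor of $A(\beta)$ is a nonzero polynomial $q(\beta)\in\Cbb[\beta]$.

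For any $\beta_0\in\Cbb$ with $q(\beta_0)\neq 0$, the specialised matrix $A(\beta_0)$ still has rank $l$, so the specialised powers remain linearly independent in $\mathrm{TL}_n(\beta_0)$. This forces $l_{u_*,\beta_0}^{(n)}\geq l$, which, combined with the inequality $l_{u_*,\beta_0}^{(n)}\leq l_{u_*}^{(n)}$ from Proposition \ref{prop:llbin}, yields $l_{u_*,\beta_0}^{(n)}=l_{u_*}^{(n)}$. Since $q$ has finitely many roots, the strict inequality can hold for at most finitely many $\beta\in\Cbb$, which is the desired conclusion.

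The main conceptual point is to recognise that the spectral information provided by Proposition \ref{prop:SpurDegenFiniteh1} is not directly sufficient here: as the Remark following (\ref{lbin}) emphasises, spurious spectral degeneracies are necessary but not sufficient for a drop in the minimal polynomial degree, because the Jordan-block structure can prevent a corresponding reduction. The rank argument above sidesteps this subtlety by producing a single polynomial $q(\beta)$ whose nonvanishing immediately guarantees that the minimal polynomial degree is preserved upon specialisation. No real obstacle is anticipated; the key ingredient is the polynomial dependence of $h_{n,u_*}$ on $\beta$ already captured by (\ref{hB}).
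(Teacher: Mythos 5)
Your proof is correct, but it takes a genuinely different route from the paper's. The paper deduces the corollary from Proposition \ref{prop:llbin} together with Proposition \ref{prop:SpurDegenFiniteh1}: a strict drop in $l_{u_*,\beta}^{(n)}$ upon specialising $\beta$ requires spurious degeneracies in the spectrum of $\rho_n(h_{u_*})$, and by Proposition \ref{prop:SpDegenAx} these occur for only finitely many $\beta$. Your argument replaces this spectral input by a direct genericity statement about the coefficient matrix of $\idn,h_{u_*},\ldots,h_{u_*}^{l-1}$ in the diagram basis: linear independence of these powers is certified by the nonvanishing of a single minor $q(\beta)$, which has finitely many roots, and independence of the first $l$ powers forces $l_{u_*,\beta}^{(n)}\geq l$, which combines with Proposition \ref{prop:llbin} to give equality. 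This buys two things. First, it works entirely inside the algebra, so you never pass through the representation $\rho_n$ or worry about the finitely many $\beta$ at which it fails to be faithful. Second, it sidesteps the subtlety flagged in the Remark following (\ref{lbin}) -- that the link between spectral degeneracies and the minimal-polynomial degree is mediated by the Jordan-block structure -- by certifying the degree directly rather than via the spectrum. One small caveat: for $u_*=-\frac{2}{\beta}$ the coefficients of $h_{u_*}$ in $B_n'$ are Laurent polynomials rather than polynomials in $\beta$ (see the explicit expressions in Appendix \ref{Sec:Apphbeta}), so (\ref{hB}) should be read with $\Cbb[\beta]$ replaced by $\Cbb[\beta,\beta^{-1}]$; your minor $q$ is then a nonzero Laurent polynomial, which still has only finitely many zeros on $\Cbb\setminus\{0\}$, and $\beta=0$ is excluded for this hamiltonian anyway, so the conclusion is unaffected.
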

\begin{proof}
The result follows from Proposition \ref{prop:llbin} and Proposition \ref{prop:SpurDegenFiniteh1}.
\end{proof}
About the Jones--Wenzl idempotent, $\wj_n$, we note that (\ref{wj}) and (\ref{hB}) imply
$\wj_nh_{u_*}=h_{u_*}\wj_n=0$, hence $\wj_n\in\Cc_{\mathrm{TL}_{n}(\beta)}(h_{u_*})$.
Assuming (\ref{lbin}) holds, Proposition \ref{prop:hnonder} then implies that
$\wj_n\in\langle h_{u_*}\rangle_{\mathrm{TL}_{n}(\beta)}$ for $\beta$ an indeterminate.
It follows that, for every $n\in\Zbb_{\geq2}$ and each $u_*\in\{0,-\frac{2}{\beta}\}$, 
there exists polynomial $p_{u_*}^{(n)}$ such that $p_{u_*}^{(n)}(0)=1$ and
\be
 \wj_n=p_{u_*}^{(n)}(h_{u_*}),\qquad
 m_{u_*}^{(n)}(h_{u_*})=h_{u_*}\wj_n.
\label{wp}
\ee
Its degree is thus given by
\be
 \deg(p_{u_*}^{(n)})=l_{u_*}^{(n)}-1=c_n-1.
\label{wpd}
\ee
By (\ref{l0}) and (\ref{lb}), the relations (\ref{wp}) and (\ref{wpd}) do indeed hold for
$n=2,\ldots,17$ in the case $u_*=0$, and for $n=2,\ldots,6$ in the case $u_*=-\frac{2}{\beta}$.

\subsection{Transfer-operator hamiltonian polynomials}
\label{Sec:PolTL}

The above spectral analysis of the principal hamiltonians $h_{n,u_*}$ for small $n$, 
together with Proposition \ref{prop:spech} below, indicates that the hamiltonians are viable candidates for 
an integral of motion in terms of which the Temperley--Lieb transfer operator $T_n(u,\beta)$ is polynomial. 
We proceed by presenting explicit polynomial expressions of $T_n(u,\beta)$ in terms of the 
hamiltonians for small $n$, and by offering conjectures about the form of such polynomials for general $n$. 
\begin{prop}\label{prop:spech}
Let $n\in\Zbb_{\geq2}$, $u_*\in\{0,-\frac{2}{\beta}\}$, 
and $\psi$ a faithful representation of $\mathrm{TL}_{n}(\beta)$.
If the spectrum of $\psi(h_{u_*})$ is non-degenerate, then $T_n(u, \beta)$ is polynomial in $h_{u_*}$.
\end{prop}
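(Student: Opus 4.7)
The plan is to localise the argument to the faithful representation $\psi$ and exploit the fact that a matrix with pairwise distinct eigenvalues is non-derogatory, so that its centraliser coincides with the polynomial algebra it generates.

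First I would record the key commutation. By Proposition~\ref{Prop:TT0}, the Temperley--Lieb loop model is integrable for every $\beta \in \Cbb$, so the discussion at the end of Section~\ref{Sec:HamiltonianSub} yields $[h_{u_*}, T_n(u,\beta)] = 0$ for every $u \in \Cbb$. Applying the algebra homomorphism $\psi$ then gives $[\psi(h_{u_*}), \psi(T_n(u,\beta))] = 0$ for every $u$, that is, $\psi(T_n(u,\beta)) \in \Cc_{\psi(\mathrm{TL}_n(\beta))}(\psi(h_{u_*}))$.

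Next I would use the non-degenerate spectrum hypothesis. With all eigenvalues of $\psi(h_{u_*})$ distinct, the minimal and characteristic polynomials of $\psi(h_{u_*})$ coincide, so $\psi(h_{u_*})$ is non-derogatory in the ambient endomorphism algebra. As recalled in Section~\ref{Sec:Toeplitz}, non-derogatoriness is equivalent to the centraliser being generated by the element itself, so $\Cc(\psi(h_{u_*})) = \Cbb[\psi(h_{u_*})]$. Combining with the previous step, for each fixed $u$ there exists $p_u \in \Cbb[x]$ such that $\psi(T_n(u,\beta)) = p_u(\psi(h_{u_*}))$.

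Finally I would lift this identity back to $\mathrm{TL}_n(\beta)$. Since $\psi$ is a homomorphism, $p_u(\psi(h_{u_*})) = \psi(p_u(h_{u_*}))$, and therefore $\psi(T_n(u,\beta)) = \psi(p_u(h_{u_*}))$. The faithfulness (equivalently, the injectivity) of $\psi$ then forces $T_n(u,\beta) = p_u(h_{u_*})$, so $T_n(u,\beta) \in \Cbb[h_{u_*}]$ for every $u \in \Cbb$, which is precisely the notion of polynomial integrability in $h_{u_*}$ introduced in Section~\ref{Sec:Pol}. The main thing to be careful about is the quantifier: the polynomial $p_u$ depends on $u$, and the claim is pointwise in $u$, matching the definition in~\eqref{Tub}. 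There is no real obstacle beyond this bookkeeping, since the non-degenerate-spectrum assumption is strong enough that semisimplicity of $\mathrm{TL}_n(\beta)$ itself is not required; non-derogatoriness of the single matrix $\psi(h_{u_*})$ inside the faithful image suffices.
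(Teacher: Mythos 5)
Your proof is correct and follows essentially the same route as the paper: non-degenerate spectrum implies $\psi(h_{u_*})$ is non-derogatory, the commutation $[h_{u_*},T_n(u,\beta)]=0$ from integrability places $\psi(T_n(u,\beta))$ in the centraliser $\Cbb[\psi(h_{u_*})]$, and faithfulness lifts the resulting polynomial identity back to $\mathrm{TL}_n(\beta)$. The paper phrases the lift as ``$h_{u_*}$ is itself non-derogatory'' whereas you lift the explicit identity $\psi(T_n(u,\beta))=\psi(p_u(h_{u_*}))$ by injectivity, but these are the same argument.
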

\begin{proof}
Let the spectrum of $\psi(h_{u_*})$ be non-degenerate. Then, the characteristic and minimal polynomials of 
$\psi(h_{u_*})$ agree, so $\psi(h_{u_*})$ is non-derogatory. Since $\psi$ is faithful, it follows that $h_{u_*}$ is 
non-derogatory, and since $T_n(u,\beta)$ commutes with $h_{u_*}$, we have 
$T_n(u,\beta)\in \langle h_{u_*}\rangle_{\mathrm{TL}_n(\beta)}$.
\end{proof}
\noindent
\textbf{Remark.}
In the following, we will use that $\rho_n$ is faithful for $\beta$ an indeterminate and for all but finitely many 
$\beta\in\Cbb$.
\medskip

Because the matrix elements of $\rho_n(T_n(u, \beta))$ are polynomial in $\beta$, 
Proposition \ref{prop:SpDegenAx} implies that there are at most finitely many values
$\beta\in\Cbb$ for which the spectrum of $\rho_n(T_n(u, \beta))$ possesses spurious degeneracies. 
Combined with the non-degeneracy observations implying (\ref{l0}) and (\ref{lb}), it follows from 
Proposition \ref{prop:spech} that
for every $n=2,\ldots,17$, $T_n(u,\beta)$ is polynomial in $h_0$ for all but finitely many $\beta$-values,
and that for every $n=2,\ldots,6$, $T_n(u,\beta)$ is polynomial in $h_{\text{-}\frac{2}{\beta}}$ for all but finitely 
many $\beta$-values. For $n=2$, for example, we have $h_{\text{-}\frac{2}{\beta}}=-h_{0}$ for $\beta\neq0$, and
\be
  T_2(u,\beta)= \big(\beta[3]_{u^2} + 2u[2]_{u^2}\big)\mathds{1}_2-u(\beta + 2u)(2+\beta u)h_0,
\ee
valid for all $\beta\in\Cbb$. In the following, we will argue that $T_n(u, \beta)\in\Cbb[u][h_{u_*}]$, 
$u_*\in\{0,-\frac{2}{\beta}\}$, for every $n\in\Zbb_{\geq2}$ and all but finitely many $\beta$-values. We refer 
to these values as {\em $h_{u_*}$-exceptional} and note that the number and values of them will depend on 
$n$ and $u_*$.
\begin{conj}\label{conj:Ta}
Let $n\in\Zbb_{\geq3}$ and $\beta$ an indeterminate.
For each $u_*\in\{0,-\frac{2}{\beta}\}$, $T_n(u,\beta)$ admits a unique decomposition of the form
\be
  T_n(u,\beta) = \big(\beta[n+1]_{u^2} + 2u[n]_{u^2}\big)\idn
    +\frac{u(\beta + 2u)(2+\beta u)}{f_{n,u_*}(\beta)}
    \sum_{i=1}^{l_{u_*}^{(n)}-1}a^{n,u_*}_i(u,\beta) h_{u_*}^{i},
\label{equ:DTMHpolyRefined}
\ee
where $f_{n,u_*}(\beta)$ is a monic polynomial and $a^{n,u_*}_i(u,\beta)$ are polynomials 
such that no root of $f_{n,u_*}(\beta)$ is a root of $a^{n,u_*}_i(u,\beta)$ for all $i=1,\ldots,l_{u_*}^{(n)}-1$.
\end{conj}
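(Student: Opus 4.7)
The plan is to condition on Conjecture \ref{conj:DegenSpecHams}, which, via Proposition \ref{prop:hnonder}, asserts $l_{u_*}^{(n)} = c_n$ and that $h_{u_*}$ is non-derogatory when $\beta$ is an indeterminate. With this in hand, $T_n(u,\beta)$ commutes with $h_{u_*}$: indeed, $h_{u_*}$ is extracted from the Taylor expansion of $T_n(v,\beta)$ about $v = u_*$, and Proposition \ref{Prop:TT0} ensures every such Taylor coefficient commutes with $T_n(u,\beta)$. Consequently $T_n(u,\beta) \in \Cc_{\mathrm{TL}_n(\beta)}(h_{u_*}) = \langle h_{u_*}\rangle_{\mathrm{TL}_n(\beta)}$ over the quotient field $\Cbb(\beta)$. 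Since $\idn, h_{u_*},\ldots,h_{u_*}^{c_n-1}$ are linearly independent and $u$-independent, and $T_n(u,\beta)$ is polynomial in $u$, one obtains a unique expansion
\begin{equation*}
T_n(u,\beta) = \sum_{i=0}^{c_n-1}\gamma_i(u,\beta)\,h_{u_*}^i,\qquad \gamma_i(u,\beta)\in\Cbb(\beta)[u].
\end{equation*}

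Next I would pin down $\gamma_0(u,\beta) = \beta[n+1]_{u^2} + 2u[n]_{u^2}$. By (\ref{hB}), $h_{u_*} \in \mathrm{span}_{\Cbb[\beta]}(B_n')$, and iterating (\ref{abBn}) gives $h_{u_*}^i \in \mathrm{span}_{\Cbb[\beta]}(B_n')$ for every $i\geq 1$. Thus $\gamma_0(u,\beta)\idn$ contributes the entire $\idn$-component of the expansion, and comparison with the decomposition in Proposition \ref{prop:spiders} fixes $\gamma_0$.

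The key step is to extract the factor $u(\beta+2u)(2+\beta u)$ from each $\gamma_i$ with $i\geq 1$. By Proposition \ref{prop:special}, $T_n(u,\beta) = \beta\idn$ at $u=0$ and at $u=-2/\beta$, while $T_n(-\beta/2,\beta) = \beta(\beta/2)^{2n}\idn$; a direct substitution shows $\gamma_0(u,\beta)$ agrees with these three identity scalars at the corresponding points. Linear independence of $h_{u_*},\ldots,h_{u_*}^{c_n-1}$ over $\Cbb(\beta)$ then forces
\begin{equation*}
\gamma_i(0,\beta) = \gamma_i(-2/\beta,\beta) = \gamma_i(-\beta/2,\beta) = 0,\qquad i=1,\ldots,c_n-1,
\end{equation*}
so, as a polynomial in $u$ over $\Cbb(\beta)$, $\gamma_i$ is divisible by each of the linear factors $u$, $\beta+2u$, and $2+\beta u$. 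Writing
\begin{equation*}
\gamma_i(u,\beta) = \frac{u(\beta+2u)(2+\beta u)}{d_i(\beta)}\,a_i^{n,u_*}(u,\beta)
\end{equation*}
in lowest terms with $a_i^{n,u_*}\in\Cbb[u,\beta]$ and $d_i\in\Cbb[\beta]$, and taking $f_{n,u_*}(\beta)$ to be the monic normalisation of $\mathrm{lcm}(d_1,\ldots,d_{c_n-1})$ (with the numerators rescaled correspondingly), yields the asserted decomposition. The no-shared-root condition is automatic from the lowest-terms normalisation, and uniqueness follows from linear independence of the $h_{u_*}^i$.

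The principal obstacle is Conjecture \ref{conj:DegenSpecHams} itself: without the non-derogatory property of $h_{u_*}$, one cannot guarantee a priori that $T_n(u,\beta)$ lies in $\Cbb(\beta)[h_{u_*}]$ at all. A secondary, purely technical subtlety is the common-denominator bookkeeping: one must be careful that taking the least common multiple over $i=1,\ldots,c_n-1$ does not re-introduce a factor shared by \emph{every} rescaled $a_i^{n,u_*}$, but this can be avoided by choosing $f_{n,u_*}$ as the minimal such denominator and then clearing only what is strictly needed in each numerator.
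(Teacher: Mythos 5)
The paper does not prove this statement: it is offered as a conjecture and checked only by direct computation, for $n=3,4,5,6$ when $u_*=0$ and for $n=3,4,5$ when $u_*=-\frac{2}{\beta}$. Your argument is therefore a genuinely different route, and it is sound as a \emph{conditional} proof: you show that Conjecture \ref{conj:DegenSpecHams} implies Conjecture \ref{conj:Ta}. Each step checks out. Non-degeneracy of the spectrum of $\rho_n(h_{u_*})$ gives $l_{u_*}^{(n)}=c_n$ and, via Proposition \ref{prop:hnonder}, that $h_{u_*}$ is non-derogatory over $\Cbb(\beta)$; commutation of $T_n(u,\beta)$ with $h_{u_*}$ does follow from Proposition \ref{Prop:TT0} by expanding in $v$ about $v=u_*$; the coefficient $\gamma_0$ is correctly identified with the $\idn$-component of Proposition \ref{prop:spiders}, because $h_{u_*}^i\in\mathrm{span}_{\Cbb(\beta)}(B_n')$ for $i\geq1$ by (\ref{hB}) and (\ref{abBn}); and the three identity points $u\in\{0,-\frac{\beta}{2},-\frac{2}{\beta}\}$ of Proposition \ref{prop:special}, being distinct elements of $\Cbb(\beta)$, do force each $\gamma_i$ with $i\geq1$ to be divisible by $u(\beta+2u)(2+\beta u)$ up to a unit of $\Cbb(\beta)$. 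This yields strictly more than the paper establishes -- Proposition \ref{prop:spech} gives only the bare membership $T_n(u,\beta)\in\langle h_{u_*}\rangle_{\mathrm{TL}_n(\beta)}$ under the same non-degeneracy hypothesis, and the specific shape (\ref{equ:DTMHpolyRefined}) is left to computation -- but it is not an unconditional proof, since Conjecture \ref{conj:DegenSpecHams} is itself verified only for $n\leq17$ (resp.\ $n\leq6$), as you acknowledge.

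Two details to tighten. The no-common-root property is not automatic from taking each $\gamma_i$ in lowest terms; you need the observation that if $\beta_r$ is a root of $f_{n,u_*}=\mathrm{lcm}(d_1,\ldots,d_{c_n-1})$ of multiplicity $m$, then $(\beta-\beta_r)^m$ divides some $d_j$ exactly, so $(f_{n,u_*}/d_j)(\beta_r)\neq0$ while $a_j^{n,u_*}(u,\beta_r)\not\equiv0$ by the lowest-terms assumption, and hence the rescaled $j$-th numerator survives at $\beta_r$. Likewise, uniqueness requires slightly more than linear independence of the $h_{u_*}^i$: that fixes the rational coefficients $\gamma_i$, and one must then argue that monicity together with the no-common-root condition determines the pair $(f_{n,u_*},\{a_i^{n,u_*}\})$ from the $\gamma_i$. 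Both gaps are routine to fill.
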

\noindent
For $u_*=0$, we have verified Conjecture \ref{conj:Ta} for $n=3,4,5,6$, finding
\be
 f_{3,0}(\beta)=f_{4,0}(\beta)=1,\qquad
 f_{5,0}(\beta) = (\beta^2+4) (\beta^2-\tfrac{1}{2}) (\beta^4-27\beta^2+121),
\ee
and
\begin{align}
    f_{6,0}(\beta) =\; &\beta^2(\beta ^2+2)\big(\beta^2-\tfrac{1}{6}\big)\big(\beta^2-\tfrac{1}{4}\big)
     \big(\beta^2-\tfrac{5}{4}\big)(\beta ^2-3)\big(\beta^2-\tfrac{16}{3}\big)\big(\beta^2-\tfrac{121}{12}\big)
\nonumber\\[.1cm]
    &\big(\beta^4-14\beta^2+\tfrac{121}{9}\big)\big(\beta^4-25\beta^2-\tfrac{121}{2}\big)
      \big(\beta^6-\tfrac{41}{4}\beta^4+\tfrac{53}{2}\beta^2-\tfrac{9}{4}\big)(\beta^6-18\beta^4+81\beta^2-16)
\nonumber\\[.1cm]
    &\big(\beta^6+11\beta^4+\tfrac{185}{4}\beta^2+\tfrac{121}{2}\big)
      \big(\beta^8+\tfrac{19}{9}\beta^6+\tfrac{467}{18}\beta^4+\tfrac{320}{9}\beta^2-\tfrac{512}{9}\big),
\label{f6}
\end{align}
while for $u_*=-\frac{2}{\beta}$, we have verified it for $n=3,4,5$, finding
\begin{align}
 f_{3,\text{-}\frac{2}{\beta}}(\beta)
  =\;&\beta^6-64,
 \\[.1cm]
 f_{4,\text{-}\frac{2}{\beta}}(\beta)
  =\;&(\beta^8-256)^2(\beta^8+8\beta^6+32\beta^4+256)
   \big(\beta^8+\tfrac{8}{3}\beta^6-32\beta^4+\tfrac{256}{3}\beta^2-\tfrac{256}{3}\big)
  \nonumber\\[.1cm]
  &(\beta^{16}-8\beta^{14}-64\beta^{12}+640\beta^{10}+512\beta^8-26624\beta^6+98304\beta^4
   -131072\beta^2+65536),
 \\[.1cm]
  f_{5,\text{-}\frac{2}{\beta}}(\beta)
  =\;&(\beta^{10}-1024)^3p_5(\beta),
\end{align}
where $p_5(\beta)$ is a non-degenerate even polynomial of degree $198$. 
Explicit expressions for the associated 
polynomials $a^{n,u_*}_i(u,\beta)$ are not presented here. Instead, explicit expressions for similar 
polynomials will be provided in a refined formulation, Conjecture \ref{conj:Ttx}, in Appendix \ref{Sec:TL5}.
\begin{conj}
Let $n\in\Zbb_{\geq3}$ and $\beta$ an indeterminate. Then,
\be
 f_{n,\text{-}\frac{2}{\beta}}(\beta)=(\beta^{2n}-4^n)^{n-2}p_n(\beta),
\ee
where $p_n(\beta)$ is a non-degenerate even polynomial.
\end{conj}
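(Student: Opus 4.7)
The plan is to control $f_{n,\text{-}\frac{2}{\beta}}(\beta)$ by tracking the denominators that appear when inverting the polynomial expression of $T_n(u,\beta)$ in $h_{\text{-}\frac{2}{\beta}}$ inside a faithful representation. For $\beta$ generic, $\rho_n$ is faithful and, in the regime of Section \ref{Sec:TLB}, both $\rho_n(h_{\text{-}\frac{2}{\beta}})$ and $\rho_n(T_n(u,\beta))$ are simultaneously diagonalisable; Lagrange interpolation on their joint spectrum yields a polynomial $q_u(x)\in\Cbb(\beta)[u][x]$ with $\rho_n(T_n(u,\beta))=q_u(\rho_n(h_{\text{-}\frac{2}{\beta}}))$, whose coefficient denominators are products of eigenvalue differences $\lambda_i(\beta)-\lambda_j(\beta)$. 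Clearing these, after cancellations arising from matching $T_n$-eigenvalue coincidences, produces $f_{n,\text{-}\frac{2}{\beta}}(\beta)$. The task then reduces to (i) showing that $(\beta^{2n}-4^n)^{n-2}$ divides this denominator, and (ii) showing that the complementary quotient $p_n(\beta)$ is even and squarefree.

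First, I would parameterise the eigenvalues of $\rho_n(h_{\text{-}\frac{2}{\beta}})$ block-by-block on the standard modules $V_{n,d}$, using the derivative formula (\ref{equ:HamDerivativeFormFiltered}) applied at $u_*=-2/\beta$ together with the expansion (\ref{Tnbeta}) and the spider decomposition of Proposition \ref{prop:spiders}. This refines the cellular block structure of Section \ref{Sec:Cellularity} into explicit per-block characteristic polynomials in $\beta$, whose pairwise resultants encode cross-block eigenvalue coincidences.

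Next, I would identify the locus $\beta^{2n}-4^n=0$ as the source of systematic coincidences. At these points, Proposition \ref{prop:special} gives both $T_n(-\beta/2)=\beta\idn$ and $T_n(-2/\beta)=\beta\idn$, so the two identity points of $T_n(u,\beta)$ collapse onto the \emph{same} scalar evaluation; combined with the crossing identity $T_n(u)=u^{2n}T_n(1/u)$ in (\ref{Tcross}) and the spider-crossing identity of Corollary \ref{cor:spidercrossing}, this should force a precise pattern of cross-module eigenvalue coincidences in $\rho_n(h_{\text{-}\frac{2}{\beta}})$ that are \emph{not} matched by $\rho_n(T_n(u,\beta))$ for generic $u$. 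A counting of these pairings, expected to be one per distinct through-line sector pair $(V_{n,d},V_{n,d'})$ identified under crossing, should recover the exponent $n-2$. The evenness of $p_n(\beta)$ would then follow from the algebra automorphism $e_i\mapsto -e_i$ paired with $\beta\mapsto -\beta$ (which preserves the Temperley--Lieb relations and sends $h_{\text{-}\frac{2}{\beta}}$ to a sign-twist of itself at $-\beta$), while squarefreeness should reduce to a transversality statement asserting that each additional $\beta$-value contributing to the denominator produces only a simple eigenvalue collision.

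The main obstacle is the explicit spectral analysis of $\rho_n(h_{\text{-}\frac{2}{\beta}})$ and, most critically, the exact determination of the multiplicity $n-2$. Absent a closed-form Bethe-ansatz diagonalisation of $h_{\text{-}\frac{2}{\beta}}$ or of $T_n(u,\beta)$ for general $n$, one must either exploit the structured non-semisimple representation theory of $\mathrm{TL}_n(\beta)$ at the real roots $\beta=\pm 2$, combined with a holomorphic continuation to the remaining $2n$-th roots of $4^n$, or devise a global resultant-theoretic argument that controls cross-block coincidences without diagonalising. The rapid growth of $\deg p_n$ (already $198$ at $n=5$) and the authors' verification only up to $n=5$ indicate that the combinatorics of cross-sector identifications at $\beta^{2n}=4^n$ is the crux; once this counting is pinned down, the assertions on evenness and squarefreeness of $p_n$ should follow by comparatively standard symmetry and transversality arguments.
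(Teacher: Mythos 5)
The statement you are addressing is a \emph{conjecture} in the paper: the authors provide no proof, only direct computational verification for $n=3,4,5$ (where $f_{3,\text{-}\frac{2}{\beta}}=\beta^6-64$, $f_{4,\text{-}\frac{2}{\beta}}=(\beta^8-256)^2(\cdots)$ and $f_{5,\text{-}\frac{2}{\beta}}=(\beta^{10}-1024)^3p_5(\beta)$ exhibit the factor $(\beta^{2n}-4^n)^{n-2}$). So there is no proof in the paper to compare against, and your proposal should be judged on whether it closes the gap on its own. It does not: it is a strategy outline whose decisive steps are left open, as you yourself concede. The Lagrange-interpolation framing correctly identifies the likely mechanism (roots of $f_{n,\text{-}\frac{2}{\beta}}$ should arise from eigenvalue coincidences of $\rho_n(h_{\text{-}\frac{2}{\beta}})$ that are not matched by coincidences of $\rho_n(T_n(u,\beta))$ for generic $u$), and two of your observations are sound and worth keeping: at $\beta^{2n}=4^n$ Proposition \ref{prop:special} indeed gives $T_n(-\tfrac{\beta}{2})=\tfrac{\beta^{2n+1}}{4^n}\idn=\beta\idn=T_n(-\tfrac{2}{\beta})$, so the two identity-point evaluations collapse; and the isomorphism $\mathrm{TL}_n(\beta)\to\mathrm{TL}_n(-\beta)$, $e_i\mapsto -e_i$, is the right tool for parity statements. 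But neither observation is converted into the required spectral statement about $h_{\text{-}\frac{2}{\beta}}$ itself, and the exact multiplicity $n-2$ — the entire content of the conjecture — is asserted to follow from ``a counting of these pairings'' that is never performed. Squarefreeness of $p_n$ is likewise deferred to an unproven ``transversality'' claim.

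Two further structural gaps: (i) your argument presupposes Conjecture \ref{conj:Ta}, since $f_{n,u_*}(\beta)$ is only \emph{defined} through the decomposition asserted there; a proof of the present statement must either establish that decomposition or work with an independent characterisation of the denominator. (ii) The identification of the cleared Lagrange denominator with $f_{n,u_*}(\beta)$ (after ``cancellations'') is itself nontrivial: $f_{n,u_*}$ is defined by the condition that no root is a common root of \emph{all} the $a^{n,u_*}_i(u,\beta)$, which is not the same as the product of eigenvalue differences of $h_{\text{-}\frac{2}{\beta}}$, and the passage between the two requires controlling exactly which coincidences of $h$-eigenvalues are accompanied by coincidences of the corresponding $T_n$-eigenvalues as functions of $u$. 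In short, the proposal is a reasonable research programme consistent with the (purely empirical) evidence in the paper, but it is not a proof, and it would not upgrade the paper's conjecture to a theorem.
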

If Conjecture \ref{conj:Ta} holds, then every $h_{u_*}$-exceptional $\beta$-value will be a root of 
$f_{n,u_*}(\beta)$. The converse need not be true since $T_n(u,\beta)$ could be polynomial in $h_{u_*}$ even if 
$\beta$ is a root of $f_{n,u_*}(\beta)$, see below. Letting $E_{n,u_*}$ denote the set of $h_{u_*}$-exceptional 
$\beta$-values and $Z_{n,u_*}$ the set of roots (or zeros) of $f_{n,u_*}(\beta)$, we thus have
\be
 E_{n,u_*}\subseteq Z_{n,u_*}.
\label{ER}
\ee
To appreciate what happens if this is not an equality, 
let $u\notin\{0,-\frac{\beta}{2},-\frac{2}{\beta}\}$ and rewrite (\ref{equ:DTMHpolyRefined}) as
\be
 \sum_{i=1}^{l_{u_*}^{(n)}-1}a^{n,u_*}_i(u,\beta) h_{u_*}^{i}
  =\frac{f_{n,u_*}(\beta)}{u(\beta + 2u)(2+\beta u)}\Big(T_n(u,\beta)-\big(\beta[n+1]_{u^2}+2u[n]_{u^2}\big)\idn\Big).
\ee
Specialising $\beta$ to a root, $\beta_r$, of $f_{n,u_*}(\beta)$ then means that
\be
 \sum_{i=1}^{l_{u_*}^{(n)}-1}a^{n,u_*}_i(u,\beta) h_{u_*}^{i}\big|_{\beta=\beta_r}=0,
\ee
and since at least one of the polynomials $a^{n,u_*}_i(u,\beta)$ is nonzero when evaluated at $\beta=\beta_r$, 
it follows that
\be
 l_{u_*,\beta_r}^{(n)}<l_{u_*}^{(n)}.
\ee

For $\beta_r\in Z_{n,u_*}\setminus E_{n,u_*}$, the decomposition (\ref{equ:DTMHpolyRefined}) is replaced by
\be
  T_n(u,\beta_r) = \big(\beta_r[n+1]_{u^2} + 2u[n]_{u^2}\big)\idn
    +u(\beta_r+ 2u)(2+\beta_r u)
    \!\sum_{i=1}^{l_{u_*,\beta_r}^{(n)}-1}\!a^{n,u_*}_{i,\beta_r}(u) h_{u_*}^{i}\big|_{\beta=\beta_r},
\ee
where $a^{n,u_*}_{i,\beta_r}(u)$ is polynomial for all $i$.
Although $Z_{3,0}=Z_{4,0}=\emptyset$ and $Z_{5,0}=E_{5,0}$, 
we find that such a root $\beta_r$ does indeed exist for $n=6$, as
\be
 Z_{6,0}\setminus E_{6,0}=\{0\}.
\ee
Note that $\beta_r=0$ is the only degenerate root of $f_{6,0}(\beta)$.
Through direct computation, we likewise find that
\be
 Z_{n,\text{-}\frac{2}{\beta}}=E_{n,\text{-}\frac{2}{\beta}},
  \qquad n=3,4,5,
\ee
but have not managed to determine $Z_{n,\text{-}\frac{2}{\beta}}$ and $E_{n,\text{-}\frac{2}{\beta}}$ for $n\geq6$.

For $\beta_r\in E_{n,u_*}$, $T_n(u,\beta_r)$ is {\em not} expressible as a polynomial 
in $h_{u_*}$. However, observing that
\be
 Z_{n,0}\cap Z_{n,\text{-}\frac{2}{\beta}}=\emptyset,\qquad n=3,4,5,
\ee
we conclude that
\be
 E_{n,0}\cap E_{n,\text{-}\frac{2}{\beta}}=\emptyset,\qquad n=3,4,5.
\label{EnEnempty}
\ee
It follows that, for $n=3,4,5$ and every $\beta\in\Cbb$, $T_n(u,\beta)$ {\em is} polynomial 
in {\em at least one} of the two hamiltonians: $h_0$ and $h_{\text{-}\frac{2}{\beta}}$.

\section{Discussion} 
\label{Sec:Discussion}

We have developed a general framework for describing integrable models based on planar algebras, and we 
have revisited the notion of integrals of motion from an algebraic perspective, introducing polynomial integrability
as a fundamental characteristic. We applied the framework to the Temperley--Lieb loop model and to 
an eight-vertex model, and discussed their polynomial integrability.

It would be interesting to explore further what polynomial integrability can teach us about an integrable model
-- in general but also in specific models of particular physical relevance.
For example, we would like to understand how polynomial integrability might recast
$T$-systems \cite{PK91,KP92}, $Y$-systems \cite{Zamolodchikov91,Zamolodchikov91b}
and functional relations more generally \cite{KNS11}, including the inversion relations
for critical dense polymers described by $\mathrm{TL}_n(\beta=0)$ \cite{PR07,PRT14b},
and the similar higher-order relations in \cite{MDPR14}.

In lattice-model language, our transfer operators are constructed on the strip. When extending to the 
cylinder or annulus, transfer operators may be constructed from \textit{affine tangles}, in which case the 
operators are morphisms in the \textit{affine category} of a given planar algebra \cite{J01, Ghosh11}.
We expect that most of our polynomial integrability considerations carry over to the periodic case.

On both the strip and cylinder, the polynomial integrability of the Temperley--Lieb loop model seems to have 
intricate implications for our understanding and analysis of the continuum scaling limit. If indeed the hamiltonian 
generates its own centraliser for {\em all} finite $n$, it would require reconciliation with established
insight \cite{KS94} into how the conformal integrals of motion \cite{BLZ96} inherit behaviour displayed by their 
finite-size counterparts.

Preliminary results indicate that properties and results similar to the ones reported here for the Temperley--Lieb 
loop model, and for the special eight-vertex model in Appendix \ref{Sec:Tensor}, apply for the standard 
six- and eight-vertex models \cite{Lieb67,FW70,Sutherland70,Baxter71,Baxter07}, RSOS 
models \cite{ABF84,FB85}, the dilute loop models related to the $O(n)$ models \cite{Nienhuis90,Nienhuis90b},
fused \cite{BR89} Temperley--Lieb loop models \cite{FR02,ZJ07,PRT14,MDPR14}, 
and models built on Brauer \cite{Brauer37} and Birman--Wenzl--Murakami algebras \cite{BW89,Mur87}.
We hope to return elsewhere with more on this.

\subsection*{Acknowledgements}

XP is supported by an Australian Postgraduate Award from the Australian Government.
JR is supported by the Australian Research Council under the Discovery Project scheme, 
project number DP200102316.
The authors thank Jon Links, Alexi Morin-Duchesne, Paul Pearce, Eric Ragoucy  and Yvan Saint-Aubin 
for discussions and comments. 
We would also like to thank the reviewers for their thoughtful suggestions.

\appendix

\section{Planar algebras}
\label{Sec:PAs}

\subsection{Naturality}
\label{Sec:Naturality}

Planar tangles can be {\em glued} or composed as follows.
Let $T$ and $S$ be planar tangles such that $\eta(D_0^S)=\eta(D)$ for some $D\in\mathcal{D}_T$.
Then, it is possible to isotopically deform $S$ so that it fits in the space occupied by $D$, 
in such a way that the nodes match up and the marked intervals are aligned. 
The gluing is then performed by replacing
$D$ with $S$ and removing the boundary intervals of $S$, including the associated marking, 
thereby producing the planar tangle denoted by $T\circ_D S$. 
To illustrate, the following tangle $S$ can be glued inside the tangle $T$:
\be
 T = \raisebox{-1.25cm}{\CompTangT}, \qquad\quad 
 S = \raisebox{-1.25cm}{\CompTangS}, \qquad\quad 
 T\circ_D S = \raisebox{-1.5cm}{\CompTangTS}.
\label{equ:CompositionTS}
\ee
Consistency between the composition of planar tangles and the action of the tangles as multilinear 
maps \eqref{PT} is often referred to as {\em naturality} and corresponds to 
\begin{align}\label{equ:NatSimpExII}
    \Pb_{T\circ_D S}(v_t, v_s)=\Pb_T(v_t, \Pb_S(v_s) ), \qquad
    v_t\in\!\bigtimes_{D_t\in\Dc_T\setminus\{D\}}\!P_{\eta(D_t)}, \quad 
    v_s\in\!\bigtimes_{D_s\in\Dc_S}\!P_{\eta(D_s)},
\end{align}
see \cite{JonesNotes, Poncini23} for more details.

\subsection{Unitality}
\label{Sec:Unitality}

Let $(P_n)_{n\in\mathbb{N}_0}$ be a planar algebra.
\begin{lem}\label{lem:id_and_rot_invertible}
    If $P_n$ has no nonzero null vectors, then $\mathsf{P}_{r_{n,0}}$ is the identity operator.
\end{lem}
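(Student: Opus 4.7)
The plan is to apply the no-null-vectors hypothesis to the candidate vector
\[
w := \mathsf{P}_{r_{n,0}}(v) - v
\]
for an arbitrary $v \in P_n$. It suffices to show that $\mathsf{P}_T(w) = 0$ for every planar tangle $T$ with $\mathcal{D}_T = \{D\}$ and $\eta(D) = n$; the hypothesis then forces $w = 0$, and since $v$ was arbitrary and $\mathsf{P}_{r_{n,0}}$ is linear, this yields $\mathsf{P}_{r_{n,0}} = \mathrm{id}_{P_n}$.

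To verify the null-vector property, I would invoke naturality \eqref{equ:NatSimpExII}. For any admissible $T$, one has
\[
\mathsf{P}_T\!\big(\mathsf{P}_{r_{n,0}}(v)\big) \;=\; \mathsf{P}_{T\circ_D r_{n,0}}(v).
\]
The key geometric claim is that $T \circ_D r_{n,0}$ is ambient-isotopic to $T$ itself: the rotation tangle $r_{n,0}$ depicted in \eqref{equ:RotTan} consists of $n$ radial spokes joining each inner node to the correspondingly aligned outer node, with the two marked intervals matched. When it is glued into $D$ and the inner boundary removed, one is left with $n$ short straight strings that retract by planar isotopy to recover $T$. Since planar tangles are defined up to ambient isotopy, $T \circ_D r_{n,0}$ and $T$ induce the same multilinear map, so $\mathsf{P}_T\!\big(\mathsf{P}_{r_{n,0}}(v)\big) = \mathsf{P}_T(v)$, and hence $\mathsf{P}_T(w) = 0$ as required.

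The main (and essentially only) point that needs care is the isotopy claim $T \circ_D r_{n,0} \simeq T$. One must track the alignment of the marked boundary intervals on both the input disk $D$ of $T$ and the output disk of $r_{n,0}$, confirming that the $i$-th inner node of $r_{n,0}$ is glued to the $i$-th node of $D$ (counting from the marked interval) so that no hidden rotation is introduced by the composition convention. Once this is settled, the argument is a two-line application of naturality plus the defining property of null vectors.
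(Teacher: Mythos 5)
Your proposal is correct and follows essentially the same route as the paper: apply naturality to get $\mathsf{P}_{T\circ_D r_{n,0}}(v)=\mathsf{P}_T(\mathsf{P}_{r_{n,0}}(v))$, use that $T\circ_D r_{n,0}$ is isotopic to $T$, and conclude that $v-\mathsf{P}_{r_{n,0}}(v)$ is annihilated by every such $\mathsf{P}_T$, hence zero. You merely make explicit the isotopy $T\circ_D r_{n,0}\simeq T$ that the paper leaves implicit, which is a reasonable addition.
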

\begin{proof}
    Let $v\in P_n$ and $T$ be a planar tangle for which $\Dc_T=\{D\}$ with $\eta(D)=n$. 
    By naturality, we then have
    \be
     \mathsf{P}_{T\circ_D\,r_{n,0}}(v) = \mathsf{P}_{T}(\mathsf{P}_{r_{n,0}}(v)),
    \ee
    hence
    \be
       \mathsf{P}_{T}(v - \mathsf{P}_{r_{n,0}}(v)) = 0,
    \ee
    so $v - \mathsf{P}_{r_{n,0}}(v)\in \mathrm{ker}(\mathsf{P}_T)$.
    Since $P_n$ has no nonzero null vectors, it follows that $\mathsf{P}_{r_{n,0}}(v) = v$ for all $v\in P_n$.
\end{proof}
\noindent
Together with naturality, Lemma \ref{lem:id_and_rot_invertible} implies the following result.
\begin{cor}
 If $P_n$ has no nonzero null vectors, then $\mathsf{P}_{r_{n,k}}$ is invertible for every $k\in\Zbb$.
\end{cor}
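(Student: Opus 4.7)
The plan is to combine Lemma \ref{lem:id_and_rot_invertible} with the composition law of rotation tangles $r_{n,k}\circ r_{n,l}=r_{n,k+l}$ stated in Section \ref{Sec:Cross}, together with the naturality property \eqref{equ:NatSimpExII}. The strategy is to exhibit an explicit two-sided inverse for $\mathsf{P}_{r_{n,k}}$, namely $\mathsf{P}_{r_{n,-k}}$.

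First, I would observe that, by the composition law, $r_{n,k}\circ r_{n,-k}=r_{n,0}=r_{n,-k}\circ r_{n,k}$ for every $k\in\Zbb$. Next, applying naturality to these tangle compositions yields the operator identities
\[
\mathsf{P}_{r_{n,k}}\circ\mathsf{P}_{r_{n,-k}}=\mathsf{P}_{r_{n,0}}=\mathsf{P}_{r_{n,-k}}\circ\mathsf{P}_{r_{n,k}}.
\]
Since $P_n$ has no nonzero null vectors, Lemma \ref{lem:id_and_rot_invertible} identifies $\mathsf{P}_{r_{n,0}}$ with the identity map on $P_n$, so $\mathsf{P}_{r_{n,-k}}$ is a two-sided inverse of $\mathsf{P}_{r_{n,k}}$, establishing the result.

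There is essentially no obstacle: the proof is a direct corollary and the only subtlety is ensuring that naturality is being applied in the correct form (for a tangle composition with a single input disk filled in, both maps have the same source and target $P_n$, so the composition makes sense as an endomorphism of $P_n$). The hypothesis on null vectors enters only through Lemma \ref{lem:id_and_rot_invertible}, which handles the identification of $\mathsf{P}_{r_{n,0}}$ with the identity.
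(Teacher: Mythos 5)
Your proposal is correct and is precisely the argument the paper intends: it states that the corollary follows from Lemma \ref{lem:id_and_rot_invertible} ``together with naturality'', which is exactly your combination of $r_{n,k}\circ r_{n,-k}=r_{n,0}$, naturality, and the identification of $\mathsf{P}_{r_{n,0}}$ with the identity. Nothing is missing.
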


Now, let $(A_n)_{n\in\mathbb{N}_0}$ be a planar algebra of the type used in the paragraph 
containing \eqref{equ:MultPlanar}, and recall $\idn:= \Pb_{\mathrm{Id}_n}\!()$ (\ref{Aunit}).
\begin{prop}\label{prop:AssocUnitAlg}
Let $A_n$ be endowed with the multiplication induced by $M_n$, and suppose
$A_n$ has no nonzero null vectors. Then, $A_n$ is unital, with unit $\mathds{1}_n$.
\end{prop}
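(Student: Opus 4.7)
The plan is to verify left- and right-unitality directly, i.e.\ to show that
\[
  \mathsf{P}_{M_n}(\mathds{1}_n, v) = v = \mathsf{P}_{M_n}(v, \mathds{1}_n), \qquad \forall v \in A_n.
\]
The hinge is naturality together with Lemma \ref{lem:id_and_rot_invertible}. Since $\mathds{1}_n = \mathsf{P}_{\mathrm{Id}_n}()$, naturality (\ref{equ:NatSimpExII}) applied to the composition of $M_n$ with $\mathrm{Id}_n$ glued into one of its input disks $D \in \mathcal{D}_{M_n}$ yields
\[
  \mathsf{P}_{M_n}(v,\mathsf{P}_{\mathrm{Id}_n}()) = \mathsf{P}_{M_n\,\circ_{D}\,\mathrm{Id}_n}(v),
\]
and analogously with the roles of the two input slots interchanged. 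Thus both unitality statements are reduced to identifying the composite tangles $M_n\,\circ_{D}\,\mathrm{Id}_n$, for $D$ the upper, respectively lower, input disk of $M_n$.

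The main (essentially pictorial) step is to observe that, in each of the two cases, the composite is isotopic to the rotation tangle $r_{n,0}$ of (\ref{equ:RotTan}). Indeed, $\mathrm{Id}_n$ consists of $n$ parallel non-crossing strings connecting the nodes of an output disk to the nodes of a single inner disk with aligned marked intervals. Substituting $\mathrm{Id}_n$ into an input slot of $M_n$ (see (\ref{equ:MultPlanar})) simply extends the strings emanating from that slot straight through to the output of $M_n$, leaving a tangle with a single input disk whose $n$ nodes are connected, non-crossingly and with aligned boundary marks, to the $n$ nodes of the output disk. Up to ambient isotopy, this is precisely $r_{n,0}$.

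Applying Lemma \ref{lem:id_and_rot_invertible}, which uses the absence of nonzero null vectors in $A_n$, we obtain $\mathsf{P}_{r_{n,0}}(v) = v$, and therefore
\[
  \mathsf{P}_{M_n}(v,\mathds{1}_n) = \mathsf{P}_{r_{n,0}}(v) = v, \qquad
  \mathsf{P}_{M_n}(\mathds{1}_n,v) = \mathsf{P}_{r_{n,0}}(v) = v,
\]
establishing unitality. The only subtlety is the isotopy identification of the composed tangle with $r_{n,0}$, but this is a direct consequence of the planar-algebra conventions (disks with no input, aligned marked intervals, and well-defined node-counts), so no new ingredients beyond naturality and Lemma \ref{lem:id_and_rot_invertible} are required.
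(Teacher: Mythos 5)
Your proof is correct and follows essentially the same route as the paper's: reduce $\mathsf{P}_{M_n}(v,\mathds{1}_n)$ and $\mathsf{P}_{M_n}(\mathds{1}_n,v)$ via naturality to $\mathsf{P}_{M_n\circ_D\mathrm{Id}_n}(v)$, identify the composite tangle with the trivial rotation tangle, and invoke Lemma \ref{lem:id_and_rot_invertible}. The only cosmetic point is that, since $A_n\equiv P_{2n}$ has $2n$ boundary nodes, the relevant rotation tangle is $r_{2n,0}$ rather than $r_{n,0}$.
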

\begin{proof}
    Let $D_l$ and $D_u$ denote the lower respectively upper disk in the planar tangle $M_n$, and let $v\in A_n$. 
    Then, naturality implies that
    \begin{align}
  v\mathds{1}_n
  =\mathsf{P}_{M_n}(v, \mathds{1}_n)
  =\mathsf{P}_{M_n\circ_{D_u} \mathrm{Id}_n}(v)
  =\mathsf{P}_{r_{2n,0}}(v)
  =\mathsf{P}_{M_n\circ_{D_l} \mathrm{Id}_n}(v)
  =\mathsf{P}_{M_n}(\mathds{1}_n,v)
  =\mathds{1}_nv.
    \end{align}
By Lemma \ref{lem:id_and_rot_invertible}, $\mathsf{P}_{r_{2n,0}}(v)=v$, 
hence $v\mathds{1}_n=v=\mathds{1}_nv$.
\end{proof}

\section{\texorpdfstring{$\mathrm{TL}_n(\beta)$}{TLn(beta)} polynomials}
\label{Sec:App}

\subsection{Principal hamiltonian \texorpdfstring{$h_0$}{h0}}
\label{Sec:Apph0}

For $n=2,\ldots,7$ and $\beta$ an indeterminate, the minimal polynomial for $h_{0}$ is given by
\begin{align}
 m_{0}^{(2)}(h)&=h^2+\beta h,
\label{m2h}
\\[.2cm]
 m_{0}^{(3)}(h)&=h^3+2\beta h^2+(\beta^2-1)h,
\\[.2cm]
 m_{0}^{(4)}(h)&=h^6+6\beta h^5
  +14\big(\beta^2\!-\!\tfrac{2}{7}\big)h^4
  +16\beta\big(\beta^2\!-\!\tfrac{7}{8}\big)h^3
  +9\big(\beta^4\!-\!\tfrac{16}{9}\beta^2\!+\!\tfrac{4}{9}\big)h^2
  +2\beta\big(\beta^4\!-\!3\beta^2\!+\!2\big)h,
\\[.2cm]
 m_{0}^{(5)}(h)
 &=h^{10}
  +12\beta h^9
  +63\big(\beta^2-\tfrac{1}{7}\big)h^8
  +190\beta\big(\beta^2-\tfrac{41}{95}\big)h^7
  +\dots,
\label{m5h}
\\[.2cm]
 m_{0}^{(6)}(h)
 &=h^{20}
  +30\beta h^{19}
  +423\big(\beta^2-\tfrac{8}{141}\big)h^{18}
  +3726\beta\big(\beta^2-\tfrac{106}{621}\big)h^{17}
  +\dots,
\\[.2cm]
 m_{0}^{(7)}(h)
 &=h^{35}
  +60\beta h^{34}
  +1740\big(\beta^2-\tfrac{5}{174}\big)h^{33}
  +32488\beta\big(\beta^2-\tfrac{701}{8122}\big)h^{32}
  +\dots.
\end{align}
In a matrix representation of $\rho_n(h_{0})$, the off-diagonal elements are independent of $\beta$, whereas
the diagonal elements are of the form $-i\beta$, $i\in\{0,\ldots,\lfloor\frac{n}{2}\rfloor\}$. Since the 
number of elements equal to $-i\beta$ is $\binom{\lfloor\frac{n}{2}\rfloor}{i}\binom{\lceil\frac{n}{2}\rceil}{i}$, and
\be
 \sum_{i=0}^{\lfloor\frac{n}{2}\rfloor}\binom{\lfloor\frac{n}{2}\rfloor}{i}\binom{\lceil\frac{n}{2}\rceil}{i}i
 =\big\lfloor\tfrac{n}{2}\big\rfloor c_{n-1},
\ee
it follows that
\be
 m_0^{(n)}(h)=h^{c_n}
  +\big\lfloor\tfrac{n}{2}\big\rfloor c_{n-1}\beta h^{c_n-1}+\dots.
\label{mnh0}
\ee
We also note that the degree of the monic $\beta$-polynomial multiplying $h^i$ in $m_{0}^{(n)}(h)$ is given by
$c_n-i$, and that this $\beta$-polynomial is even (respectively odd) if the degree is even (respectively odd).

For $\beta=0$ and $n\geq2$, there are spurious degeneracies in the spectrum of $\rho_n(h_0)$, so
$l_{0,0}^{(n)}$ could be smaller than $l_{0}^{(n)}$.
Through direct computation, we find
\begin{align}
 m_{0}^{(n)}(h)\big|_{\beta=0}&=m_{0,0}^{(n)}(h),\qquad n=2,3,4,
\\[.15cm]
 m_{0}^{(5)}(h)\big|_{\beta=0}&=h\,m_{0,0}^{(5)}(h),
\\[.15cm]
 m_{0}^{(6)}(h)\big|_{\beta=0}&=h^2m_{0,0}^{(6)}(h),
\\[.15cm]
 m_{0}^{(7)}(h)\big|_{\beta=0}&=h^2U_6\big(\tfrac{h}{2}\big)m_{0,0}^{(7)}(h).
\end{align}
We thus have
\begin{center}
{\renewcommand{\arraystretch}{1.5}
\begin{tabular}{|c||c|c|c|c|c|c|c|c|c|}
 \hline
 $n$ & $2$ & $3$ & $4$ & $5$ & $6$ & $7$
 \\
\hline
\hline
 $l_{0,0}^{(n)}$ & $2$ & $3$ & $6$ & $9$ & $18$ & $27$
 \\
\hline
 $l_{0}^{(n)}$ & $2$ & $3$ & $6$ & $10$ & $20$ & $35$
 \\
\hline
\end{tabular}
}
\end{center}
which confirms the following conjecture for $n=2,\ldots,7$.
\begin{conj}
For $n\in\Zbb_{\geq2}$, we have
\be
 l_{0,0}^{(n)}=\tfrac{1}{2}\big(3^{\lfloor\frac{n+1}{2}\rfloor}+(-1)^n3^{\lfloor\frac{n-1}{2}\rfloor}\big).
\ee
\end{conj}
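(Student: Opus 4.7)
The plan is to establish the explicit recurrence $l_{0,0}^{(n)} = 3\,l_{0,0}^{(n-2)}$ for $n \geq 4$, which together with the already-computed base cases $l_{0,0}^{(2)} = 2$ and $l_{0,0}^{(3)} = 3$ yields the closed-form expression by induction. Indeed, the generating function of the claimed sequence is $x^2(2+3x)/(1-3x^2)$, so any proof must produce a tripling mechanism with step $2$.

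The first step is to replace $\rho_n$ by a representation that is faithful at $\beta=0$: because the Gram determinants $\det G_{n,d}$ vanish whenever $d+2j=2,4,\ldots,n$ for some admissible $j$, as seen from \eqref{equ:TLGramDet}, the direct sum of standard modules is not faithful at $\beta=0$. A more economical faithful substitute is the direct sum $P_n$ of indecomposable projective covers of the simple heads of the standard modules. At $\beta=0$, corresponding to $q=\I$ (a primitive fourth root of unity in \eqref{q}), the principal blocks of $\mathrm{TL}_n(0)$ are of a known type, with projective covers carrying a short zigzag Loewy filtration in which standard modules $V_{n,d}$ are linked according to a reflection-type orbit on the defect parameter.

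The second step is to compute the spectrum of $h_0$ on each standard module $V_{n,d}$. At $\beta=0$ the Hamiltonian $h_0=-\sum_i e_i$ of \eqref{h0} admits a free-fermion realization: via Jordan--Wigner, its action on the natural spin-chain module is equivalent to a quadratic fermionic Hamiltonian diagonalised by discrete Fourier modes, with eigenvalues of the form $-2\sum_{k\in S}\cos\theta_k$ for suitable quantised $\theta_k$ and subsets $S$ selected by the defect number $d$. This gives the semisimple part of the spectrum and, crucially, exhibits a manifest reflection symmetry $\lambda\mapsto -\lambda$ which is responsible for the factorisations $h^2(h^2-2)^2$, $h\cdot m_{0,0}^{(5)}$, etc.\ observed in Appendix \ref{Sec:Apph0}. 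The third step extracts the Jordan-block structure of $h_0$ on $P_n$: the gluing between linked standard modules forces non-diagonal blocks whose sizes coincide with the number of layers of the relevant Loewy filtration. Combining the semisimple spectrum of Step 2 with the block sizes of Step 3, and comparing the resulting eigenvalue-with-multiplicity data for $n$ and $n-2$, yields the recurrence $l_{0,0}^{(n)}=3\,l_{0,0}^{(n-2)}$, where the factor of three reflects the trichotomy ``vacuum mode $\theta=0$ / newly-added Fourier mode $\pm\theta$ / matching of adjacent Loewy floors'' that arises when incrementing the chain length by two.

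The hard part will be Step 3: rigorously controlling the Jordan block sizes of $h_0$ on $P_n$ requires precise handling of the non-semisimple structure of $\mathrm{TL}_n(0)$, since spurious degeneracies in the fermionic spectrum combined with non-trivial gluing can conspire to either increase or leave unchanged the minimal polynomial degree relative to the diagonalisable case. An attractive alternative, bypassing projective covers altogether, would be to guess a closed product form for $m_{0,0}^{(n)}(h)$ motivated by the examples in Appendix \ref{Sec:Apph0} (e.g.\ a product of Chebyshev-like factors $U_k(h/2)$ weighted by multiplicities counting linked standard modules) and verify it inductively using Lemma \ref{lem:squid} and Proposition \ref{prop:spiders}; the $U_6(h/2)$ factor appearing in $m_0^{(7)}(h)|_{\beta=0}$ is strongly suggestive of such a pattern.
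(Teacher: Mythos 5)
There is a genuine gap: what you have written is a research plan, not a proof, and the paper itself offers no proof either --- the statement is posed there as a conjecture, confirmed only by direct computation of $m_{0,0}^{(n)}$ for $n=2,\ldots,7$ (the table of $l_{0,0}^{(n)}$ values in Appendix \ref{Sec:Apph0}). Your reduction of the closed form to the recurrence $l_{0,0}^{(n)}=3\,l_{0,0}^{(n-2)}$ with base cases $l_{0,0}^{(2)}=2$, $l_{0,0}^{(3)}=3$ is arithmetically correct and is a sensible way to organise an attack, and you are right that $\rho_n$ fails to be faithful at $\beta=0$ so that the minimal polynomial of the algebra element $h_0\in\mathrm{TL}_n(0)$ must be computed in a genuinely faithful module (projective covers, or the regular representation). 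But the recurrence itself is never derived. Step 2 asserts a free-fermion spectrum for $h_0$ on the standard modules $V_{n,d}$; the Jordan--Wigner diagonalisation applies to the spin-chain module, which at $\beta=0$ ($q=\I$) is itself non-semisimple as a $\mathrm{TL}_n(0)$-module, so transporting that spectrum to the $V_{n,d}$ (and then to the projective covers) requires the full non-semisimple decomposition theory that you defer. Step 3 --- determining the Jordan-block sizes of $h_0$ on the projective indecomposables, which is exactly what controls whether the minimal-polynomial degree exceeds the number of distinct eigenvalues --- is explicitly left open by you, and the ``trichotomy'' offered to explain the factor of $3$ is a heuristic, not an argument.

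In short, every step that would actually force the tripling is conjectural in your write-up, so the proposal does not close the statement; it upgrades the paper's conjecture to a more structured conjecture (a recurrence plus a representation-theoretic mechanism). If you want to pursue this, the most promising concrete target is the one you name at the end: a closed product formula for $m_{0,0}^{(n)}(h)$ in terms of Chebyshev factors, which could in principle be verified against the paper's data for $n\le 7$ and then proved by induction; but as it stands neither the product formula nor the recurrence is established.
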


\subsection{Principal hamiltonian \texorpdfstring{$h_{\text{-}\frac{2}{\beta}}$}{halt}}
\label{Sec:Apphbeta}

For $n=2,3,4,5$, the principal hamiltonian $h_{n,\text{-}\frac{2}{\beta}}$ is given by
\begin{align}
 h_{2,\text{-}\frac{2}{\beta}}&=e_1,
\\[.2cm]
 h_{3,\text{-}\frac{2}{\beta}}&= -\tfrac{1}{2\beta}(\beta^2 + 4)(e_1 + e_2) +e_1e_2+e_2e_1,
\\[.2cm]
 h_{4,\text{-}\frac{2}{\beta}}&= \tfrac{1}{4\beta^2}(\beta^4+4\beta^2+16)(e_1 + e_3) 
  +\tfrac{1}{4\beta^2}(\beta^2+4)^2e_2
  -\tfrac{1}{2\beta}(\beta^2+4)(e_1e_2+e_2e_1+e_2e_3+e_3e_2)
\nonumber\\[.2cm]
 &-\tfrac{4}{\beta} e_1e_3 +
  e_1e_2e_3+e_3e_2e_1+e_1e_3e_2+e_2e_1e_3,
\\[.2cm]
 h_{5,\text{-}\frac{2}{\beta}}
 &= -\tfrac{1}{8\beta^3} (\beta ^2+4)(\beta ^4+16)(e_1+e_4) 
  -\tfrac{1}{8\beta^3} (\beta ^2+4)(\beta^4+4\beta^2+16)(e_2+e_3) 
\nonumber\\[.2cm]
 &+\tfrac{1}{4\beta^2}(\beta^4+4\beta^2+16)(e_1e_2 + e_3e_4 + e_4e_3 + e_2e_1)
  +\tfrac{1}{4\beta^2}(\beta ^2+4)^2 (e_2e_3 + e_3e_2)
\nonumber\\[.2cm]
 & +\tfrac{2}{\beta^2}(\beta ^2+4) (e_1e_3 + e_2e_4)
  +\tfrac{8}{\beta^2} e_1e_4
  -\tfrac{4}{\beta}(e_1e_2e_4 + e_1e_3e_4 + e_1e_4e_3 + e_2e_1e_4)
 \\[.2cm]
 &-\tfrac{1}{2\beta} (\beta ^2+4) (e_1e_2e_3 + e_1e_3e_2 + e_2e_1e_3 + e_2e_3e_4 + e_2e_4e_3 + e_3e_2e_1 + e_3e_2e_4 + e_4e_3e_2)
\nonumber\\[.2cm]
 &+e_1e_2e_3e_4 + e_1e_2e_4e_3 + e_1e_3e_2e_4 + e_1e_4e_3e_2 + e_2e_1e_3e_4 + e_2e_1e_4e_3 + e_3e_2e_1e_4 + e_4e_3e_2e_1,
\nonumber
\end{align}
and for $\beta$ an indeterminate, its minimal polynomial is given by
\begin{align}
 m_{\text{-}\frac{2}{\beta}}^{(2)}(h)&=h^2-\beta h,
\\[.2cm]
 m_{\text{-}\frac{2}{\beta}}^{(3)}(h)
 &=h^3+\tfrac{2\beta}{2\beta}(\beta^2+2)h^2+\tfrac{1}{(2\beta)^2}(\beta^6+3\beta^4+12\beta^2-16)h,
\\[.2cm]
 m_{\text{-}\frac{2}{\beta}}^{(4)}(h)
  &=h^6
  -\tfrac{6\beta}{(2\beta)^2}\big(\beta^4+\tfrac{8}{3}\beta^2+\tfrac{16}{3}\big)h^5
  +\tfrac{14}{(2\beta)^4}\big(\beta^{10}+\tfrac{38}{7}\beta^8 
  +\dots\big)h^4
  -\tfrac{16\beta}{(2\beta)^6}\big(\beta^{14}+\ldots\big)h^3+\dots,
\\[.2cm]
 m_{\text{-}\frac{2}{\beta}}^{(5)}(h)
  &=h^{10}
  +\tfrac{12\beta}{(2\beta)^3}(\beta^6+3\beta^4+8\beta^2+16)h^9
  +\tfrac{63}{(2\beta)^6}(\beta^{14}+\dots)h^8
  +\tfrac{190\beta}{(2\beta)^9}(\beta^{20}+\dots)h^7+\dots.
\end{align}
We note that the numerators of the fractions multiplying the even monic $\beta$-polynomials in these minimal 
polynomials are the same as the coefficients to the similar terms in (\ref{m2h})--(\ref{m5h}).
We also note that the degree of the monic $\beta$-polynomial multiplying $h^i$ in $m_{0}^{(n)}(h)$ is given by
$(n-1)(c_n-i)$, and that this $\beta$-polynomial is even (respectively odd) if the degree is even (respectively odd).
For the $h_{\text{-}\frac{2}{\beta}}$-pendant to (\ref{mnh0}), we conjecture the following expression.
\begin{conj}
For $n\in\Zbb_{\geq2}$, we have
\be
 m_{\text{-}\frac{2}{\beta}}^{(n)}(h)=h^{c_n}
  -(-1)^n\Big\lfloor\frac{n}{2}\Big\rfloor \frac{c_{n-1}}{2^{n-2}}
  \Big(\beta^{n-1}+\frac{4(n-2)}{n-1}\beta^{n-3}+\dots\Big)
  h^{c_n-1}+\dots.
\ee
\end{conj}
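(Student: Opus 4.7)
The plan is to reduce the conjecture to a trace computation via non-derogation, then extract the two leading $\beta$-monomials of that trace by a combinatorial analysis of the spider decomposition (\ref{hustar}).

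First, assuming Conjecture \ref{conj:DegenSpecHams}, Proposition \ref{prop:hnonder} gives that $h_{\text{-}\frac{2}{\beta}}$ is non-derogatory with $l_{\text{-}\frac{2}{\beta}}^{(n)}=c_n$. Since $\rho_n$ is faithful for $\beta$ an indeterminate (Section \ref{Sec:Cellularity}), the minimal polynomial of $h_{\text{-}\frac{2}{\beta}}$ then equals the characteristic polynomial of the $c_n\times c_n$ matrix $\rho_n(h_{\text{-}\frac{2}{\beta}})$. This immediately yields the leading term $h^{c_n}$ and reduces the coefficient of $h^{c_n-1}$ to $-\tr\rho_n(h_{\text{-}\frac{2}{\beta}})$. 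The conjecture therefore boils down to proving
\begin{equation*}
\tr\rho_n(h_{\text{-}\frac{2}{\beta}}) = (-1)^n \Big\lfloor\frac{n}{2}\Big\rfloor \frac{c_{n-1}}{2^{n-2}} \Big(\beta^{n-1}+\frac{4(n-2)}{n-1}\beta^{n-3}+\dots\Big).
\end{equation*}

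Next, I would compute this trace block-by-block, writing $\rho_n=\bigoplus_{d\in D_n}\rho_{n,d}$ and substituting $u_*=-2/\beta$ into (\ref{hustar}). Each spider $S^{(n)}_{j,k}$ then carries the overall prefactor $(-1)^n\,2^{2k-n+2}\beta^{n-2-2k}$, while $S^{(n)}_{j,k}(-2/\beta)$ itself expands as a Laurent polynomial in $\beta$ with bounded positive $\beta$-degree. On each standard module $V_{n,d}$, the diagonal matrix entry of $S^{(n)}_{j,k}(u_*)$ in the link-state basis is nonzero only for link states whose connectivity is stabilised (up to loop removal) by the spider's cap-and-string template, in which case the entry is a monomial $\beta^{\ell}$ counting the newly formed closed loops. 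Summing these contributions over $d$, $j$, $k$, and over the contributing link states should assemble the trace as a polynomial in $\beta$ of degree $n-1$.

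The leading term $\beta^{n-1}$ should come from the $k=0$ contributions, for which the overall prefactor $(-1)^n 2^{2-n}\beta^{n-2}$ is amplified by one further factor of $\beta$ from a loop formed by the spider action. At this order, the spiders $S^{(n)}_{j,0}(u_*)$ effectively reduce (in terms of $\beta$-degree tracking) to $e_{j+1}$-like cap insertions, which parallels the computation of $\tr\rho_n(h_0)$ whose leading coefficient is $\lfloor n/2\rfloor c_{n-1}\beta$ by (\ref{mnh0}). A direct bijection between the link states contributing at order $\beta^{n-1}$ here and those contributing at order $\beta$ in the $h_0$ case should pin down the leading coefficient as $\lfloor n/2\rfloor c_{n-1}/2^{n-2}$ with the sign $(-1)^n$. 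The subleading $\beta^{n-3}$ correction should arise from combining $k=1$ contributions with the next-to-leading internal-loop term in the $u_*$-expansion of $S^{(n)}_{j,0}$, and the coefficient $4(n-2)/(n-1)$ should fall out of an elementary count once the correct set of link states is identified.

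The main obstacle I anticipate is the precise subleading combinatorics at order $\beta^{n-3}$: classifying the link states that contribute across all $d$, $j$, $k$ requires careful bookkeeping of spider-stabilised connectivities, loop counts, and the signed sums from the $u_*$-expansion. I would most likely proceed by induction on $n$, using the small-$n$ cases tabulated in Appendix \ref{Sec:Apphbeta} as the base, together with a restriction-induction identity relating traces on $V_{n,d}$ and $V_{n-1,d\pm 1}$ via the cellular structure of $\mathrm{TL}_n(\beta)$. A secondary obstacle is of course the dependence on Conjecture \ref{conj:DegenSpecHams}; without it, the identification of the minimal polynomial with the characteristic polynomial must be replaced by a direct argument that the minimal polynomial still has degree $c_n$ and that the trace computation above extracts the correct coefficient.
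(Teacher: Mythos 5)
This statement is presented in the paper as a \emph{conjecture}: the paper offers no proof, only the explicit minimal polynomials computed for $n=2,\ldots,5$ in Appendix \ref{Sec:Apphbeta} as supporting evidence. So there is no proof of record to compare your route against; I can only assess the proposal on its own terms. Your reduction step is sound and is, in fact, the same device the paper uses to obtain (\ref{mnh0}) for $h_0$: granting Conjecture \ref{conj:DegenSpecHams}, Proposition \ref{prop:hnonder} gives $l^{(n)}_{-\frac{2}{\beta}}=c_n=\dim V_n$, so the minimal polynomial of $\rho_n(h_{-\frac{2}{\beta}})$ is monic of the same degree as, and hence equals, the characteristic polynomial; faithfulness of $\rho_n$ for $\beta$ an indeterminate then identifies the coefficient of $h^{c_n-1}$ with $-\tr\rho_n(h_{-\frac{2}{\beta}})$. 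This checks out against the data for $n\leq5$. Note, however, that your argument is thereby conditional on an unproven conjecture, so at best it would establish a conditional result.

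The genuine gap is that the entire nontrivial content of the statement — the value of $\tr\rho_n(h_{-\frac{2}{\beta}})$ to the two leading orders in $\beta$ — is never computed. For $h_0$ the paper's trace argument closes in one line because the diagonal entries of $\rho_n(h_0)$ are $-i\beta$ with a clean binomial count; for $h_{-\frac{2}{\beta}}$, the element (\ref{hustar}) is a sum of products of the $e_i$ with Laurent coefficients in $\beta$, and your analysis of it consists of a sequence of ``should'' statements: which link states contribute at order $\beta^{n-1}$, the claimed bijection with the $h_0$ count producing $\lfloor n/2\rfloor c_{n-1}/2^{n-2}$, and the origin of the factor $4(n-2)/(n-1)$ at order $\beta^{n-3}$ are all asserted rather than proved. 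There is also a concrete sign that the bookkeeping is not yet under control: the trace is \emph{not} a polynomial in $\beta$ of degree $n-1$. Already for $n=4$ one reads off from Appendix \ref{Sec:Apphbeta} that $\tr\rho_4(h_{4,-\frac{2}{\beta}})=\tfrac{3}{2}\beta^3+4\beta+\tfrac{8}{\beta}$, a Laurent polynomial, consistent with the overall prefactor $(2\beta)^{-(n-2)}$ noted in the paper. Until the combinatorics of the spider traces is actually carried out (your suggested induction on $n$ with a restriction identity is plausible but unexecuted), the proposal is a reasonable plan rather than a proof.
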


\subsection{Decomposition conjectures}
\label{Sec:TL5}

\begin{conj}\label{conj:Ttx}
Let $n\in\Zbb_{\geq3}$ and $\beta$ an indeterminate.
Then, $\Tt_n(x)$ admits a unique decomposition of the form
\be
 \Tt_n(x)=\big[\beta U_n(\tfrac{x}{2})+2U_{n-1}(\tfrac{x}{2})\big]\idn
 +(\beta^2+2\beta x+4)\Big(x^{n-3}\big[(\beta-x)h_0+h_0^2\big]
 +\frac{1}{f_{n,0}(\beta)}\sum_{i=1}^{c_n-1}
  \sum_{k=0}^{n-4}\at^{n,0}_{i,k}(\beta)x^kh_0^i\Big),
\label{Ttx}
\ee
where $f_{n,0}(\beta)$ is as in Conjecture \ref{conj:Ta} and $\at^{n,0}_{i,k}(\beta)$ are polynomials 
such that no root of $f_{n,0}(\beta)$ is a root of $\at^{n,0}_{i,k}(\beta)$ for all $i,k$.
\end{conj}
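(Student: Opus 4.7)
My plan is to deduce Conjecture~\ref{conj:Ttx} from Conjecture~\ref{conj:Ta} by translating the $u$-based decomposition of $T_n(u,\beta)$ into an $x$-based decomposition of $\Tt_n(x)$, with $x = u + u^{-1}$, and then analysing the leading $x$-structure carefully. First I would apply the identities $u^{-n}[n{+}1]_{u^2} = U_n(x/2)$ and $u^{1-n}[n]_{u^2} = U_{n-1}(x/2)$, both immediate from writing $U_k$ via geometric series evaluated at $u = e^{i\theta}$, together with the algebraic identity $u^{-1}(\beta+2u)(2+\beta u) = \beta^2 + 2\beta x + 4$. Dividing the decomposition of Conjecture~\ref{conj:Ta} by $u^n$ then produces
\be
\Tt_n(x) = \big[\beta U_n(\tfrac{x}{2}) + 2U_{n-1}(\tfrac{x}{2})\big]\idn + \frac{\beta^2 + 2\beta x + 4}{f_{n,0}(\beta)}\sum_{i=1}^{c_n-1}Q_i(x,\beta)\,h_0^i,
\ee
where I set $Q_i(x,\beta) := u^{2-n}a_i^{n,0}(u,\beta)$.

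Next I would argue that each $Q_i$ is a genuine polynomial in $x$ of degree at most $n-1$. Since $\Tt_n(x) \in \mathrm{TL}_n(\beta)[x]$ by (\ref{Tt}) and since $\{\idn, h_0, h_0^2, \ldots, h_0^{c_n-1}\}$ is linearly independent for $\beta$ generic (by the non-derogatory property of $h_0$, Proposition~\ref{prop:hnonder}, together with Conjecture~\ref{conj:DegenSpecHams}), each $Q_i/f_{n,0}$ must individually be a polynomial in $x$; comparing $\deg_x \Tt_n = n$ with the degree of $\beta^2 + 2\beta x + 4$ yields the bound. This step reproduces the structural shape of Conjecture~\ref{conj:Ttx} but does not yet isolate the special $h_0, h_0^2$ leading-$x$ terms from the denominator.

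The core new content is the claim that $Q_1/f_{n,0} = -x^{n-2} + \beta x^{n-3} + O(x^{n-4})$, $Q_2/f_{n,0} = x^{n-3} + O(x^{n-4})$, and $Q_i/f_{n,0} = O(x^{n-4})$ for $i \geq 3$: the top three $x$-powers of the non-identity part of $\Tt_n(x)$ involve only $h_0$ and $h_0^2$ with $\beta$-polynomial coefficients that are free of the $f_{n,0}(\beta)$ denominator. To establish this, I would compute the coefficients of $u^{2n-k}$ in $T_n(u,\beta)$ for $k = 0, 1, 2, 3$ from the spider decomposition (Proposition~\ref{prop:spiders}, together with Lemma~\ref{lem:squid}), translate them into the leading $x$-coefficients of $\Tt_n(x)$ via the inversion procedure behind (\ref{Tt2}), and match these against the predicted form after subtracting the identity-part Chebyshev contributions. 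The hard part will be this final matching: an induction on $n$ using Lemma~\ref{lem:squid} to peel off the outermost $R$-operator layer should express the top-$u$ terms recursively, but one must then argue that these contributions decompose cleanly in $h_0$ and $h_0^2$ alone---a cleanness that should reflect low-complexity (short-word) diagrams in $\mathrm{TL}_n(\beta)$ which bypass the denominator $f_{n,0}(\beta)$ that generically governs the higher-index $h_0^i$ contributions.
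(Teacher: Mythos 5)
The statement you are proving is labelled a \emph{conjecture}, and the paper offers no proof of it: its only support is (i) the third-order expansion of $T_n(\epsilon,\beta)$ displayed immediately after the conjecture, which pins down the explicit contribution $x^{n-3}\big[(\beta-x)h_0+h_0^2\big]$, and (ii) direct verification for $n=3,4,5,6$ with the polynomials $\at^{n,0}_{i,k}(\beta)$ listed explicitly. Your proposal is therefore best judged as a conditional reduction rather than a proof, and on those terms it is sensible. The translation to the $x$-variable via $u^{-n}[n+1]_{u^2}=U_n(\tfrac{x}{2})$, $u^{1-n}[n]_{u^2}=U_{n-1}(\tfrac{x}{2})$ and $u^{-1}(\beta+2u)(2+\beta u)=\beta^2+2\beta x+4$ is correct, and the step making each $Q_i$ a polynomial in $x$ is legitimate provided you invoke the crossing relation (\ref{Tcross}) \emph{coefficient by coefficient}, which requires linear independence of $\idn,h_0,\ldots,h_0^{c_n-1}$ — itself equivalent to $l_0^{(n)}=c_n$, i.e.\ Conjecture \ref{conj:DegenSpecHams}, known only for $n\le 17$. (With crossing used this way the degree bound on $Q_i$ is $n-2$, not $n-1$.) Your identification of the ``core new content'' — that the $x^{n-2}$ and $x^{n-3}$ coefficients of the bracket involve only $h_0$ and $h_0^2$ with denominator-free $\beta$-coefficients — and your plan to extract it from the coefficients of $u^{k}$ (equivalently $u^{2n-k}$), $k=0,1,2,3$, via Proposition \ref{prop:spiders}, is precisely the computation the paper itself performs: the displayed expansion $T_n(\epsilon,\beta)=\big[\beta+2\epsilon-(n-1)\epsilon^2\beta-2(n-2)\epsilon^3\big]\idn-2\epsilon\beta h_0+\epsilon^2\big[2\beta h_0^2+(\beta^2-4)h_0\big]+\epsilon^3(4+\beta^2)(h_0^2+\beta h_0)+\Oc(\epsilon^4)$ determines $b_{n-1},b_{n-2},b_{n-3}$ of $\Tt_n(x)=\sum_j b_jx^j$ through $b_n=[\epsilon^0]$, $b_{n-1}=[\epsilon^1]$, $b_{n-2}=[\epsilon^2]-nb_n$, $b_{n-3}=[\epsilon^3]-(n-1)b_{n-1}$, and matching against $(\beta^2+2\beta x+4)$ times the bracket yields exactly the claimed leading terms. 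So the part of your argument you flag as ``the hard part'' is in fact tractable and coincides with the paper's own supporting calculation.

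The genuine gap is that everything else remains conjectural and your proposal does not touch it: the existence of a decomposition with a single monic denominator $f_{n,0}(\beta)$, the coprimality condition that no root of $f_{n,0}$ kills all the $\at^{n,0}_{i,k}$, the truncation of the residual double sum at $k=n-4$, and uniqueness, are all imported wholesale from Conjecture \ref{conj:Ta}, which the paper also only verifies for $n\le 6$. Deriving one unproven conjecture from another is a legitimate observation about their logical relationship, but you should state explicitly that this is what you are doing; as written, the proposal reads as though only the leading-order matching stands between you and a theorem, when in fact the statement remains open for general $n$ even granting your entire plan.
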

\noindent
The form of the contribution $x^{n-3}\big[(\beta-x)h_0+h_0^2\big]$
follows from continuing the expansion (\ref{Tneps}) to third order in $\epsilon$:
\begin{align}
 T_n(\epsilon,\beta)
 &=\big[\beta+2\epsilon-(n-1)\epsilon^2\beta-2(n-2)\epsilon^3\big]\idn
\nonumber\\[.1cm]
 &\quad-2\epsilon\beta h_0
 +\epsilon^2\big[2\beta h_0^2+(\beta^2-4)h_0\big]
 +\epsilon^3(4+\beta^2)(h_0^2+\beta h_0)
 +\Oc(\epsilon^4).
\end{align}
We have verified Conjecture \ref{conj:Ttx} for $n=3,4,5,6$, finding
\be
    \at_{1,0}^{4,0}(\beta) =\tfrac{1}{2}\beta^4-2\beta^2+2,\quad
    \at_{2,0}^{4,0}(\beta) = \tfrac{7}{4}\beta^3-\tfrac{7}{2}\beta,\quad
    \at_{3,0}^{4,0}(\beta)= \tfrac{9}{4}\beta^2-\tfrac{3}{2},\quad
    \at_{4,0}^{4,0}(\beta) = \tfrac{5}{4}\beta,\quad
    \at_{5,0}^{4,0}(\beta) = \tfrac{1}{4},
\ee
and
\begin{align}
  \at_{1,0}^{5,0}(\beta)&=16\beta^{13}-20\beta^{11}-1266\beta^9+\tfrac{20349}{4}\beta^7
    -\tfrac{16291}{4}\beta^5-\tfrac{13207}{4}\beta^3+1749\beta,
\\[.2cm]
  \at_{1,1}^{5,0}(\beta)&=-72\beta^{12}+\tfrac{1085}{2}\beta^{10}-\tfrac{1541}{4}\beta^8-\tfrac{29325}{8}\beta^6
     +\tfrac{49671}{8}\beta^4-\tfrac{3815}{8}\beta^2-\tfrac{1331}{2},
\\[.2cm]
  \at_{2,0}^{5,0}(\beta)&=104\beta^{12}-\tfrac{57}{2}\beta^{10}-7996\beta^8+\tfrac{46665}{2}\beta^6
   -\tfrac{21031}{2}\beta^4-\tfrac{16799}{2}\beta^2+1320,
\\[.2cm]
  \at_{2,1}^{5,0}(\beta)&=-460\beta^{11}+\tfrac{11439}{4}\beta^9+\tfrac{1901}{4}\beta^7-\tfrac{73429}{4}\beta^5
    +\tfrac{72761}{4}\beta^3-\tfrac{3277}{2}\beta,
\\[.2cm]
  \at_{3,0}^{5,0}(\beta)&=292\beta^{11}+\tfrac{599}{4}\beta^9-\tfrac{88045}{4}\beta^7+\tfrac{178239}{4}\beta^5
     -\tfrac{31661}{4}\beta^3-7096\beta,
\\[.2cm]
  \at_{3,1}^{5,0}(\beta)&=-1270\beta^{10}+\tfrac{52157}{8}\beta^8+\tfrac{53071}{8}\beta^6-\tfrac{291349}{8}\beta^4
    +\tfrac{152711}{8}\beta^2-\tfrac{627}{2},
\\[.2cm]
  \at_{4,0}^{5,0}(\beta)&=462\beta^{10}+518\beta^8-\tfrac{68991}{2}\beta^6+\tfrac{89539}{2}\beta^4
   +\tfrac{1515}{2}\beta^2-2134,
\\[.2cm]
  \at_{4,1}^{5,0}(\beta)&=-1977\beta^9+\tfrac{16893}{2}\beta^7+15452\beta^5-\tfrac{73257}{2}\beta^3+8347\beta,
\\[.2cm]
  \at_{5,0}^{5,0}(\beta)&=450\beta^9+\tfrac{2813}{4}\beta^7-33643\beta^5+24358\beta^3+3106\beta,
\\[.2cm]
  \at_{5,1}^{5,0}(\beta)&=-1896\beta^8+\tfrac{54903}{8}\beta^6+17612\beta^4-19783\beta^2+1254,
\\[.2cm]
  \at_{6,0}^{5,0}(\beta)&=276\beta^8+\tfrac{1021}{2}\beta^6-\tfrac{41757}{2}\beta^4+6328\beta^2+924,
\\[.2cm]
  \at_{6,1}^{5,0}(\beta)&=-1146\beta^7+\tfrac{14491}{4}\beta^5+\tfrac{45761}{4}\beta^3-5432\beta,
\\[.2cm]
  \at_{7,0}^{5,0}(\beta)&=104\beta^7+\tfrac{829}{4}\beta^5-\tfrac{32131}{4}\beta^3+315\beta,
\\[.2cm]
  \at_{7,1}^{5,0}(\beta)&=-426\beta^6+\tfrac{9827}{8}\beta^4+\tfrac{34447}{8}\beta^2-\tfrac{1177}{2},
\\[.2cm]
  \at_{8,0}^{5,0}(\beta)&=22\beta^6+44\beta^4-1747\beta^2-110,
\\[.2cm]
  \at_{8,1}^{5,0}(\beta)&=-89\beta^5+247\beta^3+\tfrac{1737}{2}\beta,
\\[.2cm]
  \at_{9,0}^{5,0}(\beta)&=2\beta^5+\tfrac{15}{4}\beta^3-164\beta,
\\[.2cm]
  \at_{9,1}^{5,0}(\beta)&=-8\beta^4+\tfrac{181}{8}\beta^2+\tfrac{143}{2}.
\end{align}
Although the polynomials $\at_{i,k}^{6,0}(\beta)$ are not provided here, we note that,
for $n=4,5,6$ and all $i,k$,
\be
  \deg(\at^{n,0}_{i,k})=d_{n,0}-i-\tfrac{1}{2}\big(1-(-1)^k\big),\qquad d_{4,0}=5,\ d_{5,0}=14,\ d_{6,0}=63,
\ee
and that $\at_{i,k}^{n,0}(\beta)$ is even (respectively odd) if its degree is even (respectively odd).
This is seen to correspond to the parity of $n+i+k+1$.
\begin{conj}\label{conj:Ttxb}
Let $n\in\Zbb_{\geq3}$ and $\beta$ an indeterminate.
Then, $\Tt_n(x)$ admits a unique decomposition of the form
\be
 \Tt_n(x)=\big[\beta U_n(\tfrac{x}{2})+2U_{n-1}(\tfrac{x}{2})\big]\idn
 +\frac{\beta^2+2\beta x+4}{f_{n,\text{-}\frac{2}{\beta}}(\beta)}
  \sum_{i=1}^{c_n-1}
  \sum_{k=0}^{n-2}\at^{n,\text{-}\frac{2}{\beta}}_{i,k}(\beta)x^kh_{\text{-}\frac{2}{\beta}}^i,
\label{Ttxb}
\ee
where $f_{n,\text{-}\frac{2}{\beta}}(\beta)$ is as in Conjecture \ref{conj:Ta} 
and $\at^{n,\text{-}\frac{2}{\beta}}_{i,k}(\beta)$ are polynomials such that no root of 
$f_{n,\text{-}\frac{2}{\beta}}(\beta)$ is a root of $\at^{n,\text{-}\frac{2}{\beta}}_{i,k}(\beta)$ for all $i,k$.
\end{conj}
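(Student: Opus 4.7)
The strategy is to derive (\ref{Ttxb}) from Conjecture~\ref{conj:Ta} (specialised to $u_*=-\tfrac{2}{\beta}$) together with Conjecture~\ref{conj:DegenSpecHams} (so that $l^{(n)}_{u_*}=c_n$), by rewriting the former purely in terms of the symmetric variable $x=u+\tfrac{1}{u}$ via $T_n(u)=u^n\Tt_n(x)$. Two elementary identities do the bookkeeping: the Chebyshev relation $[k]_{u^2}=u^{k-1}U_{k-1}(x/2)$, which yields
$$\beta[n+1]_{u^2}+2u[n]_{u^2}=u^n\bigl[\beta U_n(\tfrac{x}{2})+2U_{n-1}(\tfrac{x}{2})\bigr],$$
reproducing the $\idn$-piece of (\ref{Ttxb}); and the factorisation
$$(\beta+2u)(2+\beta u)=u(\beta^2+2\beta x+4),$$
which converts $u(\beta+2u)(2+\beta u)/u^n$ into $(\beta^2+2\beta x+4)/u^{n-2}$, producing the prefactor $(\beta^2+2\beta x+4)/f_{n,\text{-}\frac{2}{\beta}}(\beta)$.

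With these in hand, the claim reduces to showing that each $a^{n,\text{-}\frac{2}{\beta}}_i(u,\beta)/u^{n-2}$ is a polynomial in $x$ of degree at most $n-2$. I would establish this from the crossing symmetry $T_n(u)=u^{2n}T_n(1/u)$ in (\ref{Tcross}). Applying crossing to Conjecture~\ref{conj:Ta}, the $\idn$-piece is invariant, and on the remainder one computes
$$u^{2n}\cdot\tfrac{1}{u}(\beta+\tfrac{2}{u})(2+\tfrac{\beta}{u})=u^{2n-4}\cdot u(\beta+2u)(2+\beta u).$$
The linear independence of $\{h_{\text{-}\frac{2}{\beta}}^{i}\}_{i=1}^{c_n-1}$ over $\Cbb(\beta)$ -- which is the non-derogatory property of $h_{\text{-}\frac{2}{\beta}}$ guaranteed by Conjecture~\ref{conj:DegenSpecHams} -- then decouples the crossing identity coefficient by coefficient, yielding
$$a^{n,\text{-}\frac{2}{\beta}}_i(u,\beta)=u^{2n-4}\,a^{n,\text{-}\frac{2}{\beta}}_i(1/u,\beta),\qquad i=1,\ldots,c_n-1.$$
Hence each $a^{n,\text{-}\frac{2}{\beta}}_i$ is palindromic in $u$ of degree at most $2n-4$, so $a^{n,\text{-}\frac{2}{\beta}}_i/u^{n-2}$ is a symmetric Laurent polynomial supported on exponents in $[-(n-2),n-2]$, uniquely expressible as $\sum_{k=0}^{n-2}\at^{n,\text{-}\frac{2}{\beta}}_{i,k}(\beta)\,x^k$ for uniquely determined polynomials $\at^{n,\text{-}\frac{2}{\beta}}_{i,k}(\beta)$.

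Uniqueness of the full decomposition in (\ref{Ttxb}) follows from the linear independence of the family $\{x^k h_{\text{-}\frac{2}{\beta}}^{i}\}_{0\le k\le n-2,\,1\le i\le c_n-1}$ in $\mathrm{TL}_n(\beta)[x]$, which is inherited from the linear independence of the $h_{\text{-}\frac{2}{\beta}}^{i}$. The ``no common root'' property -- no root of $f_{n,\text{-}\frac{2}{\beta}}(\beta)$ is a root of every $\at^{n,\text{-}\frac{2}{\beta}}_{i,k}(\beta)$ -- transfers directly from the corresponding property in Conjecture~\ref{conj:Ta}: a simultaneous vanishing of all $\at^{n,\text{-}\frac{2}{\beta}}_{i,k}$ at some $\beta=\beta_0$ would force $a^{n,\text{-}\frac{2}{\beta}}_i(u,\beta_0)\equiv 0$ as a polynomial in $u$.

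The hard part is not the change of variable $u\mapsto x$, which is essentially formal, but the conjectural inputs themselves. Establishing Conjecture~\ref{conj:Ta} for general $n$ requires controlling the denominator $f_{n,\text{-}\frac{2}{\beta}}(\beta)$, whose roots appear to track both the $h_{\text{-}\frac{2}{\beta}}$-exceptional $\beta$-values and the specialisations at which $\rho_n$ fails to be faithful; and extending Conjecture~\ref{conj:DegenSpecHams} beyond the computer-verified range $n\le 6$ is almost certainly a necessary intermediate step, since without the non-derogatory property one cannot decouple the crossing relation into the per-$i$ equations that drive the argument above.
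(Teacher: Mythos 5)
The statement you are addressing is a conjecture: the paper offers no proof of it, only direct computational verification for $n=3,4,5$. Your proposal therefore takes a genuinely different, and complementary, route: a conditional reduction showing that Conjecture \ref{conj:Ttxb} follows from Conjecture \ref{conj:Ta} at $u_*=-\frac{2}{\beta}$ together with the identification $l^{(n)}_{\text{-}\frac{2}{\beta}}=c_n$. The bookkeeping is correct: $[k]_{u^2}=u^{k-1}U_{k-1}(\tfrac{x}{2})$ and $(\beta+2u)(2+\beta u)=u(\beta^2+2\beta x+4)$ convert the $\idn$-piece and the prefactor exactly as claimed; since $\deg_u T_n\leq 2n$ and the prefactor has $u$-degree $3$, the crossing relation (\ref{Tcross}) plus linear independence of $\idn,h_{\text{-}\frac{2}{\beta}},\ldots,h_{\text{-}\frac{2}{\beta}}^{l-1}$ forces each $a^{n,\text{-}\frac{2}{\beta}}_i$ to be palindromic of $u$-degree at most $2n-4$, so $a^{n,\text{-}\frac{2}{\beta}}_i/u^{n-2}$ lies in $\mathrm{span}\{1,x,\ldots,x^{n-2}\}$; and the two ``no common root'' conditions are equivalent because the basis change between palindromic $u$-polynomials and $x$-polynomials is invertible with $\beta$-independent coefficients. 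One small refinement: the decoupling of the crossing identity needs only the linear independence of $\idn,h,\ldots,h^{l-1}$, which is automatic from the definition of the minimal polynomial; Conjecture \ref{conj:DegenSpecHams} enters only to identify $l^{(n)}_{\text{-}\frac{2}{\beta}}$ with the upper summation limit $c_n-1$. What your argument buys is the insight that Conjectures \ref{conj:Ta} and \ref{conj:Ttxb} are essentially equivalent reformulations under $x=u+\tfrac{1}{u}$, so verifying one yields the other; what it does not buy is any extension of the verified range, since Conjecture \ref{conj:Ta} at $u_*=-\frac{2}{\beta}$ is itself only checked for $n=3,4,5$ --- precisely the cases the paper verifies for \ref{conj:Ttxb} directly. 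You flag this dependence on the conjectural inputs correctly in your closing paragraph.
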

\noindent
We have verified Conjecture \ref{conj:Ttxb} for $n=3,4,5$, where we note that, for all $i,k$,
\be
  \deg(\at^{n,\text{-}\frac{2}{\beta}}_{i,k})=d_{n,\text{-}\frac{2}{\beta}}-(n-1)i-k,\qquad 
  d_{3,\text{-}\frac{2}{\beta}}=9,\ d_{4,\text{-}\frac{2}{\beta}}=54,\ d_{5,\text{-}\frac{2}{\beta}}=235,
\ee
and that $\at_{i,k}^{n,\text{-}\frac{2}{\beta}}(\beta)$ is even (respectively odd) if its degree is even 
(respectively odd). This is seen to correspond to the parity of $n+i+k$.

\section{Tensor planar algebras}
\label{Sec:Tensor}

Here, we specialise to {\em tensor planar algebras} and thereby recover the familiar quantum inverse scattering 
method framework, in which case the $R$-operators are tensorially separable,
and outline how the planar-algebraic framework simplifies.
To illustrate, we consider a specialisation of the zero-field eight-vertex model \cite{FW70, Sutherland70}
that satisfies the free-fermion condition, and whose principle hamiltonian corresponds to the Ising model 
hamiltonian. The general model was solved by Baxter in \cite{Baxter71}, see also \cite{Baxter07}, 
while our presentation highlights the underlying polynomial integrable structure of a particular specialisation, 
by analysing the spectral properties of the transfer operator and the associated hamiltonian. 

We thus show that the transfer matrix of this specialised eight-vertex model is diagonablisable and present an 
exact solution. Although the model is Yang--Baxter integrable, its simplicity allows us to use standard techniques
to obtain explicit expressions for all eigenvalues and corresponding eigenvectors of the transfer matrix.
We then exploit similarities in the spectral properties of the transfer matrix and the canonical hamiltonian to 
establish that the transfer matrix is polynomial in the hamiltonian.
Moreover, we decompose the transfer operator into an explicit linear combination
of a complete set of orthogonal idempotents expressed in terms of the minimal polynomial of the hamiltonian.

\subsection{Definition and cellularity}
\label{app:TensorPADef}

For each $n\in\Nbb_0$ and $\ell\in\Nbb$, we let $V_n$ denote the $\ell^{2n}$-dimensional vector space 
spanned by disks with $2n$ labelled boundary points where each label is taken from the set $\{1,\ldots,\ell\}$.
As the disks do not come equipped with any further (interior or otherwise) structure, 
we have $V_n\cong V^{\otimes 2n}$, where $V$ is an $\ell$-dimensional vector space. 

The \textit{tensor planar algebra} is the graded vector space $(V_n)_{n\in\Nbb_0}$, 
together with the following action of planar tangles:
If a string in the planar tangle connects a pair of boundary points with different labels, 
then the output is the zero vector, and if not, then the labels of the output vector
are given by the labels at the opposite string endpoints. 
If both endpoints of a string are on the exterior boundary of the planar tangle, then the output vector is a sum
obtained by varying the common label of the two endpoints.
Following from compatibility with the gluing of planar tangles, and using the evaluation map $\mathrm{e}$, 
a loop is accordingly replaced by the scalar $\ell$. 
To illustrate, with $T$ as in \eqref{equ:GenPlanarActionExample} and $a,b\in\Fbb$, we have
\begin{align}
    \Pb_T\Big(\!\!\!\raisebox{-0.4cm}{\RotPlanarExampleNodeIIIAltCTensor}\!,
    a\!\!\raisebox{-0.4cm}{\RotPlanarExampleNodeIAltCTensorAlt}\!\!\!
    +b\!\!\raisebox{-0.4cm}{\RotPlanarExampleNodeIAltCTensor}\!,\!\!\!
    \raisebox{-0.425cm}{\RotPlanarExampleNodeIIAltCTensor}\!\!\!\Big)
    =  a\raisebox{-1.45cm}{\RotPlanarNKOperandExampleAltCTensorAltNew}\!\! 
    + b\raisebox{-1.45cm}{\RotPlanarNKOperandExampleAltCTensorNew}\!\! 
    = b\,\ell\sum_{k=1}^{\ell}\!\!\raisebox{-0.485cm}{\RotPlanarExampleNodeIVAltCTensorTwo}\!.
\end{align}

We now equip each $V_n$ with the multiplication induced by the corresponding multiplication tangle, $M_n$
in \eqref{equ:MultPlanar}, and identify the first $n$ labels clockwise from the marked boundary interval 
as characterising an incoming vector, with the remaining $n$ labels characterising an outgoing vector.
The vector space $V_n$ thus has the structure of an endomorphism algebra,
\be
 V_n = \mathrm{End}_{\Fbb}(V^{\otimes n}),
\ee
and is consequently cellular.

\subsection{Transfer operator}
\label{subsec:TPAtransferop}

Let bases for $V_1$ and $V_2$ be given by
\be
 B_1=\{e_j^k\,|\,j,k\in\{1,\ldots,\ell\}\},\qquad
 B_2=\{e_{j\,l}^{k\,m}\,|\,j,k,l,m\in\{1,\ldots,\ell\}\},
\ee
respectively, such that,
viewed as matrices relative to the natural basis orderings, $e_j^k$ are $\ell\times\ell$ matrix units with $1$ 
in position $(j,k)$ and zeros elsewhere. By construction,
\be
 V_n\cong[\mathrm{End}_{\Fbb}(V)]^{\otimes n},
\ee
so every element of $V_2$ is separable. In particular, the $V_2$ basis vectors decompose as
\be
 e_{j\,l}^{k\,m}=e_j^k\otimes e_l^m,\qquad j,k,l,m\in\{1,\ldots,\ell\}.
\ee

As parameterised elements of $V_1$ and $V_2$, respectively, the $K$- and $R$-operators (\ref{RKK}) 
are here written as
\begin{align}
 K(u)=\sum_{j,k=1}^{\ell}K_{j}^k(u)\raisebox{-0.45cm}{\VecMatAlKop},\qquad
 R(u)= \!\!\!\sum_{j,k,l,m=1}^\ell\!\!\! R_{j\,l}^{k\,m}(u) \!\!\raisebox{-0.45cm}{\VecMatAlRop},\qquad
 \Ko(u)=\sum_{j,k=1}^{\ell}\Ko_{j}^k(u)\raisebox{-0.45cm}{\VecMatAlKop},
\end{align}
where $K_{j}^k,R_{j\,l}^{k\,m},\overline{K}_{j}^k$ are scalar functions, and where we have attached short strings 
to the labelled boundary points, for illustrative purposes. The separability of $V_2$ allows us to write
\begin{align}
    R(u) 
    = \!\!\!\sum_{j,k,l,m=1}^\ell\!\!\! R_{j\,l}^{k\,m}(u) \!\!\raisebox{-0.5cm}{\VecMatAlRopDecomp} 
    = \!\!\!\sum_{j,k,l,m=1}^\ell\!\!\! R_{j\,l}^{k\,m}(u)\,e_j^k \otimes e_l^m.
\end{align}
Using the {\em same} scalar functions but with the upper indices interchanged, we get
\begin{align}
    \check{R}(u) 
    = \!\!\!\sum_{j,k,l,m=1}^\ell\!\!\! R_{j\,l}^{k\,m}(u) \!\!\raisebox{-0.5cm}{\Rtwist} 
    = \!\!\!\sum_{j,k,l,m=1}^\ell\!\!\! R_{j\,l}^{k\,m}(u) \!\!\raisebox{-0.5cm}{\VecMatAlRopDecompTwo} 
    = \!\!\!\sum_{j,k,l,m=1}^\ell\!\!\! R_{j\,l}^{k\,m}(u)\,e_j^m \otimes e_l^k,
\end{align}
in terms of which we construct the Sklyanin transfer operator $\check{T}(u)\in V_n$, as in (\ref{TS}).

Accordingly, $\check{T}(u)$ can be expressed familiarly as a vector-space trace over an auxiliary copy of 
$\mathrm{End}_{\Fbb}(V)$ in $V_{n+1}$. In the following, the auxiliary space is the $(n+1)^{\mathrm{th}}$ copy,
and the corresponding trace is denoted by $\mathrm{Tr}_{n+1}$.
For each $i=1,\ldots,n$, we first introduce
\begin{align}
    R_{i,n+1}(u) & : =\!\!\!\sum_{j,k,l,m=1}^\ell\!\!\! R_{j\,l}^{k\,m}(u)\,\mathds{1}_{i-1}\otimes e_j^k
  \otimes\mathds{1}_{n-i}\otimes e_l^m,\\
    R_{n+1, i}(u) &: =\!\!\!\sum_{j,k,l,m=1}^\ell\!\!\! R_{j\,l}^{k\,m}(u)\,\mathds{1}_{i-1}\otimes 
  e_l^m\otimes\mathds{1}_{n-i}\otimes e_j^k,
\end{align}
where the indices $i$ and $n+1$ denote the copies of $V$\! on which the operators act nontrivially,
as well as
\be
 K_1(u): =\sum_{j,k=1}^{\ell}K_{j}^k(u)\,e_j^k\otimes\idn,\qquad
 \Ko_{n+1}(u): =\sum_{j,k=1}^{\ell}\Ko_{j}^k(u)\,\idn\otimes e_j^k.
\ee
The transfer operator can then be expressed in terms of the `pre-trace' transfer operator
\begin{align}
    L_{n+1}(u) : =R_{n,n+1}(u)
        \cdots R_{1,n+1}(u)K_1(u)R_{n+1,1}(u)\cdots R_{n+1,n}(u)\Ko_{n+1}(u)
\end{align}
as
\be
 \check{T}_n(u) = \mathrm{Tr}_{n+1}(L_{n+1}(u)).
\label{Tcheck}
\ee

\subsection{Eight-vertex model}
\label{subsec:8}

In the remainder of this appendix, we let
\be
 \mathrm{dim}(V) = 2
\ee
and fix the parameterisation to an eight-vertex model, characterised by
\begin{align}\label{equ:ParameterisationMatEx}
    R_{j\,l}^{k\,m}(u) = 
    \begin{cases}
        1, &\ j=m, l=k,\\
        u, &\ j=l, k=m, j\neq k,\\
        u, &\ j=k, l=m, j\neq l,\\
        0, &\ \mathrm{otherwise},
    \end{cases} \qquad\quad
     K_{j}^{k}(u) = \Ko_{j}^{k}(u)  = 
    \begin{cases}
        1, &\ j=k,\\
        0, &\ j\neq k.
    \end{cases}
\end{align}
Working in the natural matrix representation where the $e_j^k$ are matrix units, and where $\mathds{1}$ denotes 
the $2\times2$ identity matrix, the $R$- and $K$-operators can be expressed in terms of Pauli matrices as
\begin{align}
    R(u) = \frac{1}{2}\big[
    (1+u)(\mathds{1}\otimes \mathds{1}+\sigma^x\otimes\sigma^x) 
    + (1-u)(\sigma^y\otimes\sigma^y+\sigma^z\otimes\sigma^z)\big], \qquad K(u) = \Ko(u) = \mathds{1}.
\end{align}
It follows that
\begin{align}
    R^2(u) &= (1+u^2)\,\mathds{1}\otimes \mathds{1}+2u\,\sigma^x\otimes\sigma^x,
\label{equ:RSq8v}
\\[.2cm]
    R(u)(\mathds{1}\otimes \sigma^x)R(u) &= (1+u^2)\,\sigma^x\otimes \mathds{1}+2u\,\mathds{1}\otimes\sigma^x,
\label{equ:RXSq8v}
\end{align}
and using the standard notation
\be
 \sigma_{m,i}^{\alpha}: = \mathds{1}_{i-1}\otimes \sigma^\alpha \otimes \mathds{1}_{m-i},\qquad 
  \alpha\in\{x,y,z\},\ i\in\{1,\ldots,m\},\ m\in\Nbb,
\ee
we then have the following result.
\begin{lem}
$L_{n+1}(u)$ is polynomial in $\sigma_{n+1,1}^x,\ldots,\sigma_{n+1,n+1}^x$.
\end{lem}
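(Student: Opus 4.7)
The plan is to reduce $L_{n+1}(u)$ to a palindromic product about the auxiliary site $n{+}1$, and then peel off pairs of $R$-operators using the identities \eqref{equ:RSq8v} and \eqref{equ:RXSq8v}.

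First, I would observe that the parameterisation \eqref{equ:ParameterisationMatEx} is symmetric under the swap $(j,k)\leftrightarrow(l,m)$, so $R(u)$ is invariant under exchanging its two tensor factors; this gives $R_{n+1,i}(u)=R_{i,n+1}(u)$ as operators on $V^{\otimes(n+1)}$ for every $i=1,\ldots,n$. Using also that $K_1(u)=\Ko_{n+1}(u)=\mathds{1}_{n+1}$, one obtains
\begin{align*}
L_{n+1}(u) = R_{n,n+1}(u)\cdots R_{2,n+1}(u)\,R_{1,n+1}(u)^2\,R_{2,n+1}(u)\cdots R_{n,n+1}(u).
\end{align*}

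Second, I would set
\begin{align*}
P_k(u) := R_{k,n+1}(u)\cdots R_{2,n+1}(u)\,R_{1,n+1}(u)^2\,R_{2,n+1}(u)\cdots R_{k,n+1}(u),\qquad k=1,\ldots,n,
\end{align*}
so that $L_{n+1}(u)=P_n(u)$ and $P_{k+1}(u)=R_{k+1,n+1}(u)\,P_k(u)\,R_{k+1,n+1}(u)$, and prove by induction on $k$ that $P_k(u)$ is a polynomial in $\sigma_{n+1,1}^x,\ldots,\sigma_{n+1,k}^x,\sigma_{n+1,n+1}^x$. The base case $k=1$ is immediate from \eqref{equ:RSq8v}, which gives $P_1(u)=(1+u^2)\mathds{1}_{n+1}+2u\,\sigma_{n+1,1}^x\sigma_{n+1,n+1}^x$.

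For the inductive step, the hypothesis expresses $P_k(u)$ as a linear combination of monomials $\prod_{i\in S}\sigma_{n+1,i}^x$ indexed by subsets $S\subseteq\{1,\ldots,k,n+1\}$. Since $R_{k+1,n+1}(u)$ acts nontrivially only on sites $k{+}1$ and $n{+}1$, it commutes with $\sigma_{n+1,i}^x$ for every $i\in\{1,\ldots,k\}$; pulling such factors out of the sandwich $R_{k+1,n+1}\!\,\cdot\,\!R_{k+1,n+1}$ reduces the problem to evaluating either $R_{k+1,n+1}(u)^2$ (when $n{+}1\notin S$) or $R_{k+1,n+1}(u)\,\sigma_{n+1,n+1}^x\,R_{k+1,n+1}(u)$ (when $n{+}1\in S$). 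Applying \eqref{equ:RSq8v} in the first case and \eqref{equ:RXSq8v} (on the two-site space indexed by $k{+}1$ and $n{+}1$) in the second case rewrites each sandwich as a polynomial in $\sigma_{n+1,k+1}^x$ and $\sigma_{n+1,n+1}^x$. This exhibits $P_{k+1}(u)$ as a polynomial in $\sigma_{n+1,1}^x,\ldots,\sigma_{n+1,k+1}^x,\sigma_{n+1,n+1}^x$, closing the induction. The main obstacle is purely bookkeeping; the conceptual content is entirely captured by the two sandwich identities together with the commutativity of $\sigma^x$-operators on disjoint sites.
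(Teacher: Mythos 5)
Your proof is correct and follows essentially the same route the paper intends: the paper leaves the lemma's proof implicit, but its preceding displays \eqref{equ:RSq8v}--\eqref{equ:RXSq8v} together with the identities $K_1(u)=\Ko_{n+1}(u)=\mathds{1}_{n+1}$ and $R_{n+1,i}(u)=R_{i,n+1}(u)$ (used explicitly in the proof of the subsequent recursion) are exactly the palindromic reduction and inside-out peeling you carry out. Your induction on $P_k(u)$ just makes the bookkeeping explicit.
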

\noindent
It follows that
\begin{align}\label{equ:SigmaXLn}
    L_{n+1}(u) = L_n^{(0)}(u)\otimes \mathds{1} + L_n^{(1)}(u)\otimes \sigma^x
\end{align}
for some $L_n^{(0)}(u),L_n^{(1)}(u)\in V_n$, and consequently that 
\begin{align}\label{equ:TransLop8v}
    \check{T}_{n}(u) = 2 L_{n}^{(0)}(u)
\end{align}
and
\be
 [\check{T}_{n}(u) ,\check{T}_{n}(v) ]=0,\qquad\forall\, u,v\in\Fbb.
\label{TcTc}
\ee

The next result allows us to determine the polynomial structures of $L_{n+1}(u)$
and $\check{T}_{n}(u)$.
\begin{prop}
With $L_1(u)\equiv\mathds{1}$ and $\check{T}_1(u)=2(1+u^2)\mathds{1}$, 
the matrices $L_{n+1}(u)$ and  $\check{T}_{n}(u)$ are determined recursively by
\begin{align}
 L_{n+1}(u)&=\big((1+u^2)\,\mathds{1}_{n+1}+2u\,\sigma^x\otimes\sigma^x\otimes\mathds{1}_{n-1}\big)
  \big(\mathds{1}\otimes L_n(u)\big),
\label{Lrec}
\\[.2cm]
 \check{T}_{n}(u)&=\big((1+u^2)\,\mathds{1}_{n}+2u\,\sigma^x\otimes\sigma^x\otimes\mathds{1}_{n-2}\big)
  \big(\mathds{1}\otimes \check{T}_{n-1}(u)\big).
\label{Trec}
\end{align}
\end{prop}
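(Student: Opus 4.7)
The plan is to prove \eqref{Lrec} by peeling off the site-$1$ $R$-operators from $L_{n+1}(u)$, and then deduce \eqref{Trec} by tracing over the auxiliary. The base cases are easily checked: $L_1(u)=\mathds{1}$ by the empty-product convention, and $\check{T}_1(u)=\mathrm{Tr}_2(L_2(u))=\mathrm{Tr}_2(R(u)^2)=2(1+u^2)\mathds{1}$ follows from \eqref{equ:RSq8v} and $\mathrm{tr}(\sigma^x)=0$.

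A preliminary observation is that the parameterisation \eqref{equ:ParameterisationMatEx} obeys the symmetry $R_{j\,l}^{k\,m}(u)=R_{l\,j}^{m\,k}(u)$, equivalently $\check{R}(u)=R(u)$, so $R_{n+1,i}(u)=R_{i,n+1}(u)$ for every $i$. Since $K(u)=\overline{K}(u)=\mathds{1}$ we may write
\be
 L_{n+1}(u)=R_{n,n+1}(u)\cdots R_{2,n+1}(u)\bigl[R_{1,n+1}(u)R_{n+1,1}(u)\bigr]R_{n+1,2}(u)\cdots R_{n+1,n}(u),
\ee
and the bracketed factor equals $R_{1,n+1}(u)^2=(1+u^2)\mathds{1}_{n+1}+2u\,\sigma_{n+1,1}^x\sigma_{n+1,n+1}^x$ by \eqref{equ:RSq8v}, where the second Pauli sits on the \emph{auxiliary} site $n+1$. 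Splitting this sum and using that $\sigma_{n+1,1}^x$ commutes with $R_{i,n+1}(u)$ and $R_{n+1,i}(u)$ for all $i\ne 1$, one obtains $L_{n+1}(u)=(1+u^2)A_0+2u\,\sigma_{n+1,1}^x A_1$, with $A_0=\mathds{1}\otimes L_n(u)$ and
\be
 A_1=R_{n,n+1}(u)\cdots R_{2,n+1}(u)\,\sigma_{n+1,n+1}^x\,R_{n+1,2}(u)\cdots R_{n+1,n}(u).
\ee
The remaining step, which encodes how the $\sigma^x$ on the auxiliary propagates to site $2$, is the identity $A_1=\sigma_{n+1,2}^x(\mathds{1}\otimes L_n(u))$. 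Because $\sigma_{n+1,2}^x$ commutes with $R_{i,n+1}(u)$ and $R_{n+1,i}(u)$ for every $i\ne 2$, this reduces to the local equality
\be
 \sigma_{n+1,2}^x\,R_{2,n+1}(u)R_{n+1,2}(u)=R_{2,n+1}(u)\,\sigma_{n+1,n+1}^x\,R_{n+1,2}(u),
\ee
both sides evaluating to $(1+u^2)\sigma_{n+1,2}^x+2u\,\sigma_{n+1,n+1}^x$: the left by \eqref{equ:RSq8v} together with $(\sigma^x)^2=\mathds{1}$, the right by \eqref{equ:RXSq8v}. Combining the two contributions gives \eqref{Lrec}.

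To deduce \eqref{Trec}, apply $\mathrm{Tr}_{n+1}$ to both sides of \eqref{Lrec}. The prefactor $(1+u^2)\mathds{1}_{n+1}+2u\,\sigma^x\otimes\sigma^x\otimes\mathds{1}_{n-1}$ acts as the identity on site $n+1$ and hence factors out of the partial trace, leaving $(1+u^2)\mathds{1}_n+2u\,\sigma^x\otimes\sigma^x\otimes\mathds{1}_{n-2}$ as the quantum-space coefficient; the residual trace $\mathrm{Tr}_{n+1}(\mathds{1}\otimes L_n(u))=\mathds{1}\otimes\check{T}_{n-1}(u)$ follows from \eqref{Tcheck}. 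The conceptual crux is recognising that the crossing symmetry $\check{R}(u)=R(u)$ of the chosen parameterisation is what makes everything collapse: without it, $R_{1,n+1}(u)R_{n+1,1}(u)$ would not reduce to a square, and the outward propagation of $\sigma_{n+1,n+1}^x$ used here would not simplify to a single-factor pull-through.
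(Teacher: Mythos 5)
Your proof is correct and follows essentially the same route as the paper's: both peel off the innermost pair of $R$-operators using $R_{n+1,i}=R_{i,n+1}$, invoke \eqref{equ:RSq8v} to write the middle square as $(1+u^2)\mathds{1}+2u\,\sigma^x_1\sigma^x_{n+1}$, and then use \eqref{equ:RXSq8v} (the paper phrases it as $R(u)(\mathds{1}\otimes\sigma^x)R(u)=(\sigma^x\otimes\mathds{1})R^2(u)$, you as the equivalent local conjugation identity) to transport the auxiliary $\sigma^x$ to site $2$, before tracing out the auxiliary to get \eqref{Trec}. The only differences are cosmetic: you verify the propagation step by conjugation rather than by sliding the resulting $\sigma^x$ leftward, and you spell out the base cases explicitly.
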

\begin{proof}
Using $K(u) = \Ko(u) = \mathds{1}$ and that $R_{n+1,i}(u) =R_{i,n+1}(u)$ for all $i$, 
the relation (\ref{Lrec}) follows from
\begin{align}
 L_{n+1}(u)&=R_{n,n+1}(u)\cdots R_{1,n+1}(u)R_{1,n+1}(u)\cdots R_{n,n+1}(u)
\nonumber\\[.2cm]
 &=R_{n,n+1}(u)\cdots R_{2,n+1}(u)\big[(1+u^2)\mathds{1}_{n+1}+2u\,\sigma^x
   \otimes\mathds{1}_{n-1}\otimes\sigma^x\big]R_{2,n+1}(u)\cdots R_{n,n+1}(u)
\nonumber\\[.2cm]
 &=(1+u^2)\mathds{1}\otimes L_n(u)+2u\,\sigma^x\otimes\big(R_{n-1,n}(u)\cdots 
   R_{1,n}(u)[\mathds{1}_{n-1}\otimes\sigma^x]R_{1,n}(u)\cdots R_{n-1,n}(u)\big)
\nonumber\\[.2cm]
 &=\big((1+u^2)\,\mathds{1}_{n+1}\big)\big(\mathds{1}\otimes L_n(u)\big)
\nonumber\\[.2cm]
  &\quad +2u\,\sigma^x\otimes\big(R_{n-1,n}(u)\cdots R_{2,n}(u)[\sigma^x\otimes\mathds{1}_{n-1}]
   R_{1,n}(u)R_{1,n}(u)\cdots R_{n-1,n}(u)\big)
\nonumber\\[.2cm]
 &=\big((1+u^2)\,\mathds{1}_{n+1}\big)\big(\mathds{1}\otimes L_n(u)\big)
  +2u\big(\sigma^x\otimes\sigma^x\otimes\mathds{1}_{n-1}\big)\big(\mathds{1}\otimes L_n(u)\big),
\end{align}
where the fourth equality is a consequence of
\be
 R(u)(\mathds{1}\otimes \sigma^x)R(u)=(\sigma^x\otimes\mathds{1})R^2(u).
\ee
The relation (\ref{Trec}) is an immediate consequence of (\ref{Lrec}).
\end{proof}
In preparation for giving explicit expressions for $\check{T}_n(u)$, let
\be
 I_{n,k}: =\{(i_1,\ldots,i_{2k})\in\Nbb^{2k}\,|\,1\leq i_1<\cdots<i_{2k}\leq n\},\qquad
 k=1,\ldots,\lfloor\tfrac{n}{2}\rfloor,
\ee
and define
\be
 \kappa_{n,k}\colon I_{n,k}\to\Zbb,\qquad (i_1,\ldots,i_{2k})\mapsto\sum_{l=1}^{2k}(-1)^li_l.
\ee
Note that $\kappa_{n,k}(\iota)\in\{1,\ldots,n-1\}$ for all $\iota\in I_{n,k}$.
\begin{prop}\label{prop:Tc}
The transfer matrix admits the multiplicative expression
\be
 \check{T}_n(u)=2(1+u^2)\prod_{i=1}^{n-1}\big[(1+u^2)\,\idn+2u\,\sigma_{n,i}^x\sigma_{n,i+1}^x\big]
\label{Tmult}
\ee
and the additive expression
\be
 \check{T}_n(u)=2(1+u^2)^n\idn
 +2(1+u^2)\sum_{k=1}^{\lfloor\frac{n}{2}\rfloor}\,\sum_{\iota=(i_1,\ldots,i_{2k})\in I_{n,k}}
 \!\!(1+u^2)^{n-\kappa_{n,k}(\iota)}(2u)^{\kappa_{n,k}(\iota)}\sigma_{n,i_1}^x\!\cdots\sigma_{n,i_{2k}}^x.
\label{Tadd}
\ee
\end{prop}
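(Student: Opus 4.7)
The plan is to establish the multiplicative expression \eqref{Tmult} by induction on $n$ using the recursion \eqref{Trec}, and then derive the additive expression \eqref{Tadd} by directly expanding the product and matching each term to a subset $T\subseteq\{1,\dots,n\}$ of even cardinality.

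For the multiplicative expression, the base case $n=1$ is immediate since the empty product equals $\idn$, giving $\check{T}_1(u)=2(1+u^2)\mathds{1}$. For the inductive step, I apply \eqref{Trec} and use the inductive hypothesis on $\check{T}_{n-1}(u)$. Tensoring with $\mathds{1}$ on the left shifts every site index by one, so
\begin{align*}
\mathds{1}\otimes\check{T}_{n-1}(u)
 = 2(1+u^2)\prod_{j=2}^{n-1}\bigl[(1+u^2)\idn+2u\,\sigma_{n,j}^x\sigma_{n,j+1}^x\bigr].
\end{align*}
Multiplying by the prefactor $(1+u^2)\idn+2u\,\sigma_{n,1}^x\sigma_{n,2}^x$ from \eqref{Trec} produces the missing $j=1$ factor. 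The key point is that all factors $(1+u^2)\idn+2u\,\sigma_{n,j}^x\sigma_{n,j+1}^x$ mutually commute (they are polynomials in the commuting $\sigma_{n,j}^x\sigma_{n,j+1}^x$'s), so they may be written as a single ordered product.

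For the additive expression, I expand \eqref{Tmult} by selecting, from each of the $n-1$ factors indexed by $j\in\{1,\dots,n-1\}$, either $(1+u^2)\idn$ or $2u\,\sigma_{n,j}^x\sigma_{n,j+1}^x$. This yields a sum over subsets $S\subseteq\{1,\dots,n-1\}$ with contribution
\begin{align*}
 2(1+u^2)^{n-|S|}(2u)^{|S|}\,\prod_{j\in S}\sigma_{n,j}^x\sigma_{n,j+1}^x.
\end{align*}
Viewing $S$ as a set of edges of the path graph on $\{1,\dots,n\}$, and using $(\sigma^x)^2=\idn$, the $\sigma$-product collapses to $\sigma_{n,i_1}^x\cdots\sigma_{n,i_{2k}}^x$, where $T=\{i_1<\dots<i_{2k}\}$ is the set of vertices of odd degree in the subgraph $S$. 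The main combinatorial step is to observe that the map $S\mapsto T$ is a bijection between subsets of $\{1,\dots,n-1\}$ and even-cardinality subsets of $\{1,\dots,n\}$: given $T$, one recovers $S$ by $\chi_S(i)=|\{l:i_l\le i\}|\bmod 2$, since for the path graph the parity-of-degree equations determine $\chi_S$ uniquely. From this description, $S$ is the disjoint union of the intervals $[i_1,i_2-1],[i_3,i_4-1],\dots,[i_{2k-1},i_{2k}-1]$, so
\begin{align*}
 |S|=(i_2-i_1)+(i_4-i_3)+\dots+(i_{2k}-i_{2k-1})=\sum_{l=1}^{2k}(-1)^l i_l=\kappa_{n,k}(\iota).
\end{align*}
Substituting this and separating off the $S=\emptyset$ (equivalently $k=0$, $T=\emptyset$) term yields \eqref{Tadd}.

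The only non-routine step is the combinatorial identification $|S|=\kappa_{n,k}(\iota)$ together with the bijectivity of $S\leftrightarrow T$, which I expect to be the main obstacle to presenting cleanly; everything else is induction and direct expansion. The containment $\kappa_{n,k}(\iota)\in\{1,\dots,n-1\}$ for $k\ge 1$, already noted in the excerpt, ensures that no contribution with $k\ge 1$ degenerates to a multiple of $\idn$, so the decomposition is consistent with the explicit identity term $2(1+u^2)^n\idn$.
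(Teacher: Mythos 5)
Your proposal is correct and takes essentially the same route as the paper, whose proof consists of the two remarks that \eqref{Tmult} satisfies the recursion \eqref{Trec} (with the $n=1$ initial condition) and that \eqref{Tadd} ``follows by expanding''; your bijection between edge subsets $S\subseteq\{1,\ldots,n-1\}$ of the path and even vertex subsets $T$, together with the identification $|S|=\kappa_{n,k}(\iota)$, is precisely the combinatorics the paper leaves implicit, and it is carried out correctly. One caveat: your (correct) bookkeeping gives each $k\geq1$ term the coefficient $2(1+u^2)^{\,n-\kappa_{n,k}(\iota)}(2u)^{\kappa_{n,k}(\iota)}$, whereas \eqref{Tadd} as displayed carries an additional factor of $(1+u^2)$ (the prefactor $2(1+u^2)$ multiplying $(1+u^2)^{\,n-\kappa_{n,k}(\iota)}$); already for $n=2$ the product \eqref{Tmult} yields $4u(1+u^2)\,\sigma_{2,1}^x\sigma_{2,2}^x$ rather than $4u(1+u^2)^2\,\sigma_{2,1}^x\sigma_{2,2}^x$, so the displayed \eqref{Tadd} appears to contain a typo (the exponent inside the sum should be $n-1-\kappa_{n,k}(\iota)$), and you should flag this mismatch rather than assert that your expansion literally ``yields \eqref{Tadd}''.
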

\begin{proof}
The multiplicative expression is readily seen to satisfy the recursion relation (\ref{Trec}), including the initial 
condition for $n=1$. The additive expression follows by expanding the multiplicative expression.
\end{proof}

We let $|\pm\rangle$ denote eigenvectors of $\sigma^x$,
\be
    \sigma^x |\pm\rangle = \pm |\pm\rangle, \qquad 
    |\pm\rangle: =\begin{bmatrix}
        1\\
        \pm 1
    \end{bmatrix},
\ee
and let
\be
 K_n\colon  \{\pm\}^{n} \to \{0,1,\ldots,n-1\},\qquad
 (s_1,\ldots,s_n)\mapsto\sum_{i=1}^{n-1}|s_i-s_{i+1}|,
\ee
denote the function that counts the number of sign changes present in $(s_1,\ldots,s_n)\in\{\pm\}^{n}$. 
The pre-images
\be
 \Vc_{n,k}: =\mathrm{span}_{\Fbb}\{|K_n^{-1}(k)\rangle\},\qquad k=0,1,\ldots,n-1,
\label{Vcalk}
\ee
have dimension
\be
 \dim\Vc_{n,k}=2\,\binom{n-1}{k},
\ee
consistent with
\be
 \dim(V^{\otimes n})=\big|\{\pm\}^{n}\big|=2^n=\sum_{k=0}^{n-1}\dim\Vc_{n,k}.
\ee

\begin{prop}\label{prop:Tsl}
The transfer matrix $\check{T}_n(u)$ is diagonalisable, with eigenvectors
\be
 |\mathbf{s}\rangle=|s_1\rangle\otimes\cdots\otimes|s_n\rangle,\qquad 
   \mathbf{s}=(s_1,\ldots,s_n)\in\{\pm\}^{n},
\ee
and corresponding eigenvalues
\be
 \lambda_{\mathbf{s}}(u)=2(1+u^2)(1-u)^{2K_n(\mathbf{s})}(1+u)^{2(n-1-K_n(\mathbf{s}))}.
\ee
\end{prop}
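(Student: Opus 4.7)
The plan is to use the multiplicative expression for $\check{T}_n(u)$ from Proposition \ref{prop:Tc},
\[
\check{T}_n(u) = 2(1+u^2)\prod_{i=1}^{n-1}\bigl[(1+u^2)\,\idn + 2u\,\sigma_{n,i}^x\sigma_{n,i+1}^x\bigr],
\]
and exploit the fact that the factors in the product are mutually commuting polynomials in the $x$-type Pauli operators, so they are simultaneously diagonalised by the obvious product basis.

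First, I would note that for each $\mathbf{s} = (s_1,\ldots,s_n) \in \{\pm\}^n$, the tensor product $|\mathbf{s}\rangle = |s_1\rangle\otimes\cdots\otimes|s_n\rangle$ satisfies $\sigma_{n,i}^x|\mathbf{s}\rangle = s_i|\mathbf{s}\rangle$, hence
\[
\sigma_{n,i}^x\sigma_{n,i+1}^x\,|\mathbf{s}\rangle = s_is_{i+1}|\mathbf{s}\rangle, \qquad s_is_{i+1}\in\{+1,-1\}.
\]
Applying $\check{T}_n(u)$ in its multiplicative form then gives
\[
\check{T}_n(u)|\mathbf{s}\rangle = 2(1+u^2)\prod_{i=1}^{n-1}\bigl[(1+u^2) + 2u\,s_is_{i+1}\bigr]|\mathbf{s}\rangle.
\]

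Next, I would simplify each factor using the two identities $(1+u^2)+2u = (1+u)^2$ and $(1+u^2)-2u = (1-u)^2$. The factor at position $i$ equals $(1+u)^2$ when $s_i = s_{i+1}$ and $(1-u)^2$ when $s_i \neq s_{i+1}$. By the definition of the sign-change counter $K_n$, the number of indices $i \in \{1,\ldots,n-1\}$ at which $s_i \neq s_{i+1}$ is precisely $K_n(\mathbf{s})$, with the remaining $n-1-K_n(\mathbf{s})$ indices contributing a factor $(1+u)^2$ each. Assembling these factors yields
\[
\check{T}_n(u)|\mathbf{s}\rangle = 2(1+u^2)(1-u)^{2K_n(\mathbf{s})}(1+u)^{2(n-1-K_n(\mathbf{s}))}|\mathbf{s}\rangle,
\]
which is the claimed eigenvalue $\lambda_{\mathbf{s}}(u)$.

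Finally, diagonalisability follows because $\{|\mathbf{s}\rangle : \mathbf{s} \in \{\pm\}^n\}$ is a basis of $V^{\otimes n}$ (as the $|\pm\rangle$ form a basis of $V$), and we have exhibited it as a complete set of eigenvectors of $\check{T}_n(u)$ for every $u$. There is no real obstacle here: once the multiplicative factorisation of Proposition \ref{prop:Tc} is in hand, the proof reduces to the observation that the commuting factors $(1+u^2)\idn + 2u\,\sigma_{n,i}^x\sigma_{n,i+1}^x$ are all diagonal in the common $\sigma^x$-eigenbasis. The only subtlety worth spelling out is the correspondence between the $\pm 1$-eigenvalues of $\sigma_{n,i}^x\sigma_{n,i+1}^x$ and the sign-change count $K_n(\mathbf{s})$, which is a direct unpacking of the definition.
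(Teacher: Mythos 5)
Your proof is correct and follows exactly the route the paper takes: its proof of this proposition is the one-line statement that the result follows from the multiplicative expression (\ref{Tmult}), and your argument simply spells out the details of that deduction. The computation of the factors via $(1+u^2)\pm 2u=(1\pm u)^2$ and the matching with the sign-change count $K_n(\mathbf{s})$ are all accurate.
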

\begin{proof}
The result follows from the multiplicative expression (\ref{Tmult}).
\end{proof}
\noindent
The following result readily follows from Proposition \ref{prop:Tsl}.
\begin{cor}\label{cor:TVk}
For each $k=0,1,\ldots,n-1$, $\Vc_{n,k}$ is the $\check{T}_n(u)$-eigenspace corresponding to the 
eigenvalue $2(1+u^2)(1-u)^{2k}(1+u)^{2(n-1-k)}$, 
and we have the eigenspace decomposition
\be
 V^{\otimes n}=\bigoplus_{k=0}^{n-1}\Vc_{n,k}.
\ee
\end{cor}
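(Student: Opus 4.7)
The plan is to extract Corollary \ref{cor:TVk} directly from Proposition \ref{prop:Tsl} together with elementary bookkeeping of the eigenvalues. First I would observe that $\{|\mathbf{s}\rangle\,|\,\mathbf{s}\in\{\pm\}^n\}$, being a tensor product of bases of the two-dimensional space $V$, is a basis for $V^{\otimes n}$. Sorting this basis by the value of $K_n(\mathbf{s})$ yields the set-theoretic partition $\{\pm\}^n=\bigsqcup_{k=0}^{n-1}K_n^{-1}(k)$, which, upon taking spans, immediately gives the vector-space decomposition $V^{\otimes n}=\bigoplus_{k=0}^{n-1}\Vc_{n,k}$ with $\Vc_{n,k}$ defined as in (\ref{Vcalk}).

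Next, Proposition \ref{prop:Tsl} tells us that each $|\mathbf{s}\rangle$ is a $\check{T}_n(u)$-eigenvector with eigenvalue $\lambda_{\mathbf{s}}(u)=2(1+u^2)(1-u)^{2K_n(\mathbf{s})}(1+u)^{2(n-1-K_n(\mathbf{s}))}$, an expression depending on $\mathbf{s}$ only through $K_n(\mathbf{s})$. Accordingly, every vector $|\mathbf{s}\rangle$ with $K_n(\mathbf{s})=k$ lies in the $\check{T}_n(u)$-eigenspace associated with $2(1+u^2)(1-u)^{2k}(1+u)^{2(n-1-k)}$, so $\Vc_{n,k}$ is contained in that eigenspace. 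Combined with the direct-sum decomposition of the preceding paragraph, it only remains to argue that no two distinct values $k\neq k'$ produce the same eigenvalue (generically in $u$), so that the containment $\Vc_{n,k}\subseteq\ker(\check{T}_n(u)-\lambda_{k}(u)\idn)$ is an equality.

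The distinctness step is the only point requiring a small computation: $(1-u)^{2k}(1+u)^{2(n-1-k)}=(1-u)^{2k'}(1+u)^{2(n-1-k')}$ rearranges to $(1-u)^{2(k-k')}=(1+u)^{2(k-k')}$, which forces $k=k'$ unless $(1-u)^2=(1+u)^2$, i.e.\ $u=0$. Hence for every $u\in\Fbb\setminus\{0\}$ the $n$ scalars $\{\lambda_{k}(u)\}_{k=0}^{n-1}$ are pairwise distinct, the eigenspaces are exactly the $\Vc_{n,k}$, and $\check{T}_n(u)$ is diagonalisable with this spectral decomposition. At the excluded point $u=0$ the identity $\check{T}_n(0)=2\idn$ (from Proposition \ref{prop:Tc}) trivially confirms the picture, since each $\Vc_{n,k}$ is still contained in the (now coincident) eigenspaces.

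I do not anticipate a genuine obstacle: the content of the corollary is already essentially present in Proposition \ref{prop:Tsl}, and the only nontrivial ingredient is the elementary observation above that different values of $K_n(\mathbf{s})$ yield distinct $u$-dependent eigenvalues. If one wanted to present the result with $u$ treated formally, one would simply regard $\lambda_k(u)\in\Fbb[u]$ as polynomials and note that they are pairwise distinct in $\Fbb[u]$, bypassing any case analysis in $u$.
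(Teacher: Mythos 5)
Your route is the same as the paper's: the paper gives no separate argument, stating only that the corollary ``readily follows'' from Proposition \ref{prop:Tsl}, and your first two paragraphs supply precisely the intended bookkeeping --- the vectors $|\mathbf{s}\rangle$ form a basis of $V^{\otimes n}$, the partition of $\{\pm\}^n$ by $K_n$ yields the direct sum, and the eigenvalue depends on $\mathbf{s}$ only through $K_n(\mathbf{s})$, so each $\Vc_{n,k}$ is contained in the eigenspace for $\lambda_k(u)$.

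Your distinctness step, however, contains an error. From $(1-u)^{2(k-k')}=(1+u)^{2(k-k')}$ you cannot conclude $(1-u)^2=(1+u)^2$ when $|k-k'|>1$: over $\Cbb$, $a^m=b^m$ only forces $a/b$ to be an $m$-th root of unity. Hence the exceptional set is larger than $\{0\}$. All eigenvalues collapse to $0$ at $u=\pm\I$ (where $1+u^2=0$); at $u=1$ one has $\lambda_k(1)=0$ for every $k\geq1$, and similarly at $u=-1$; and for $n\geq4$ there are further collisions of \emph{nonzero} eigenvalues, e.g.\ $\lambda_k(u)=\lambda_{k'}(u)$ whenever $(1-u)^2/(1+u)^2$ is a primitive $m$-th root of unity with $m\mid(k-k')$ --- for instance $u=-\I/\sqrt{3}$ gives $\lambda_k=\lambda_{k'}$ for $k-k'=3$ with $1+u^2=\tfrac{2}{3}\neq0$. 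At such points the eigenspace for the common eigenvalue is a direct sum of several $\Vc_{n,k}$, not a single one. This does not undermine the corollary as the paper uses it, and your closing remark is the correct repair: the $\lambda_k(u)$ are pairwise distinct as polynomials in $\Fbb[u]$, so the identification of eigenspaces holds for $u$ an indeterminate, equivalently for all but finitely many $u\in\Fbb$. The blanket claim that the $n$ scalars $\lambda_k(u)$ are pairwise distinct for every $u\neq0$ should be deleted in favour of that formal statement.
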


\subsection{Polynomial integrability}
\label{Sec:8pol}

Since $\check{T}_{n}(u)$ is diagonalisable and satisfies (\ref{TcTc}), 
the eight-vertex model is polynomially integrable. 
As we show in the following, the transfer matrix is polynomial in the principal hamiltonian given in (\ref{hn8}).

It follows from Proposition \ref{prop:Tc} that $u_*=0$ is the only identity point, with
\be
 \check{T}_n(\epsilon)
  =2\idn+4\epsilon\sum_{i=1}^{n-1}\sigma^x_{n,i}\sigma^x_{n,i+1}+\Oc(\epsilon^2).
\ee
The corresponding (renormalised) principal hamiltonian is given by
\be
 h_n=\sum_{i=1}^{n-1}\sigma^x_{n,i}\sigma^x_{n,i+1}.
\label{hn8}
\ee
\begin{prop}\label{prop:hnk}
The principal hamiltonian $h_n$ is diagonalisable, with eigenvectors
\be
 |\mathbf{s}\rangle=|s_1\rangle\otimes\cdots\otimes|s_n\rangle,\qquad 
   \mathbf{s}=(s_1,\ldots,s_n)\in\{\pm\}^{n},
\ee
and corresponding eigenvalues
\be
 \mu_{\mathbf{s}} = n-1 -2K_n(\mathbf{s}).
\ee
Moreover, for each $k=0,1,\ldots,n-1$, $\Vc_{n,k}$ is the $h_n$-eigenspace corresponding to the eigenvalue
$n-1-2k$.
\end{prop}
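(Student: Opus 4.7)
The plan is a direct calculation that exploits the tensor product structure. The starting point is that, by definition of $|\pm\rangle$, each factor $|s_i\rangle$ is an eigenvector of $\sigma^x$ with eigenvalue $s_i\in\{+1,-1\}$. Consequently, for each $i\in\{1,\ldots,n-1\}$, the operator $\sigma^x_{n,i}\sigma^x_{n,i+1}$ acts on the pure tensor $|\mathbf{s}\rangle=|s_1\rangle\otimes\cdots\otimes|s_n\rangle$ as multiplication by $s_i s_{i+1}$, since $\sigma^x$ acts nontrivially only in slots $i$ and $i+1$ while the remaining slots are left invariant. Summing over $i$ shows immediately that $|\mathbf{s}\rangle$ is an eigenvector of $h_n$ with eigenvalue $\sum_{i=1}^{n-1}s_i s_{i+1}$.

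The next step is to rewrite this eigenvalue in terms of $K_n(\mathbf{s})$. The product $s_i s_{i+1}$ equals $+1$ precisely when $s_i=s_{i+1}$ and $-1$ precisely when $s_i\neq s_{i+1}$; since $K_n(\mathbf{s})$ counts the number of indices $i$ with $s_i\neq s_{i+1}$, the sum splits as
\begin{equation*}
\sum_{i=1}^{n-1}s_i s_{i+1}=\bigl(n-1-K_n(\mathbf{s})\bigr)-K_n(\mathbf{s})=n-1-2K_n(\mathbf{s})=\mu_{\mathbf{s}},
\end{equation*}
confirming the claimed eigenvalue.

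Diagonalisability is then immediate: the $2^n$ vectors $\{|\mathbf{s}\rangle\,|\,\mathbf{s}\in\{\pm\}^n\}$ form a basis of $V^{\otimes n}$ consisting entirely of eigenvectors of $h_n$. For the eigenspace identification, recall from \eqref{Vcalk} that $\Vc_{n,k}=\mathrm{span}_\Fbb\{|\mathbf{s}\rangle\,|\,K_n(\mathbf{s})=k\}$, so every spanning vector of $\Vc_{n,k}$ is an $h_n$-eigenvector with eigenvalue $n-1-2k$, and hence $\Vc_{n,k}$ is contained in the corresponding eigenspace. Since the values $n-1-2k$ are pairwise distinct as $k$ ranges over $\{0,1,\ldots,n-1\}$, and since the subspaces $\Vc_{n,k}$ together exhaust $V^{\otimes n}$ (by Corollary \ref{cor:TVk}), each $\Vc_{n,k}$ must coincide with the full eigenspace at $n-1-2k$.

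There is no real obstacle here: the argument is essentially a one-line calculation made routine by the fact that $h_n$ is a sum of commuting $\sigma^x\sigma^x$ terms, which are simultaneously diagonalised by the product basis $\{|\mathbf{s}\rangle\}$. The only point requiring slight care is to reconcile the definition of $K_n(\mathbf{s})$ on $\{\pm\}^n$ with the computation $s_i s_{i+1}\in\{\pm 1\}$, but this is straightforward once one identifies $|s_i-s_{i+1}|$ with the indicator of disagreement between $s_i$ and $s_{i+1}$.
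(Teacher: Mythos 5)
Your proof is correct and follows the same route as the paper, which simply states that the result follows from the definition \eqref{hn8} of $h_n$ as a sum of $\sigma^x_{n,i}\sigma^x_{n,i+1}$ terms; you have just spelled out the routine computation (eigenvalue $\sum_i s_is_{i+1}=n-1-2K_n(\mathbf{s})$ and the eigenspace identification via distinctness of the eigenvalues) that the paper leaves implicit.
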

\begin{proof}
The result follows from (\ref{hn8}).
\end{proof}
\noindent
\textbf{Remark.}
The eigenvalues of $\check{T}_n(u)$ and $h_n$ are related as
\be
 \lambda_{\mathbf{s}}(\epsilon)=2\big[1+2\epsilon\mu_{\mathbf{s}}+\Oc(\epsilon^2)\big],\qquad
  \forall\,\mathbf{s}\in\{\pm\}^{n}.
\ee

The form of the minimal polynomial of $h_n$ follows from Proposition \ref{prop:hnk}. 
To fix our notation, we define the polynomials
\be
 m_j(x): =\prod_{k=0}^{j-1}[x-(j-1-2k)],\qquad j\in\Nbb.
\ee
\begin{cor}\label{cor:hnmn}
The minimal polynomial of $h_n$ is $m_n$.
\end{cor}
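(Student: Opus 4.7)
The plan is to derive the corollary as an immediate consequence of Proposition \ref{prop:hnk} together with the standard characterisation of the minimal polynomial of a diagonalisable operator. Recall that for any diagonalisable endomorphism, the minimal polynomial is precisely the monic polynomial whose roots are the distinct eigenvalues, each appearing with multiplicity one.

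First I would verify diagonalisability: Proposition \ref{prop:hnk} already provides an eigenbasis for $h_n$, namely $\{|\mathbf{s}\rangle\,|\,\mathbf{s}\in\{\pm\}^n\}$, so this step is immediate. Second, I would catalogue the distinct eigenvalues. Proposition \ref{prop:hnk} asserts that the eigenvalues are $\mu_{\mathbf{s}}=n-1-2K_n(\mathbf{s})$, and since $K_n$ takes the values $0,1,\ldots,n-1$, the set of eigenvalues is exactly
\[
\{n-1-2k\,|\,k=0,1,\ldots,n-1\}=\{-(n-1),-(n-3),\ldots,n-3,n-1\},
\]
which consists of $n$ pairwise distinct real numbers (they form an arithmetic progression with common difference $2$).

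Third, I would assemble the minimal polynomial. Since $h_n$ is diagonalisable with exactly these $n$ distinct eigenvalues, its minimal polynomial is the monic polynomial
\[
\prod_{k=0}^{n-1}\bigl[x-(n-1-2k)\bigr],
\]
which coincides with $m_n(x)$ by the definition given just before the statement.

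There is essentially no obstacle here: the statement is a direct reading of Proposition \ref{prop:hnk} once one invokes the eigenvalue criterion for the minimal polynomial of a diagonalisable operator. The only point worth being slightly careful about is confirming that the $n$ values $n-1-2k$ are distinct, which is clear since the map $k\mapsto n-1-2k$ is injective on $\{0,1,\ldots,n-1\}$.
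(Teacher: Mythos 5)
Your proof is correct and follows exactly the route the paper intends: the paper offers no separate argument beyond remarking that the corollary follows from Proposition \ref{prop:hnk}, and your write-up simply makes explicit the standard fact that a diagonalisable operator's minimal polynomial is the product of $(x-\lambda)$ over its distinct eigenvalues, noting that each value $n-1-2k$ is attained since $\dim\Vc_{n,k}=2\binom{n-1}{k}>0$.
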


Although $h_n$ is not non-derogatory, it follows from Corollary \ref{cor:TVk} and 
Proposition \ref{prop:hnk} that
\be
 \check{T}_n(u)\in\Fbb(u)[h_n].
\label{TCh}
\ee
The next result provides details of this polynomial.
In preparation, let $\lambda_k(u)$ and $\mu_k$ denote the 
eigenvalues corresponding to the (joint) eigenspaces $\Vc_{n,k}$,
$k=0,1,\ldots,n-1$, of $\check{T}_n(u)$ and $h_n$, respectively.
For ease of reference, we recall their expressions,
\be
 \lambda_k(u)=2(1+u^2)(1-u)^{2k}(1+u)^{2(n-1-k)},\qquad
 \mu_k=n-1-2k,
\ee
and introduce the $n\times n$ matrix
\be
 \mathsf{V}_n: =\begin{bmatrix}
            1 & \mu_0 & \ldots & \mu_0^{n-1}\\[.15cm]
            1 & \mu_1 & \ldots & \mu_1^{n-1}\\
            \vdots & \vdots & \ddots 
            & \vdots\\
            1 & \mu_{n-1} & \ldots & \mu_{n-1}^{n-1}\\
        \end{bmatrix},\qquad n\in\Nbb.
\label{van}
\ee
Since this is a Vandermonde matrix with $\mu_i\neq\mu_j$ for all $i\neq j$, it is invertible,
and the inverse can be evaluated explicitly.
For every $n\in\Nbb$, we let $m_0(h_n)\equiv h_n^0\equiv\idn$.
\begin{prop}
For every $n\in \Nbb$ and all $u\in\Fbb$, we have 
\be
   \check{T}_n(u) = \sum_{i=0}^{n-1}\tau_i(u)h_n^i,
\label{equ:PolyTuDiag8v}
\ee
where
\be
 \tau_i(u)=\sum_{j=0}^{n-1}[\mathsf{V}_n^{-1}]_{i+1,j+1}\lambda_j(u),\qquad
  i=0,1,\ldots,n-1.
\label{equ:PolyTuDiagCoeffs8v}
\ee
\end{prop}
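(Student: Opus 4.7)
The plan is to reduce the statement to a Vandermonde linear system determined by the shared eigenspace decomposition of $\check{T}_n(u)$ and $h_n$. First, I would invoke (\ref{TCh}), which already tells us that $\check{T}_n(u)\in\Fbb(u)[h_n]$, and Corollary \ref{cor:hnmn}, which says the minimal polynomial of $h_n$ has degree $n$. Together, these imply that $\check{T}_n(u)$ admits an expression of the form $\sum_{i=0}^{n-1}\tau_i(u)\,h_n^i$ for some (a priori rational) functions $\tau_i(u)$, with the degree bound $n-1$ coming from reduction modulo the minimal polynomial. At this stage, the only question is to identify the coefficients $\tau_i(u)$ and confirm they are in fact polynomial (indeed rational) in $u$ as claimed.

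Next, I would use Corollary \ref{cor:TVk} and Proposition \ref{prop:hnk}, which identify $\Vc_{n,k}$ as the common eigenspace of $\check{T}_n(u)$ and $h_n$ with eigenvalues $\lambda_k(u)$ and $\mu_k=n-1-2k$ respectively. Applying the ansatz $\check{T}_n(u)=\sum_{i=0}^{n-1}\tau_i(u)\,h_n^i$ to any nonzero vector in $\Vc_{n,k}$ and using $V^{\otimes n}=\bigoplus_{k=0}^{n-1}\Vc_{n,k}$ yields the scalar relations
\begin{equation}
 \lambda_k(u)=\sum_{i=0}^{n-1}\tau_i(u)\,\mu_k^{\,i},\qquad k=0,1,\ldots,n-1.
\label{vanpf}
\end{equation}
This is exactly the linear system $\mathsf{V}_n\,\boldsymbol{\tau}(u)=\boldsymbol{\lambda}(u)$ with $\mathsf{V}_n$ the Vandermonde matrix in (\ref{van}).

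Finally, since the values $\mu_0,\mu_1,\ldots,\mu_{n-1}$ are pairwise distinct (consecutive ones differ by $2$), the matrix $\mathsf{V}_n$ is invertible. Inverting and reading off components gives precisely (\ref{equ:PolyTuDiagCoeffs8v}). Conversely, defining $\tau_i(u)$ by that formula and running the Vandermonde identity in the forward direction reproduces (\ref{vanpf}), which implies (\ref{equ:PolyTuDiag8v}) holds on each $\Vc_{n,k}$ and hence on all of $V^{\otimes n}$. There is no genuine obstacle here: the entire argument rests on the already-established coincidence of eigenspaces and the invertibility of a Vandermonde matrix, the only minor care being to note that the resulting $\tau_i(u)$ are automatically well-defined for all $u\in\Fbb$ because $\mathsf{V}_n^{-1}$ is $u$-independent and each $\lambda_j(u)$ is polynomial in $u$.
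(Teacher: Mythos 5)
Your proposal is correct and follows essentially the same route as the paper: existence of the degree-$(n-1)$ polynomial expression from (\ref{TCh}) and Corollary \ref{cor:hnmn}, evaluation on the common eigenspaces $\Vc_{n,k}$ to obtain the Vandermonde system $\mathsf{V}_n\,\boldsymbol{\tau}(u)=\boldsymbol{\lambda}(u)$, and inversion using the distinctness of the $\mu_k$. The only cosmetic difference is that the paper phrases the eigenspace step as diagonalising in the common eigenbasis and then compressing the (repeated) diagonal entries into a vector, which is exactly your relation (\ref{vanpf}).
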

\begin{proof}
That an expression of the form (\ref{equ:PolyTuDiag8v}) exists follows from (\ref{TCh}) and 
Corollary \ref{cor:hnmn}. Using the common eigenbasis $\{|\mathbf{s}\rangle |\,\mathbf{s}\in\{\pm\}^{n}\}$ 
of $\check{T}_n(u)$ and $h_n$ to diagonalise the expression, yields
\be
    \mathrm{diag}\big(\lambda_0(u),\lambda_1(u),\ldots,\lambda_{n-1}(u)\big)
     =\sum_{i=0}^{n-1}
       \mathrm{diag}\big(\tau_i(u)\mu_0^i,\tau_i(u)\mu_1^i,\ldots,\tau_i(u)\mu_{n-1}^i\big),
\ee
where we have omitted repeated eigenvalues. 
Compressing the diagonal matrices into vectors, we obtain
    \begin{align}
         \begin{bmatrix}
            \lambda_0(u)\\[.15cm]
            \lambda_1(u)\\
            \vdots\\
            \lambda_{n-1}(u)
        \end{bmatrix}
     =
        \begin{bmatrix}
            1 & \mu_0 & \ldots & \mu_0^{n-1}\\[.15cm]
            1 & \mu_1 & \ldots & \mu_1^{n-1}\\
            \vdots & \vdots & \ddots 
            & \vdots\\
            1 & \mu_{n-1} & \ldots & \mu_{n-1}^{n-1}\\
        \end{bmatrix}
               \begin{bmatrix}
            \tau_0(u)\\[.15cm]
            \tau_1(u)\\
            \vdots\\
            \tau_{n-1}(u)
        \end{bmatrix},
\label{mum}
\end{align}
and since $\mathsf{V}_n$ is invertible, \eqref{equ:PolyTuDiagCoeffs8v} follows. 
\end{proof}

For each $i=0,1,\ldots,n-1$, we let
\be
    m^{(i)}_n(x) : = \prod_{\substack{k=0\\[0.05cm] k\neq i}}^{n-1}(x-\mu_k),
\label{minx}
\ee
and note that
\be
 m_n(x)=(x-\mu_i)m^{(i)}_n(x),\qquad m^{(i)}_n(\mu_j)=\delta_{ij}(-1)^i(2n-2)!!\binom{n-1}{i}^{\!-1}.
\label{mimuj}
\ee
Since $m^{(i)}_n(\mu_i)\neq0$ for all $i$, we can renormalise the polynomials (\ref{minx}) as
\be
 \hat{m}^{(i)}_n(x) : =\frac{m^{(i)}_n(x)}{m^{(i)}_n(\mu_i)},\qquad i=0,1,\ldots,n-1.
\ee
In terms of these polynomials, we now define
\be
 \mathsf{p}_i: =\hat{m}^{(i)}_n(h_n),\qquad i=0,1,\ldots,n-1.
\label{pi}
\ee
Using standard arguments, we have the following result.
\begin{prop}
$\{ \mathsf{p}_i\,|\,i=0,1,\ldots,n-1\}$ is a complete set of orthogonal idempotents:
\be
 \sum_{i=0}^{n-1}\mathsf{p}_i=\idn,\qquad \mathsf{p}_j\mathsf{p}_k=\delta_{jk}\mathsf{p}_k,\qquad
  \forall\,j,k\in\{0,1,\ldots,n-1\}.
\ee
\end{prop}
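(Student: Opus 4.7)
The plan is to exploit the observation, essentially built into the definition of the $\hat{m}^{(i)}_n$, that these polynomials serve as the Lagrange interpolating polynomials for the nodes $\mu_0,\ldots,\mu_{n-1}$, and then transfer this scalar identity to an operator identity via the fact that $h_n$ is diagonalisable with minimal polynomial $m_n$. The main ingredient is just the annihilation relation $m_n(h_n)=0$ from Corollary \ref{cor:hnmn}, together with (\ref{mimuj}).

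First I would record the key scalar identity $\hat{m}^{(i)}_n(\mu_j)=\delta_{ij}$, which follows immediately from (\ref{mimuj}) and the normalisation defining $\hat{m}^{(i)}_n$. From this, I would show that $\sum_{i=0}^{n-1}\hat{m}^{(i)}_n(x)\equiv 1$ as polynomials: the difference $\sum_i\hat{m}^{(i)}_n(x)-1$ has degree at most $n-1$ and vanishes at the $n$ distinct nodes $\mu_0,\ldots,\mu_{n-1}$, hence is identically zero. Evaluating at $h_n$ yields $\sum_i\mathsf{p}_i=\idn$.

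Next I would establish orthogonality for $j\neq k$. The product $\hat{m}^{(j)}_n(x)\hat{m}^{(k)}_n(x)$ is, up to a nonzero scalar, $\prod_{l\neq j}(x-\mu_l)\prod_{l\neq k}(x-\mu_l)$. Since $j\neq k$, every factor $(x-\mu_l)$ with $l\in\{0,\ldots,n-1\}$ appears at least once in this product, so $m_n(x)$ divides it. Substituting $x=h_n$ and using $m_n(h_n)=0$ gives $\mathsf{p}_j\mathsf{p}_k=0$. The remaining identity $\mathsf{p}_j^2=\mathsf{p}_j$ then follows by multiplying the completeness relation $\sum_i\mathsf{p}_i=\idn$ on the right by $\mathsf{p}_j$ and invoking the just-established cross-orthogonality to eliminate all terms with $i\neq j$.

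The only step requiring any care is ensuring that $m_n$ really is annihilating rather than merely vanishing on the spectrum, but this is precisely Corollary \ref{cor:hnmn}, so there is no genuine obstacle; the argument is the standard functional-calculus construction of spectral projectors, valid here because $h_n$ is diagonalisable with simple minimal polynomial.
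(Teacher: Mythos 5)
Your proposal is correct and follows essentially the same route as the paper: the Lagrange-interpolation identity $\hat{m}^{(i)}_n(\mu_j)=\delta_{ij}$ gives completeness via a degree count, and divisibility of $\hat{m}^{(j)}_n\hat{m}^{(k)}_n$ by $m_n$ together with $m_n(h_n)=0$ gives orthogonality. The only (harmless) deviation is in the idempotency step, where you deduce $\mathsf{p}_j^2=\mathsf{p}_j$ from completeness plus cross-orthogonality, whereas the paper verifies it directly by writing $m^{(k)}_n(x)-m^{(k)}_n(\mu_k)=(x-\mu_k)q_k(x)$ and again invoking $m_n(h_n)=0$; both arguments are valid.
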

\begin{proof}
By construction, each $\hat{m}^{(i)}_n(x)$ is a polynomial of degree $n-1$, 
so $\sum_{i=0}^{n-1}\hat{m}^{(i)}_n(x)$ is a polynomial of degree at most $n-1$.
Since $\hat{m}^{(i)}_n(\mu_j)=\delta_{ij}$ for all $i,j$, we have $\sum_{i=0}^{n-1}\hat{m}^{(i)}_n(\mu_j)=1$
for every $j\in\{0,1,\ldots,n-1\}$, and since $|\{\mu_0,\mu_1,\ldots,\mu_{n-1}\}|=n$, it follows that 
$\sum_{i=0}^{n-1}\hat{m}^{(i)}_n(x)=1$, hence $\sum_{i=0}^{n-1}\mathsf{p}_i=\idn$.
For $j\neq k$, we have
\be
 \mathsf{p}_j\mathsf{p}_k=\frac{m_n(h_n)}{m^{(j)}_n(\mu_j)\,m^{(k)}_n(\mu_k)}
  \prod_{\substack{i=0\\[0.05cm] i\neq j,k}}^{n-1}(h_n-\mu_i\idn)=0.
\ee
Finally, for each $k=0,1,\ldots,n-1$, we have
\be
 \big(m^{(k)}_n(x)-m^{(k)}_n(\mu_k)\big)\big|_{x=\mu_k}=0,
\ee
so
\be
 m^{(k)}_n(x)-m^{(k)}_n(\mu_k)=(x-\mu_k)q_k(x)
\ee
for some polynomial $q_k(x)$. It follows that
\be
 \mathsf{p}_k\mathsf{p}_k-\mathsf{p}_k
 =\frac{(h_n-\mu_k\idn)q_k(h_n)}{m^{(k)}_n(\mu_k)}\,\hat{m}^{(k)}_n(h_n)
 =\frac{q_k(h_n)}{\big(m^{(k)}_n(\mu_k)\big)^2}\,m_n(h_n)
 =0.
\ee
\end{proof}
\begin{lem}\label{lem:kerpi}
For each $i=0,1,\ldots,n-1$, we have
\be
 \mathrm{ker}(\mathsf{p}_i)=\bigoplus_{\substack{k=0\\[.05cm] k\neq i}}^{n-1}\Vc_{n,k},\qquad
 \mathrm{im}(\mathsf{p}_i)=\Vc_{n,i}.
\ee
\end{lem}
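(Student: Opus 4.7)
The plan is to exploit the joint eigenspace decomposition established in Corollary \ref{cor:TVk} and Proposition \ref{prop:hnk}, together with the explicit evaluations of $m_n^{(i)}$ at the $\mu_j$ recorded in (\ref{mimuj}). More precisely, since $V^{\otimes n}=\bigoplus_{k=0}^{n-1}\Vc_{n,k}$ with $h_n$ acting on $\Vc_{n,k}$ as multiplication by the scalar $\mu_k$, the same decomposition diagonalises every polynomial in $h_n$. This reduces the lemma to a scalar calculation on each eigenspace.

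First, for any polynomial $q\in\Fbb[x]$ and any $v\in\Vc_{n,k}$, we have $q(h_n)v = q(\mu_k)v$. Applying this to $q=\hat{m}_n^{(i)}$ and using (\ref{mimuj}), which gives $m_n^{(i)}(\mu_j)=\delta_{ij}\,m_n^{(i)}(\mu_i)$, we obtain $\hat{m}_n^{(i)}(\mu_j)=\delta_{ij}$. Consequently, $\mathsf{p}_i$ acts as the identity on $\Vc_{n,i}$ and as the zero operator on every $\Vc_{n,k}$ with $k\neq i$.

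From this pointwise description the two claims are immediate. For the kernel, any $v=\sum_k v_k$ with $v_k\in\Vc_{n,k}$ satisfies $\mathsf{p}_i v = v_i$, which vanishes if and only if $v_i=0$, i.e.\ $v\in\bigoplus_{k\neq i}\Vc_{n,k}$. For the image, the inclusion $\mathrm{im}(\mathsf{p}_i)\subseteq\Vc_{n,i}$ follows because $\mathsf{p}_i v=v_i\in\Vc_{n,i}$ for every $v$, while the reverse inclusion holds since $\mathsf{p}_i$ restricts to the identity on $\Vc_{n,i}$.

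There is no real obstacle here: the entire argument is a one-line consequence of joint diagonalisability of $h_n$ and the normalisation built into the definition of $\hat{m}_n^{(i)}$. The only point requiring care is to invoke (\ref{mimuj}) to confirm $\hat{m}_n^{(i)}(\mu_j)=\delta_{ij}$, after which the kernel and image descriptions read off directly from the eigenspace decomposition.
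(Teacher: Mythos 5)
Your proof is correct and follows essentially the same route as the paper's: both rest on the observation that $\mathsf{p}_i$ acts on each joint eigenspace $\Vc_{n,k}$ as the scalar $\hat{m}^{(i)}_n(\mu_k)=\delta_{ik}$, from which the kernel and image are read off via the decomposition $V^{\otimes n}=\bigoplus_k\Vc_{n,k}$. The only cosmetic difference is that the paper closes by invoking idempotency of $\mathsf{p}_i$, whereas you complete the argument directly by decomposing an arbitrary vector, which is equally valid.
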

\begin{proof}
Let $i,k\in\{0,1,\ldots,n-1\}$ and $v\in\Vc_{n,k}$. Then,
\be
 \mathsf{p}_iv=\hat{m}^{(i)}_n(\mu_k)v=\delta_{ik}v, 
\ee
and since $\mathsf{p}_i$ is an idempotent, the result follows.
\end{proof}
\begin{prop}
For every $n\in \Nbb$ and all $u\in\Fbb$, we have
\be
 \check{T}_n(u)=\sum_{i=0}^{n-1}\lambda_{i}(u)\mathsf{p}_i
 =\sum_{i=0}^{n-1}\binom{n-1}{i}\frac{(-1)^i\lambda_i(u)}{(2n-2)!!}\,m^{(i)}_n(h_n).
\ee
\end{prop}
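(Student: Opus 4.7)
The plan is to exploit the common eigenspace decomposition of $\check T_n(u)$ and $h_n$, together with the properties of the orthogonal idempotents $\mathsf{p}_i$ already established in the preceding lemma.

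First I would prove the first equality $\check T_n(u)=\sum_{i=0}^{n-1}\lambda_i(u)\mathsf{p}_i$ by comparing the action of both sides on the direct-sum decomposition $V^{\otimes n}=\bigoplus_{k=0}^{n-1}\Vc_{n,k}$ from Corollary \ref{cor:TVk}. Fix $k\in\{0,1,\ldots,n-1\}$ and $v\in\Vc_{n,k}$. By Corollary \ref{cor:TVk}, $\check T_n(u)v=\lambda_k(u)v$. By Lemma \ref{lem:kerpi}, $\mathsf{p}_iv=\delta_{ik}v$, so
\begin{equation}
 \Big(\sum_{i=0}^{n-1}\lambda_i(u)\mathsf{p}_i\Big)v=\lambda_k(u)v.
\end{equation}
Since this holds for every $v$ in an eigenbasis and $V^{\otimes n}$ is the direct sum of the $\Vc_{n,k}$, the two operators agree.

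For the second equality, I would simply substitute the definition $\mathsf{p}_i=\hat m_n^{(i)}(h_n)=m_n^{(i)}(h_n)/m_n^{(i)}(\mu_i)$ from (\ref{pi}) and invoke the explicit evaluation
\begin{equation}
 m_n^{(i)}(\mu_i)=(-1)^i(2n-2)!!\binom{n-1}{i}^{-1}
\end{equation}
recorded in (\ref{mimuj}). Taking reciprocals and absorbing the sign (using $(-1)^{-i}=(-1)^i$) yields the stated coefficient $\binom{n-1}{i}(-1)^i\lambda_i(u)/(2n-2)!!$ in front of $m_n^{(i)}(h_n)$.

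There is really no serious obstacle: all of the heavy lifting (the diagonalisability of $\check T_n(u)$ and $h_n$, the identification of their joint eigenspaces $\Vc_{n,k}$, the Lagrange-interpolation-style polynomials $\hat m_n^{(i)}$, the idempotent identities, and the numerical evaluation $m_n^{(i)}(\mu_i)$) has already been done. The one thing to be slightly careful about is the sign bookkeeping in passing from $1/m_n^{(i)}(\mu_i)$ to the displayed $(-1)^i$; this is an elementary check that $(-1)^{-i}=(-1)^i$ for integer $i$, so no further delicacy is required.
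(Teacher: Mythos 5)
Your proposal is correct and follows essentially the same route as the paper, whose proof simply cites Proposition \ref{prop:Tsl}, Corollary \ref{cor:TVk}, Lemma \ref{lem:kerpi} and (\ref{mimuj}) — exactly the ingredients you assemble, with the eigenspace-by-eigenspace comparison and the sign bookkeeping via $m_n^{(i)}(\mu_i)$ spelled out. No gaps.
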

\begin{proof}
The result follows from Proposition \ref{prop:Tsl}, Corollary \ref{cor:TVk}, Lemma \ref{lem:kerpi}
and (\ref{mimuj}).
\end{proof}
Alternative expressions for $\check{T}_n(u)$ can be obtained, for example by evaluating the inverse 
$\mathsf{V}_n^{-1}$ explicitly and using
\be
  m_n(x) 
 =x^n-\sum_{i,j=0}^{n-1}[\mathsf{V}_n^{-1}]_{i-1,j-1}\mu_j^nx^i.
\ee
We thus conjecture that $\check{T}_n(u)$ admits the following expression
in terms of the {\em double factorial binomial coefficient} \cite{GQ12},
\be
 \left(\!\!\!\binom{n_1}{n_2}\!\!\!\right): =\frac{n_1!!}{n_2!!\,(n_1-n_2)!!}.
\ee
\begin{conj}\label{conj:Tmu}
For every $n\in \Nbb$ and all $u\in\Fbb$, we have
\be
 \check{T}_n(u)=2\sum_{k=0}^{n-1}(1+u^2)^{n-k}(2u)^k
  \sum_{j=0}^{\lfloor\frac{k}{2}\rfloor}\left(\!\!\!\binom{n-k-2+2j}{2j}\!\!\!\right)\frac{(-1)^jm_{k-2j}(h_n)}{(k-2j)!}.
\label{Tdouble}
\ee
\end{conj}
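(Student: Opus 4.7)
The plan is to reduce Conjecture \ref{conj:Tmu} to a scalar identity via the joint eigenspace structure of $\check{T}_n(u)$ and $h_n$, and then verify it by a generating-function argument. By Corollary \ref{cor:TVk} and Proposition \ref{prop:hnk}, the subspaces $\Vc_{n,k}$ for $k=0,\ldots,n-1$ are common eigenspaces of $\check{T}_n(u)$ and $h_n$, with eigenvalues $\lambda_k(u)=2(1+u^2)(1-u)^{2k}(1+u)^{2(n-1-k)}$ and $\mu_k=n-1-2k$ respectively. Since the asserted right-hand side is a polynomial in $h_n$, both sides act as scalars on each $\Vc_{n,k}$, so the conjecture reduces to verifying, for every $k$, the scalar identity obtained by setting $\mu=\mu_k$ on the right and equating with $\lambda_k(u)$ on the left.

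Introducing $A=1+u^2$ and $B=2u$, so that $(1\mp u)^2=A\mp B$, we have $\lambda_k(u)=2A(A-B)^k(A+B)^{n-1-k}$, whose coefficient of $A^{n-l}B^l$ is $2[t^l](1-t)^k(1+t)^{n-1-k}$. For the right-hand side, swapping the order of summation via $m=l-2j$ makes the double-factorial binomial $\left(\!\!\!\binom{n-l-2+2j}{2j}\!\!\!\right)=\left(\!\!\!\binom{n-m-2}{2j}\!\!\!\right)$ independent of $l$; applying the elementary identity $\sum_{j\ge 0}\left(\!\!\!\binom{p}{2j}\!\!\!\right)x^j=(1+x)^{p/2}$ (valid for both parities of $p$ via $\left(\!\!\!\binom{2m+\varepsilon}{2j}\!\!\!\right)=\binom{m+\varepsilon/2}{j}$, $\varepsilon\in\{0,1\}$) collapses the inner sum, and the coefficient-generating function reads
\begin{equation*}
 G(t) \;:=\; \sum_{l\ge 0}t^l\cdot\big(\text{RHS coefficient at $h_n=\mu_k$}\big) \;=\; \sum_{m\ge 0}\frac{m_m(\mu_k)\,t^m}{m!}\,(1-t^2)^{(n-m-2)/2}.
\end{equation*}
The conjecture is therefore equivalent to the single generating-function identity $G(t)=(1-t)^k(1+t)^{n-1-k}$.

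To prove this, substitute $z=t/\sqrt{1-t^2}$, so that $\sqrt{1+z^2}=1/\sqrt{1-t^2}$ and $\sqrt{1+z^2}+z=(1+t)/\sqrt{1-t^2}=e^{\phi(z)}$ with $\phi=\mathrm{asinh}$. Pulling $(1-t^2)^{(n-1)/2}$ out of $G(t)$ and using that $(1-t)^k(1+t)^{n-1-k}=(1+t)^{\mu_k}(1-t^2)^{(n-1-\mu_k)/2}=e^{\mu_k\phi}(1-t^2)^{(n-1)/2}$, the equivalence collapses to the clean hyperbolic generating-function identity
\begin{equation*}
 \sum_{m\ge 0}\frac{m_m(\mu)\,z^m}{m!} \;=\; \frac{e^{\mu\phi(z)}}{\sqrt{1+z^2}},
\end{equation*}
valid for $\mu$ an indeterminate. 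The most efficient way to establish this is to check that both sides satisfy the second-order ODE $(1+z^2)f''+3zf'+(1-\mu^2)f=0$ with $f(0)=1$ and $f'(0)=\mu$; for the series this is the three-term recurrence $m_{m+2}(\mu)=(\mu^2-(m+1)^2)m_m(\mu)$, an immediate consequence of the product formula $m_m(x)=(x-m+1)(x-m+3)\cdots(x+m-1)$, while for the closed form it follows from a short calculation using $\phi'(z)=1/\sqrt{1+z^2}$.

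The main obstacle is neither of the individual steps but the recognition of the intermediate hyperbolic identity as the combinatorial content of the conjecture; once it is in hand, the remainder is a routine substitution and an ODE check. A technical point worth guarding against is the parity of $n-m$ appearing in the fractional exponent $(n-m-2)/2$ of $1-t^2$: all manipulations must be performed at the level of formal power series, where the generalized binomial expansions are consistently defined, and one should verify that the identification of $G(t)$ with $(1-t)^k(1+t)^{n-1-k}$ forces only nonnegative integer powers of $t$ to survive.
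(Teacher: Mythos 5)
First, a point of order: the paper does not prove this statement. It is stated as a conjecture and supported only by direct computation for $n=1,\ldots,180$, so there is no proof of the paper's to compare against; if your argument holds up, it upgrades the conjecture to a theorem. Having checked it, I believe it does hold up, modulo one piece of bookkeeping. The reduction to scalar identities on the joint eigenspaces $\Vc_{n,k}$ is correct (both sides are simultaneously diagonalised with the $\Vc_{n,k}$ as joint eigenspaces, by Corollary \ref{cor:TVk} and Proposition \ref{prop:hnk}); the reindexing $m=k-2j$ does make the double-factorial binomial equal to $\binom{(n-m-2)/2}{j}$ for both parities of $n-m$ (and the only negative argument occurring in the conjecture is $\left(\!\binom{-1}{0}\!\right)=1$); and the hyperbolic identity $\sum_{m\ge0}m_m(\mu)z^m/m!=(\sqrt{1+z^2}+z)^{\mu}/\sqrt{1+z^2}$ is correct — I verified the ODE computation and the recurrence $m_{m+2}(\mu)=\big(\mu^2-(m+1)^2\big)m_m(\mu)$, and the substitution $z=t/\sqrt{1-t^2}$ does produce $(1-t)^k(1+t)^{n-1-k}$.

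The point you flag at the end is real, but you have slightly misidentified where the danger lies: fractional powers of $t$ never arise, since $(1-t^2)^{p/2}$ is a formal power series in $t^2$ for any integer $p$. The actual discrepancy is that the conjecture's double sum is finite — it restricts to $m=k-2j\le n-1$ and truncates each binomial series at total degree $n-1$ — whereas your $G(t)$ sums over all $m\ge0$ with full series, and the discarded terms are \emph{not} individually zero (for $m\ge n$ with $m\not\equiv n\ (\mathrm{mod}\ 2)$ one has $m_m(\mu_k)\neq0$, e.g.\ $m_3(1)=-3$ for $n=2$, $k=0$). The repair is simple and should be stated explicitly: every term discarded by these truncations contributes only to degrees $\ge n$ in $t$, so the conjecture's coefficient of $t^l$ agrees with that of the full series for all $l\le n-1$; since the full series equals the degree-$(n-1)$ polynomial $(1-t)^k(1+t)^{n-1-k}$, whose coefficients are exactly the coefficients of $(1+u^2)^{n-l}(2u)^l$ in $\lambda_k(u)=2(1+u^2)\big((1+u^2)-2u\big)^k\big((1+u^2)+2u\big)^{n-1-k}$, the scalar identity follows. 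With that paragraph added, your argument is a complete proof of a statement the authors only verified numerically.
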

\noindent
We have verified Conjecture \ref{conj:Tmu} for $n=1,\ldots,180$.

\end{document}